\documentclass[11pt,a4paper]{article}

\usepackage{fancyhdr}
\usepackage{amsmath}
\usepackage{bm}
\usepackage{amsfonts}
\usepackage{amsthm}
\usepackage{amssymb}
\usepackage{amsbsy}
\usepackage{graphicx}
\usepackage{mathrsfs}
\usepackage[hmargin = 1in,vmargin=.75in]{geometry}
\usepackage{color}
\usepackage{enumerate}
\usepackage{algpseudocode}
\usepackage{algorithm}
\usepackage{pifont}
\usepackage{prettyref}

\evensidemargin0.5cm
\font\msbm=msbm10

\numberwithin{equation}{section}

\theoremstyle{plain}
\newtheorem{theorem}{Theorem}[section]
\newtheorem{lemma}[theorem]{Lemma}
\newtheorem{corollary}[theorem]{Corollary}

\newtheorem{condition}[theorem]{Condition}

\newtheorem{remark}[theorem]{Remark}
\def\mathbb#1{\hbox{\msbm{#1}}}

\newcommand{\ba}{\boldsymbol{a}}
\newcommand{\bb}{\boldsymbol{b}}
\newcommand{\bc}{\boldsymbol{c}}

\newcommand{\be}{\boldsymbol{e}}
\newcommand{\bh}{\boldsymbol{h}}
\newcommand{\bff}{\boldsymbol{f}}
\newcommand{\bg}{\boldsymbol{g}}

\newcommand{\bn}{\boldsymbol{n}}

\newcommand{\bq}{\boldsymbol{q}}

\newcommand{\bu}{\boldsymbol{u}}
\newcommand{\bv}{\boldsymbol{v}}
\newcommand{\bw}{\boldsymbol{w}}
\newcommand{\bx}{\boldsymbol{x}}
\newcommand{\by}{\boldsymbol{y}}
\newcommand{\bz}{\boldsymbol{z}}

\newcommand{\BA}{\boldsymbol{A}}
\newcommand{\BB}{\boldsymbol{B}}
\newcommand{\BC}{\boldsymbol{C}}
\newcommand{\BD}{\boldsymbol{D}}

\newcommand{\BF}{\boldsymbol{F}}

\newcommand{\BH}{\boldsymbol{H}}

\newcommand{\BU}{\boldsymbol{U}}
\newcommand{\BV}{\boldsymbol{V}}

\newcommand{\BX}{\boldsymbol{X}}

\newcommand{\BZ}{\boldsymbol{Z}}

\newcommand{\tF}{\widetilde{F}}

\newcommand{\hbh}{\hat{\boldsymbol{h}}}

\newcommand{\hbx}{\hat{\boldsymbol{x}}}

\newcommand{\Keps}{\mathcal{N}_{\epsilon}}

\newcommand{\Kmu}{\mathcal{N}_{\mu}}
\newcommand{\Kd}{\mathcal{N}_d}
\newcommand{\KF}{\mathcal{N}_{\widetilde{F}}}
\newcommand{\Kint}{\mathcal{N}_d \cap \mathcal{N}_{\mu} \cap \mathcal{N}_{\epsilon}}

%

\newcommand{\bzero}{\boldsymbol{0}}

\newcommand{\A}{\mathcal{A}}

\newcommand{\PP}{\mathcal{P}}

\newcommand{\pa}{\partial}

\newcommand{\CC}{\mathbb{C}}

\newcommand{\CZ}{\mathcal{Z}}
\newcommand{\CN}{\mathcal{C}\mathcal{N}}

\newcommand{\CH}{\mathcal{H}}

\newcommand{\Dh}{\Delta\boldsymbol{h}}
\newcommand{\Dx}{\Delta\boldsymbol{x}}

\newcommand{\MS}{\mathcal{S}}

\newcommand{\MN}{\mathcal{N}}

\newcommand{\I}{\boldsymbol{I}}
\newcommand{\RR}{\mathbb{R}}
\newcommand{\ZZ}{\mathbb{Z}}
\newcommand{\lag}{\langle}
\newcommand{\rag}{\rangle}

\newcommand{\lp}{\left(} 
\newcommand{\rp}{\right)} 

\newcommand{\Tr}{\text{Tr}}
\newcommand{\eps}{\varepsilon}
\newcommand{\TB}{T^{\bot}}

\newcommand*\diff{\mathop{}\!\mathrm{d}}
\DeclareMathOperator{\blkdiag}{blkdiag}

\DeclareMathOperator{\Real}{Re}

\DeclareMathOperator{\E}{\mathbb{E}}

\DeclareMathOperator{\diag}{diag}



\newcommand{\vct}[1]{\bm{#1}}
\newcommand{\mtx}[1]{\bm{#1}}
\definecolor{xl}{RGB}{200,50,120}

\begin{document}
\title{\bf Regularized Gradient Descent: A Nonconvex Recipe \\ for Fast Joint Blind Deconvolution and Demixing \thanks{The authors acknowledge support from the NSF via grants DTRA-DMS 1322393 and DMS 1620455.}}


\author{Shuyang Ling\thanks{Courant Institute of Mathematical Sciences, New York University (Email: sling@cims.nyu.edu).} \quad Thomas Strohmer\thanks{Department of Mathematics, University of California at Davis (Email: strohmer@math.ucdavis.edu).} }


\maketitle

\begin{abstract}

We study the question of extracting a sequence of  functions $\{\boldsymbol{f}_i, \boldsymbol{g}_i\}_{i=1}^s$ from observing only the sum of their
convolutions, i.e.,  from $\boldsymbol{y} = \sum_{i=1}^s \boldsymbol{f}_i\ast \boldsymbol{g}_i$.  
While convex optimization techniques are able to solve this joint blind deconvolution-demixing problem provably and robustly under certain
conditions, for medium-size or large-size problems we need computationally faster methods without sacrificing the benefits of mathematical rigor that come with convex methods.
In this paper we present a non-convex algorithm which guarantees exact recovery under conditions that are competitive with convex optimization methods,
with the additional advantage of being computationally much more efficient.
Our two-step algorithm converges to the global minimum linearly and is also
robust in the presence of additive noise. While the derived performance bounds are suboptimal in terms of the information-theoretic limit, numerical simulations show remarkable performance even if the number of measurements is close to the number of degrees of freedom.  We discuss an application of the proposed framework in wireless communications in connection with the Internet-of-Things.
\end{abstract}

\section{Introduction}
\label{s:intro}
The goal of blind deconvolution is the task of estimating  two unknown functions from their convolution. While it is a highly ill-posed bilinear inverse problem, blind deconvolution is also an extremely important problem in signal processing~\cite{RR12}, communications engineering~\cite{WP98}, imaging processing~\cite{CE07}, audio  processing~\cite{LXQZ09}, etc.
In this paper, we deal with an even more difficult and more general variation of the blind deconvolution problem, in which we have to extract multiple convolved signals mixed together in one observation signal. This joint blind deconvolution-demixing problem arises in a range of applications such as
acoustics~\cite{LXQZ09}, dictionary learning~\cite{bristow2013fast}, and wireless communications~\cite{WP98}.

We briefly discuss one such application in more detail. Blind deconvolution/demixing problems are expected to play a vital role in the future Internet-of-Things.
The Internet-of-Things will connect billions of wireless devices, which is far more than the current wireless systems can technically and economically accommodate. One of the many challenges in the design of the Internet-of-Things will be its ability to manage the massive number of sporadic traffic
generating devices which are most of the time inactive, but regularly access the network for minor updates with no human interaction~\cite{WBSJ14}.
This means among others that the overhead caused by the exchange of certain types of information between transmitter and receiver, such as channel
estimation, assignment of data slots, etc, has to be avoided as much as possible~\cite{5Gbook,shafi20175g}.

Focusing on the underlying mathematical challenges, we consider a {\em multi-user communication} scenario where many different users/devices communicate with a common base station,  as illustrated in Figure~\ref{fig:demix}.
Suppose we have $s$ users and each of them sends a signal $\bg_i$ through an unknown channel (which differs from user to user)  to a common base station,.  We assume that the $i$-th channel, represented by its impulse response $\bff_i$, does not change during the transmission of
the signal $\bg_i$. Therefore $\bff_i$ acts as convolution operator, i.e., the signal transmitted by the $i$-th user arriving at the base station becomes $\bff_i \ast \bg_i$, 
where ``$\ast$" denotes  convolution. 
\begin{figure}[h!]
\centering
\includegraphics[width=140mm]{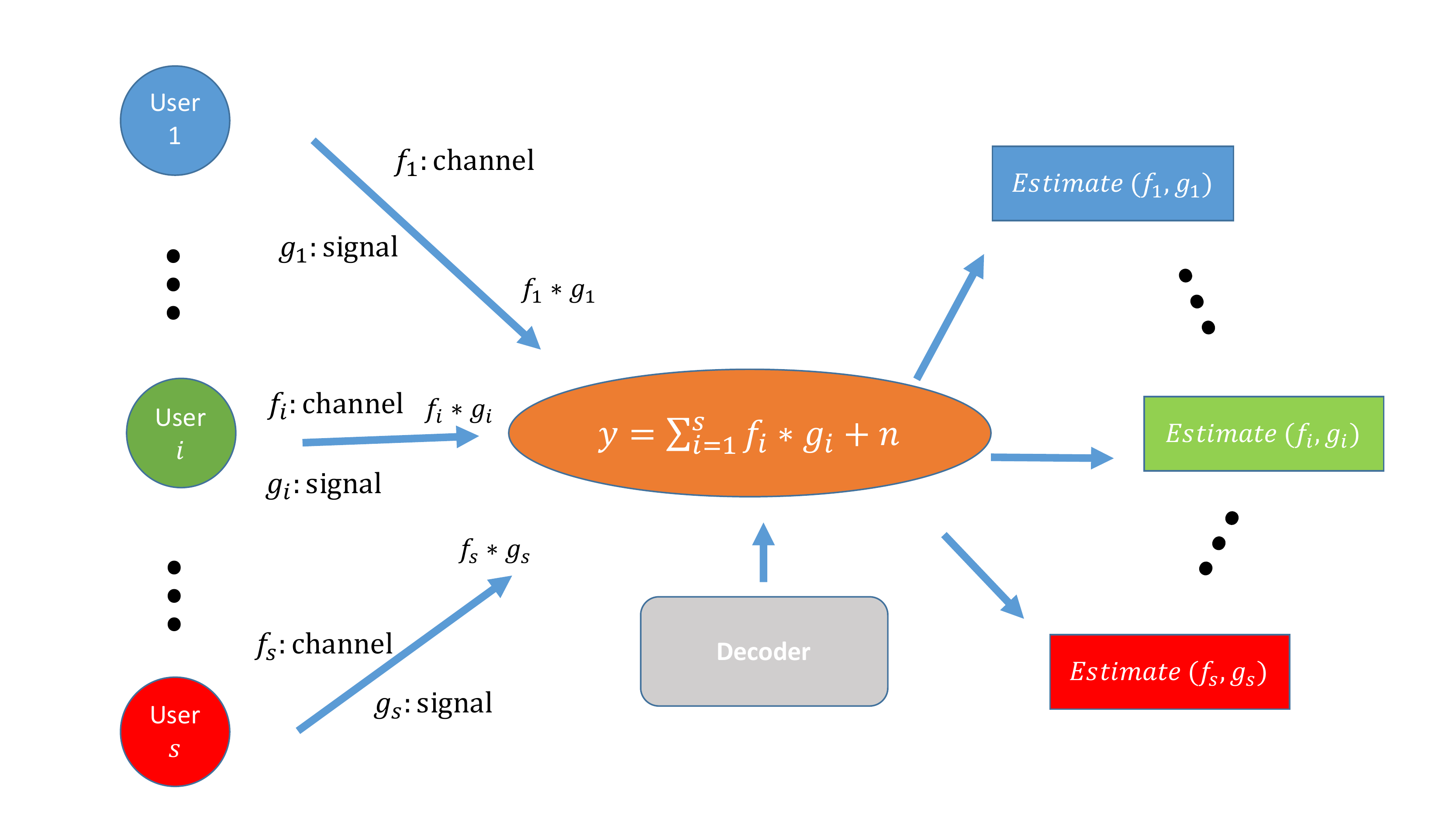}
\caption{Single-antenna multi-user communication scenario without explicit channel estimation: Each of the  $s$ users sends a signal $\bg_i$ through an unknown channel $\bff_i$  to a common base station. The base station measures  the superposition of all those signals, namely, 
$\by = \sum_{i=1}^s \bff_i\ast \bg_i $ (plus noise).  The goal is to extract all pairs of $\{(\bff_i,\bg_i)\}_{i=1}^s$ simultaneously from $\by$.}
\label{fig:demix}
\end{figure}
The antenna at the base station, instead of receiving each individual component $\bff_i\ast \bg_i$, is only 
able to record the superposition of all those signals, namely, 
\begin{equation}\label{eq:model0}
\by = \sum_{i=1}^s \bff_i\ast \bg_i +\bn,
\end{equation} where  $\bn$ represents noise. We aim to develop a fast algorithm to simultaneously extract all pairs $\{(\bff_i,\bg_i)\}_{i=1}^s$ from $\by$ (i.e., estimating the channel/impulse responses $\bff_i$ and the signals $\bg_i$ jointly) in a numerically efficient and robust way, while keeping the number of required measurements as small as possible.

\subsection{State of the art and contributions of this paper}
A thorough theoretical analysis  concerning the solvability of demixing problems via convex optimization can be found in~\cite{mccoy2013demixing}. 
There, the authors derive explicit sharp bounds and phase transitions regarding the number of measurements required to successfully demix structured signals (such as sparse signals or low-rank matrices) from a single measurement vector. In principle we could recast the blind deconvolution/demixing problem as the demixing of a sum of rank-one matrices, see~\eqref{eq:measure2}. As such, it seems to fit into the framework analyzed by McCoy and Tropp. However, the setup in~\cite{mccoy2013demixing} differs from ours in a crucial manner. McCoy and Tropp consider as measurement matrices (see the matrices $\A_i$ in~\eqref{eq:measure2}) full-rank random matrices, while in our setting the measurement matrices
are rank-one.  This difference fundamentally changes the theoretical analysis. 
The findings in~\cite{mccoy2013demixing} are therefore not applicable to the problem of joint blind deconvolution/demixing. The compressive principal component analysis in~\cite{WGMM13} is also a form of demixing problem, but its setting is only vaguely related to ours. There is a large amount of literature on demixing problems, but the vast majority does not have a ``blind deconvolution component'', therefore this body of work is only marginally related to the topic of our paper.

Blind deconvolution/demixing problems also appear in convolutional dictionary learning, see e.g.~\cite{bristow2013fast}. 
There, the aim is to factorize an ensemble of input vectors into a linear combination of overcomplete basis elements which are modeled as shift-invariant---the latter property is why the factorization turns into a convolution.  The setup is similar to~\eqref{eq:model0}, but with an additional penalty term to enforce sparsity of the convolving filters. The existing literature on convolutional dictionary learning is mainly focused on empirical results, therefore there is little overlap with our work. But it is an interesting challenge for future research to see whether the approach in this paper can be modified to provide a fast and theoretically sound solver for the sparse convolutional coding problem.

There are numerous papers concerned with blind deconvolution/demixing problems in the area of wireless communications~\cite{sudhakar2010double,Ver98,li2001direct}. But the majority of these papers assumes the availability of multiple measurement vectors, which makes the problem significantly easier. Those methods however cannot be applied to the case of a single measurement vector, which is the focus of this paper. Thus there is essentially no overlap of those papers with our work.
 
Our previous paper~\cite{LS17b} solves~\eqref{eq:model0} under subspace conditions, i.e., assuming that both $\bff_i$ and $\bg_i$ belong to  known linear subspaces. This contributes to generalizing the pioneering work by Ahmed, Recht, and Romberg~\cite{RR12} from the ``single-user" scenario to the ``multi-user" scenario. Both~\cite{RR12} and \cite{LS17b} employ a two-step convex approach: first  ``lifting"~\cite{CSV11}  is used and then the 
lifted version of the original bilinear inverse problems is relaxed into a  semi-definite program. An improvement of the theoretical bounds in~\cite{LS17b} was announced in~\cite{SJK16}.

While the convex approach is certainly effective and elegant, it can hardly handle large-scale problems. This motivates us to apply a nonconvex optimization approach~\cite{CLS14, LLSW16} to this blind-deconvolution-blind-demixing problem. The mathematical challenge, when using non-convex methods, is to derive a rigorous convergence framework with conditions that are competitive with those in a convex framework.

In the last few years several excellent articles have appeared on provably convergent nonconvex optimization applied to various problems in signal processing and machine learning, e.g., matrix completion~\cite{KMO09, KMO09b,SL16}, phase retrieval~\cite{CLS14,CC15,SQW16,CLM16}, blind deconvolution~\cite{LLB16,CJ16b,LLSW16}, dictionary learning~\cite{SQW16it}, super-resolution~\cite{EftW15} and low-rank matrix recovery~\cite{TBSR15,WCCL16}. In this paper we derive the first nonconvex optimization algorithm to solve~\eqref{eq:model0} fast and with rigorous theoretical guarantees concerning exact recovery, convergence rates, as well as robustness for noisy data. Our work can be viewed as a generalization of blind deconvolution~\cite{LLSW16} $(s=1)$ to the multi-user scenario $(s > 1)$. 

The idea behind our approach is strongly motivated by the nonconvex
optimization algorithm for phase retrieval proposed in~\cite{CLS14}. In this foundational paper, the authors use a two-step approach: (i) Construct a
good initial guess with a numerically efficient algorithm; (ii) Starting with this initial guess, prove that simple gradient descent will converge
to the true solution. Our paper follows a similar two-step scheme. However, the techniques used here are quite different from~\cite{CLS14}. 
Like the matrix completion problem~\cite{CR08}, the performance of the algorithm relies heavily and inherently on how much the ground truth signals are aligned with the design matrix. 
Due to this so-called ``incoherence" issue, we need to impose extra constraints, which results in a different construction of the so-called~\emph{basin of attraction}. Therefore, influenced by~\cite{KMO09, SL16,LLSW16}, we add penalty terms to control the incoherence and this leads to the regularized gradient descent method, which forms the core of our proposed algorithm.

To the best of our knowledge, our algorithm is the first  algorithm for the blind deconvolution/blind demixing
problem that is numerically efficient, robust against noise, and comes with rigorous recovery guarantees.

\subsection{Notation}
For a matrix $\BZ$, $\|\BZ\|$
denotes its operator norm and $\|\BZ\|_F$ is its the Frobenius norm. For a vector $\bz$, $\|\bz\|$ is its Euclidean norm and $\|\bz\|_{\infty}$ is the $\ell_{\infty}$-norm. For both
matrices and vectors, $\BZ^*$ and
$\bz^*$ denote their complex conjugate transpose. $\bar{\bz}$ is the complex conjugate of $\bz$.  We equip the matrix space $\CC^{K\times N}$ with the inner
product defined by $\lag \BU, \BV\rag : =\Tr(\BU^*\BV).$ For a given vector $\bz$, $\diag(\bz)$ represents the diagonal
matrix whose diagonal entries are  $\bz$. For any $z\in\RR$, let $z_+ = \frac{z + |z|}{2}.$

\section{Preliminaries}\label{s:model}
Obviously, without any further assumption, it is impossible to solve~\eqref{eq:model0}. Therefore, we impose the following subspace assumptions throughout our discussion~\cite{RR12,LS17b}.
\begin{itemize}
\item {\bf Channel subspace assumption:} Each finite impulse response $\bff_i\in\CC^L$ is assumed to have {\em maximum delay spread} $K$, i.e., 
\begin{equation*}
\bff_i = \begin{bmatrix}
\bh_i \\
\bzero 
\end{bmatrix}.
\end{equation*}
Here $\bh_i\in\CC^K$ is the nonzero part of $\bff_i$ and $\bff_i(n) = 0 \text{ for } n > K.$

\item {\bf Signal subspace assumption:} Let $\bg_i : = \BC_i\bar{\bx}_i$ be the outcome of the signal $\bar{\bx}_i\in\CC^N$ encoded by a matrix $\BC_i\in\CC^{L\times N}$ with $L > N$, where the encoding matrix $\BC_i$ is known and assumed to have full rank\footnote{Here we use the conjugate $\bar{\bx}_i$ instead of $\bx_i$ because it will simplify our notation in later derivations.}.
\end{itemize}

\begin{remark}
Both subspace assumptions are common in various applications. For  instance in wireless communications, the channel impulse response can always be modeled to have finite support (or maximum delay spread, as it is called in engineering jargon) due to the physical properties of wave propagation~\cite{goldsmith2005wireless}; and the signal subspace assumption is a standard feature found in many current communication systems~\cite{goldsmith2005wireless}, including CDMA where $\BC_i$ is known as spreading matrix and OFDM 
where $\BC_i$ is known as precoding matrix. 
\end{remark}

The specific choice of the encoding matrices $\BC_i$ depends on a variety of conditions. 
In this paper, we derive our theory by assuming that $\BC_i$  is a complex Gaussian random matrix, i.e., each entry in $\BC_i$ is i.i.d.\ $\mathcal{C}\mathcal{N}(0,1)$. This assumption, while sometimes imposed in the wireless communications literature, is somewhat unrealistic in practice, due to the lack of a fast algorithm to apply $\BC_i$ and due to storage requirements. In practice one would rather choose $\BC_i$ to be something like the product of a Hadamard matrix
and a  diagonal matrix with random binary entries. We hope to address such more structured encoding matrices in our future research. Our numerical simulations (see Section~\ref{s:numerics}) show no difference in the performance of our algorithm for either choice.

Under the two assumptions above, the model actually has a simpler form in the~\emph{frequency} domain. We assume throughout the paper that the convolution of finite sequences is circular convolution\footnote{This circular convolution assumption can often be reinforced directly (for example in wireless communications the use of a cyclic prefix in OFDM renders the convolution circular) or indirectly (e.g.\ via zero-padding). In the first case replacing regular convolution by circular convolution does not introduce any errors at all. In the latter case one introduces an additional approximation error in the inversion which is negligible, since it decays exponentially for impulse responses of finite length~\cite{Str00}.}.
By applying the Discrete Fourier Transform DFT) to~\eqref{eq:model0} along with the two assumptions, we have
\begin{equation*}
\frac{1}{\sqrt{L}}\BF \by = \sum_{i=1}^s\diag(\BF \bh_i)(\BF\BC_i \bar{\bx}_i) + \frac{1}{\sqrt{L}}\BF\bn
\end{equation*}
where $\BF$ is the $L\times L$ normalized unitary  DFT matrix with $\BF^*\BF = \BF\BF^* = \I_L$. The noise is assumed to be additive white complex Gaussian noise with $\bn\sim \mathcal{C}\mathcal{N}(\bzero, \sigma^2d_0^2\I_L)$ where $d_0 = \sqrt{\sum_{i=1}^s \|\bh_{i0}\|^2 \|\bx_{i0}\|^2}$, and $\{(\bh_{i0}, \bx_{i0})\}_{i=1}^s$ is the ground truth. 
We define $d_{i0} = \|\bh_{i0}\bx_{i0}^*\|_F$ and assume without loss of generality that $\|\bh_{i0}\|$ and $\|\bx_{i0}\|$ are of the same norm, i.e., $\|\bh_{i0}\| = \|\bx_{i0}\| = \sqrt{d_{i0}}$, which is due to the scaling ambiguity\footnote{Namely, if the pair $(\bh_i, \bx_i)$ is a solution,
then so is $(\alpha \bh_i, \alpha^{-1} \bx_i)$ for any $\alpha \neq 0$.  }.  
In that way, $\frac{1}{\sigma^{2}}$ actually is a measure of SNR (signal to noise ratio).

Let $\bh_i\in\CC^K$ be the first $K$ nonzero entries of $\bff_i$ and $\BB\in\CC^{L\times K}$ be a low-frequency DFT matrix (the first $K$ columns of an $L\times L$ unitary DFT matrix). Then a simple relation holds,
\begin{equation*}
\BF\bff_i = {\BB}\bh_i, \quad \BB^*\BB = \I_K.
\end{equation*}
We also denote $\BA_i := \overline{\BF\BC_i}$ and $\be := \frac{1}{\sqrt{L}}\BF\bn$. Due to the Gaussianity, $\BA_i$ also possesses complex Gaussian distribution and so does $\be.$
From now on, instead of focusing on the original model, we consider (with a slight abuse of notation) the following equivalent formulation throughout our discussion:
\begin{equation}\label{eq:measure1}
\by = \sum_{i=1}^s \diag(\BB\bh_{i})\overline{\BA_i\bx_{i}} + \be, 
\end{equation}
where $\be \sim \mathcal{C}\mathcal{N}(\bzero, \frac{\sigma^2 d_0^2}{L}\I_L)$. Our goal here is to estimate all $\{\bh_i, \bx_i\}_{i=1}^s$ from $\by,\BB$ and $\{\BA_i\}_{i=1}^s$. Obviously, this is a bilinear inverse problem, i.e., if all $\{\bh_i\}_{i=1}^s$ are given, it is a linear inverse problem (the ordinary demixing problem) to recover all $\{\bx_i\}_{i=1}^s$, and vice versa. We note  that there is a scaling ambiguity in all blind deconvolution problems  that cannot be resolved by any reconstruction method without further information. Therefore, when we talk about exact recovery in the following, then this is understood modulo such a trivial scaling ambiguity.

\vskip0.25cm
Before proceeding to our proposed algorithm we introduce some notation to facilitate a more convenient presentation of our approach. 
Let $\bb_l$ be the $l$-th column of $\BB^*$ and $\ba_{il}$ be the $l$-th column of $\BA_i^*$.  Based on our assumptions the following properties hold:
\begin{equation*}
\sum_{l=1}^L\bb_l\bb_l^* = \I_K, \quad \|\bb_l\|^2 = \frac{K}{L}, \quad \ba_{il}\sim \mathcal{C}\mathcal{N}(\bzero, \I_N).
\end{equation*}
Moreover, inspired by the well-known~\emph{lifting} idea~\cite{CSV11,RR12,CESV11,LS15},  we define the useful matrix-valued linear operator 
$\A_i : \CC^{K\times N} \to \CC^L$ and its adjoint $\A_i^*:\CC^L\rightarrow \CC^{K\times N}$  by 
\begin{equation}\label{def:Ai}
\A_i(\BZ) := \{\bb_l^*\BZ\ba_{il}\}_{l=1}^L, \quad \A^*_i(\bz) := \sum_{l=1}^L z_l \bb_l\ba_{il}^* = \BB^*\diag(\bz)\BA_i
\end{equation}
for each $1\leq i\leq s$ under canonical inner product over $\CC^{K\times N}.$ Therefore,~\eqref{eq:measure1} can be written in the following equivalent form
\begin{equation}\label{eq:measure2} 
\by = \sum_{i=1}^s \A_i(\bh_i\bx_i^*) + \be.
\end{equation}
Hence, we can think of $\by$ as the observation vector obtained from taking {\em linear} measurements with respect to a set of rank-1 matrices $\{\bh_i\bx_i^*\}_{i=1}^s.$ In fact, with a bit  of linear algebra (and ignoring the noise term for the moment), the $l$-th entry of $\by$ in~\eqref{eq:measure2} equals the inner product of two block-diagonal matrices:
\begin{equation}\label{eq:measure3}
y_l = \left\lag \underbrace{
\begin{bmatrix}
\bh_{1,0}\bx_{1,0}^*  & \bzero & \cdots & \bzero \\
\bzero  & \bh_{2,0}\bx_{2,0}^*   & \cdots & \bzero \\
\vdots & \vdots &  \ddots & \vdots \\
\bzero & \bzero &  \cdots & \bh_{s0}\bx_{s0}^*    \\
\end{bmatrix}}_{\text{defined as }\BX_0 },
\begin{bmatrix}
\bb_{l}\ba_{1l}^*  & \bzero & \cdots & \bzero \\
\bzero  & \bb_{l}\ba_{2l}^*   & \cdots & \bzero \\
\vdots & \vdots &  \ddots & \vdots \\
\bzero & \bzero &  \cdots & \bb_{l}\ba_{sl}^*    \\
\end{bmatrix} 
\right\rag + e_l,
\end{equation}
where $y_l = \sum_{i=1}^s \bb_l^*\bh_{i0}\bx_{i0}^*\ba_{il} + e_l, 1\leq l\leq L$ and $\BX_0$ is defined as the ground truth matrix.
In other words, we aim to recover such a block-diagonal matrix $\BX_0$
from $L$ linear measurements 
with block structure if $\be = \bzero.$ 

By stacking all $\{\bh_{i}\}_{i=1}^s$ (and $\{\bx_i\}_{i=1}^s, \{\bh_{i0}\}_{i=1}^s,\{\bx_{i0}\}_{i=1}^s$) into a long column, we let
\begin{equation}\label{def:hx}
\bh := 
\begin{bmatrix}
\bh_1 \\
\vdots\\
\bh_s
\end{bmatrix}, \quad
\bh_0 := 
\begin{bmatrix}
\bh_{1,0} \\
\vdots\\
\bh_{s0}
\end{bmatrix}\in\CC^{Ks}
,\quad
\bx := 
\begin{bmatrix}
\bx_1 \\
\vdots\\
\bx_s
\end{bmatrix},\quad
\bx_0 := 
\begin{bmatrix}
\bx_{1,0} \\
\vdots\\
\bx_{s0}
\end{bmatrix}
\in\CC^{Ns}.
\end{equation}
We define $\CH$ as a bilinear operator which maps a pair $(\bh, \bx)\in\CC^{Ks}\times \CC^{Ns}$ into a block diagonal matrix in $\CC^{Ks\times Ns}$, i.e., 
\begin{equation}\label{def:H}
\CH(\bh, \bx) := 
\begin{bmatrix}
\bh_1\bx_1^*  & \bzero & \cdots & \bzero \\
\bzero  & \bh_2\bx_2^*   & \cdots & \bzero \\
\vdots & \vdots &  \ddots & \vdots \\
\bzero & \bzero &  \cdots & \bh_s\bx_s^*    \\
\end{bmatrix}\in\CC^{Ks\times Ns}.
\end{equation}
Let $\BX := \CH(\bh, \bx)$ and $\BX_0 := \CH(\bh_0, \bx_0)$ where $\BX_0$ is the ground truth as illustrated in~\eqref{eq:measure3}. Define $\A(\BZ):\CC^{Ks\times Ns}\rightarrow \CC^L$ as
\begin{equation}\label{def:A}
\A(\BZ) := \sum_{i=1}^s \A_i(\BZ_i),
\end{equation}
where $\BZ = \text{blkdiag}(\BZ_1,\cdots,\BZ_s)$ and $\text{blkdiag}$ is the standard MATLAB function to construct block diagonal matrix. Therefore, 
$\A(\CH(\bh, \bx)) = \sum_{i=1}^s \A_i(\bh_i\bx_i^*)$ and $\by = \A(\CH(\bh_0, \bx_0)) + \be.$
The adjoint operator $\A^*$ is defined naturally as
\begin{equation}\label{def:A-adj}
\A^*(\bz) : = 
\begin{bmatrix}
\A_1^*(\bz)  & \bzero & \cdots & \bzero \\
\bzero  & \A_2^*(\bz)  & \cdots & \bzero \\
\vdots & \vdots &  \ddots & \vdots \\
\bzero & \bzero &  \cdots & \A_s^*(\bz)   \\
\end{bmatrix}\in\CC^{Ks\times Ns},
\end{equation}
which is a linear map from $\CC^L$ to $\CC^{Ks\times Ns}.$
To measure the approximation error of $\BX_0$ given by $\BX$, we define $\delta(\bh,\bx)$ as the global relative error: 
\begin{equation}\label{def:delta}
\delta(\bh,\bx) := \frac{\|\BX - \BX_0\|_F}{\|\BX_0\|_F} = \frac{\sqrt{\sum_{i=1}^s \|\bh_i\bx_i^* - \bh_{i0}\bx_{i0}^*\|_F^2}}{d_0} = \sqrt{\frac{\sum_{i=1}^s \delta_i^2 d_{i0}^2}{ \sum_{i=1}^s d_{i0}^2}},
\end{equation}
where $\delta_i : = \delta_i(\bh_i,\bx_i)$ is the relative error within each component:
\begin{equation*}
\delta_i(\bh_i,\bx_i) := \frac{\|\bh_i\bx_i^* - \bh_{i0}\bx_{i0}^*\|_F}{d_{i0}}.
\end{equation*}
Note that $\delta$ and $\delta_i$ are functions of $(\bh,\bx)$ and $(\bh_i,\bx_i)$ respectively and in most cases, we just simply use $\delta$ and $\delta_i$ if no possibility of confusion exists.

\subsection{Convex versus nonconvex approaches}
As indicated in~\eqref{eq:measure3}, joint blind deconvolution-demixing can be recast as the task to recover a rank-$s$ block-diagonal matrix from linear measurements. In general, such a low-rank matrix recovery problem is NP-hard. In order to take advantage of the low-rank property of the ground truth, it is natural to adopt convex relaxation by solving a convenient nuclear norm minimization program, i.e., 
\begin{equation}\label{eq:convex}
\min \sum_{i=1}^s \|\BZ_i\|_*, \quad s.t. \quad\sum_{i=1}^s \A_i(\BZ_i) = \by.
\end{equation}

The question of when the solution of~\eqref{eq:convex} yields exact recovery is first answered in our previous work~\cite{LS17b}. Late,~\cite{SJK16,JungKS17} have improved this result to the near-optimal bound $L\geq C_0s(K + N)$ up to some $\log$-factors where the main theoretical result  is informally summarized in the following theorem.
\begin{theorem}[\bf Theorem 1.1 in~\cite{JungKS17}]\label{thm:convex}
Suppose that $\BA_i$ are $L\times N$ i.i.d. complex Gaussian matrices and $\BB$ is an $L\times K$ partial DFT matrix with $\BB^*\BB = \I_K$. Then solving~\eqref{eq:convex} gives exact recovery if the number of measurements $L$ yields
\begin{equation*}
L \geq C_{\gamma} s(K+N)\log^3L 
\end{equation*}
with probability at least $1 - L^{-\gamma}$ where $C_{\gamma}$ is an absolute scalar only depending on $\gamma$  linearly.
\end{theorem}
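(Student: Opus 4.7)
The plan is to follow the dual-certificate / golfing-scheme strategy that has become standard for exact recovery via nuclear norm minimization, adapted to the block-diagonal low-rank structure and to the specific rank-one sensing operators $\A_i$ that arise here. The starting point is to observe that $\BX_0 = \CH(\bh_0,\bx_0)$ has rank $s$ with sign matrix $\BE = \blkdiag(\bh_{i0}\bx_{i0}^*/d_{i0})_{i=1}^s$, which is block-diagonal and lies in the tangent space $T$ at $\BX_0$ to the manifold of block-diagonal matrices whose $i$-th block has rank at most one. Standard convex analysis then reduces the theorem to verifying two conditions: (i) a local injectivity/isometry bound of the form $\|\mathcal P_T \A^* \A \mathcal P_T - \mathcal P_T\| \leq 1/2$, and (ii) the existence of an inexact dual certificate $\bz \in \CC^L$ such that $\|\mathcal P_T \A^*(\bz) - \BE\|_F$ is very small and $\|\mathcal P_{T^\perp} \A^*(\bz)\| < 1$.

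To establish the local isometry in step (i), I would write $\mathcal P_T \A^* \A \mathcal P_T = \sum_{l=1}^L \mathcal X_l$, where $\mathcal X_l$ is the rank-bounded self-adjoint operator on $\CC^{Ks\times Ns}$ built from the single-index contributions of $\bb_l$ and the independent Gaussian vectors $\{\ba_{il}\}_{i=1}^s$. Matrix Bernstein on $T$ would then do the work, using the uniform incoherence $\|\bb_l\|^2 = K/L$ of the partial DFT rows to control the per-summand operator norm, and a bias-variance computation to control $\|\sum_l \E \mathcal X_l^2\|$. Tracking constants, this step already demands $L \gtrsim \gamma\, s(K+N)\log L$ up to $\log$-factors to achieve failure probability $L^{-\gamma}$.

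For step (ii) I would build $\bz$ by the golfing scheme: partition $\{1,\dots,L\}$ into $P \asymp \log L$ disjoint batches $\Gamma_1,\dots,\Gamma_P$ of equal size, denote by $\A^{(p)}$ the partial operator restricted to $\Gamma_p$, and set $\bz = \sum_{p=1}^P \A^{(p)}(\BQ_{p-1})$ with $\BQ_0 = \BE$ and the recursion $\BQ_p = \BQ_{p-1} - \mathcal P_T \A^{(p)*}\A^{(p)}(\BQ_{p-1})$. The batchwise local isometry forces $\|\BQ_p\|_F \leq 2^{-p}\|\BE\|_F$, while a complementary concentration bound on $\|\mathcal P_{T^\perp} \A^{(p)*}\A^{(p)}(\BQ_{p-1})\|$ in terms of a mixed $\ell_{2,\infty}$-type incoherence norm of $\BQ_{p-1}$ controls the $T^\perp$-mass. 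Summing the resulting geometric series in $p$ yields both the accuracy of $\mathcal P_T \A^*(\bz)$ on $T$ and the strict inequality $\|\mathcal P_{T^\perp}\A^*(\bz)\| < 1$, finishing the certificate.

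The main obstacle I anticipate is the demixing coupling across users. The operator $\A^*\A$ has off-block entries $\A_i^*\A_j$ for $i\ne j$ that are bilinear forms in independent Gaussian ensembles $\BA_i,\BA_j$ tied together by the shared deterministic atoms $\{\bb_l\}$, and these are precisely what inflate the required sample complexity from $K+N$ (the single-user Ahmed–Recht–Romberg regime) to $s(K+N)$. Sharp concentration for these cross terms needs decoupling inequalities together with careful conditioning on the Fourier factors $\bb_l$, and the golfing iterates $\BQ_p$ must be tracked simultaneously across all $s$ blocks in a norm compatible with both the Bernstein step and the incoherence step. Balancing these two norms along the iteration, without inflating the $\log$-power beyond $3$, is the delicate accounting that I expect to absorb most of the proof's effort.
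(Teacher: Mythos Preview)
The paper does not prove this theorem. Theorem~\ref{thm:convex} is explicitly labeled ``Theorem 1.1 in~\cite{JungKS17}'' and is quoted only as background: it is the convex-optimization benchmark against which the paper's own nonconvex algorithm is compared. No argument for it appears anywhere in the manuscript, so there is no ``paper's own proof'' for your proposal to be compared against.

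A substantive remark is still in order. The golfing-scheme / dual-certificate route you sketch is precisely the approach of the authors' earlier convex paper~\cite{LS17b}, and there it yields $L \gtrsim s^2(K+\mu_h^2 N)\log^3 L$, not the linear-in-$s$ bound stated in Theorem~\ref{thm:convex}. The extra factor of $s$ arises exactly from the cross-user couplings $\PP_{T_i}\A_i^*\A_j\PP_{T_j}$ that you identify as the main obstacle: controlling each of the $s^2$ off-diagonal blocks by $O(1/s)$ via matrix Bernstein, as one must for the batchwise isometry step of golfing, costs a factor of $s$ in sample complexity. The improvement in~\cite{JungKS17} to linear $s$-dependence does not come from tighter golfing bookkeeping; it comes from replacing the per-batch Bernstein estimates by suprema-of-chaos-process / generic-chaining bounds that treat the entire block structure at once (the present paper alludes to this in Remark~\ref{rem:inc}). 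As written, your plan would recover the~\cite{LS17b} result rather than the stated theorem, so the claim that ``this step already demands $L \gtrsim \gamma\, s(K+N)\log L$'' is where the outline is too optimistic.
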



While the SDP relaxation is definitely effective and has theoretic performance guarantees, the computational costs for solving an SDP already become too expensive for moderate size problems, let alone for large scale problems.
Therefore, we try to look for a more efficient nonconvex approach such as gradient descent, which hopefully is also reinforced by theory.
It seems quite natural  to achieve the goal by minimizing the following~\emph{nonlinear} least squares objective function with respect to $(\bh, \bx)$
\begin{equation}\label{def:F}
 F(\bh, \bx) : = \|\A(\CH(\bh,\bx)) - \by\|^2 = \left\|\sum_{i=1}^s\A_i(\bh_i\bx_i^*) - \by\right\|^2.
\end{equation}
In particular, if $\be = \bzero,$ we write
\begin{equation}\label{def:F0}
F_0(\bh, \bx) :  = \left\|\sum_{i=1}^s\A_i(\bh_i\bx_i^* - \bh_{i0}\bx_{i0}^*)\right\|^2.
\end{equation}
As also pointed out in~\cite{LLSW16}, this is a highly nonconvex optimization problem. Many of the commonly used algorithms, such as gradient descent or alternating minimization, may not necessarily yield convergence to the global minimum, so that we cannot always hope to obtain the desired solution. Often, those simple algorithms might get  stuck in local minima.

\subsection{The basin of attraction}
Motivated by several excellent recent papers of nonconvex optimization on various signal processing and machine learning problem, we propose our two-step algorithm: (i)~Compute an initial guess carefully; (ii)~Apply gradient descent to the objective function, starting with the carefully chosen initial guess. One difficulty of understanding nonconvex optimization consists in how to construct the so-called~\emph{basin of attraction}, i.e., if the starting point is inside this basin of attraction, the iterates will always stay inside the region and converge to the global minimum. The construction of the basin of attraction varies for different problems~\cite{CLS14,CLM16,SL16}. For this problem, similar to~\cite{LLSW16}, the construction follows from the following three observations. Each of these observations suggests the definition of a certain {\em neighborhood} and the basin of attraction is then defined as the intersection of these three neighborhood sets $\Kint.$

\begin{enumerate}[1.]
\item {\bf Ambiguity of solution}: in fact, we can only recover $(\bh_i,\bx_i)$ up to a scalar since $(\alpha\bh_i,\alpha^{-1}\bx_i)$ and $(\bh_i,\bx_i)$ are both solutions for $\alpha\neq 0$.
From a numerical perspective, we want to avoid the scenario when $\|\bh_i\|\rightarrow 0$ and $\|\bx_i\|\rightarrow\infty$ while $\|\bh_i\|\|\bx_i\|$ is fixed, which potentially leads to numerical instability. To balance both the norm of $\|\bh_i\|$ and $\|\bx_i\|$ for all $1\leq i\leq s$, we define
\begin{equation*}\label{def:Kd}
\Kd := \{ \{(\bh_i, \bx_i)\}_{i=1}^s: \| \bh_i\| \leq 2\sqrt{d_{i0}},  \| \bx_i\| \leq 2\sqrt{d_{i0}}, 1\leq i\leq s \},
\end{equation*}
which is a convex set. 

\item {\bf Incoherence}: the performance depends on how large/small the incoherence $\mu^2_h$ is, where $\mu_h^2$ is defined by
\begin{equation*}
\mu^2_h : = \max_{1\leq i\leq s} \frac{L\|\BB\bh_{i0}\|^2_{\infty}}{\|\bh_{i0}\|^2}.
\end{equation*}
The idea is that:~\emph{the smaller the $\mu^2_h$ is, the better the performance is.} Let us consider an extreme case: if $\BB\bh_{i0}$ is highly sparse or spiky, we lose much information on those zero/small entries and cannot hope to get satisfactory recovered signals. In other words, we need the ground truth $\bh_{i0}$ has ``spectral flatness" and $\bh_{i0}$ is not highly localized on the Fourier domain.

A similar quantity is also introduced in the matrix completion problem~\cite{CR08,SL16}. The larger $\mu^2_h$ is, the more $\bh_{i0}$ is aligned with one particular row of $\BB.$
To control the incoherence between $\bb_{l}$ and $\bh_i$, we define the second neighborhood, 
\begin{equation}\label{def:Kmu}
\Kmu :=  \{ \{\bh_i\}_{i=1}^s : \sqrt{L} \|\BB\bh_i\|_{\infty} \leq 4\sqrt{d_{i0}}\mu,  1\leq i\leq s\}, 
\end{equation}
where $\mu$ is a parameter and $\mu \geq \mu_h$. Note that $\Kmu$ is also a convex set. 

\item {\bf Close to the ground truth}: we also want to construct an initial guess such that it is close to the ground truth, i.e., 
\begin{equation}\label{def:Keps}
\Keps  :=  \left\{ \{(\bh_i, \bx_i)\}_{i=1}^s: \delta_i = \frac{\|\bh_i\bx_i^* - \bh_{i0}\bx_{i0}^*\|_F}{d_{i0}} \leq \eps, 1\leq i\leq s   \right\}
\end{equation}
where $\eps$ is a predetermined parameter in $(0, \frac{1}{15}]$.
\end{enumerate}
\begin{remark}
To ensure $\delta_{i} \leq \eps$, it suffices to ensure $\delta \leq \frac{\eps}{\sqrt{s}\kappa}$ where $\kappa := \frac{\max d_{i0}}{\min d_{i0}} \geq 1$. This is because
\begin{equation*}
\frac{1}{s\kappa^2}\sum_{i=1}^s \delta_i^2 \leq \delta^2 \leq  \frac{\eps^2}{s\kappa^2} 
\end{equation*}
which implies $\max_{1\leq i\leq s}\delta_i \leq \eps.$
\end{remark}
\begin{remark}
When we say $(\bh, \bx)\in\mathcal{N}_d, \Kmu$ or $\Keps$, it means for all $i=1,\dots,s$ we have $(\bh_i,\bx_i) \in\Kd$, $\Kmu$ or $\Keps$ respectively. In particular, $(\bh_0, \bx_0) \in\Kint$ where $\bh_0$ and $\bx_0$ are defined in~\eqref{def:hx}.
\end{remark}

\subsection{Objective function and Wirtinger derivative}
To implement the first two observations, we introduce the regularizer $G(\bh, \bx)$, defined as the sum of $s$ components
\begin{eqnarray}\label{def:G}
G(\bh, \bx):= \sum_{i=1}^s G_i(\bh_i,\bx_i) .
\end{eqnarray}
For each component $G_i(\bh_i,\bx_i)$, we let $\rho \geq d^2 + 2\|\be\|^2$, $0.9 d_0 \leq d \leq 1.1d_0$, $0.9d_{i0} \leq d_i \leq 1.1d_{i0}$ for all $1\leq i\leq s$ and
\begin{align}
\label{def:Gi}
G_i  :=  \rho \Big[ \underbrace{G_0\left(\frac{\|\bh_i\|^2}{2d_i}\right) + G_0\left(\frac{\|\bx_i\|^2}{2d_i}\right)}_{ \Kd } + \underbrace{\sum_{l=1}^LG_0\left(\frac{L |\bb_l^*\bh_i|^2}{8d_i\mu^2 }\right)}_{\Kmu} \Big],
\end{align}
where $G_0(z) = \max\{z-1, 0\}^2$.
Here both $d$ and $\{d_i\}_{i=1}^s$ are data-driven and well approximated by our spectral initialization procedure; and $\mu^2$ is a tuning parameter which could be estimated if we assume a specific statistical model for the channel (for example, in the widely used Rayleigh fading model, the channel coefficients are assumed to be complex Gaussian).
The idea behind $G_i$ is quite straightforward though the formulation is complicated. For each  $G_i$ in~\eqref{def:Gi}, the first two terms try to force the iterates to lie in $\Kd$ and the third term tries to encourage the iterates to lie in $\Kmu.$ What about the neighborhood $\Keps$? A proper choice of the initialization followed by gradient descent which keeps the objective function decreasing will ensure that the iterates stay in $\Keps$.

\vskip0.25cm
Finally, we consider the objective function as the sum of nonlinear least squares objective function $F(\bh,\bx)$ in~\eqref{def:F} and the regularizer $G(\bh,\bx)$, 
\begin{equation}\label{def:FG}
\tF(\bh, \bx) := F(\bh,\bx) + G(\bh, \bx).
\end{equation}

Note that the input of the function $\tF(\bh,\bx)$ consists of complex variables but the output is real-valued. As a result, the following simple relations hold
\begin{equation*}
\frac{\pa \tF}{\pa \bar{\bh}_i} = \overline{\frac{\pa \tF}{\pa \bh_i} }, \quad \frac{\pa \tF}{\pa \bar{\bx}_i} = \overline{\frac{\pa \tF}{\pa \bx_i} }.
\end{equation*}
Similar properties also apply to both $F(\bh,\bx)$ and $G(\bh,\bx)$.

Therefore, to minimize this function, it suffices to consider only the gradient of $\tF$ with respect to $\bar{\bh}_i$ and $\bar{\bx}_i$, which is also called Wirtinger derivative~\cite{CLS14}.
The Wirtinger derivatives of $F(\bh,\bx)$ and $G(\bh,\bx)$ w.r.t. $\bar{\bh}_i$ and $\bar{\bx}_i$ can be easily computed as follows 
\begin{align}
\nabla F_{\bh_i} & = \A_i^*\left(\A(\BX)- \by\right)\bx_i = \A_i^*\left(\A(\BX-\BX_0)- \be \right)\bx_i, \label{eq:WFh} \\
\nabla F_{\bx_i} & = \left(\A_i^*\left(\A(\BX) - \by\right)\right)^*\bh_i = \left(\A_i^*\left(\A(\BX-\BX_0) - \be\right)\right)^*\bh_i, \label{eq:WFx}\\
\nabla G_{\bh_i}
& = \frac{\rho}{2d_i}\Big[G'_0\left(\frac{\|\bh_i\|^2}{2d_i}\right) \bh_i 
+ \frac{L}{4\mu^2} \sum_{l=1}^L G'_0\left(\frac{L|\bb_l^*\bh_i|^2}{8d_i\mu^2}\right) \bb_l\bb_l^*\bh_i \Big], \label{eq:WGh} \\ 
\nabla G_{\bx_i} & = \frac{\rho}{2d_i} G'_0\left( \frac{\|\bx_i\|^2}{2d_i}\right) \bx_i, \label{eq:WGx}
\end{align}
where  $\A(\BX)  = \sum_{i=1}^s \A_i(\bh_i\bx_i^*)$ and $\A^*$ is defined in~\eqref{def:A-adj}.
In short, we denote
\begin{equation}\label{def:grad}
\nabla\tF_{\bh} : = \nabla F_{\bh} + \nabla G_{\bh}, \quad
\nabla F_{\bh} : = 
\begin{bmatrix}
\nabla F_{\bh_1} \\
\vdots \\
\nabla F_{\bh_s}
\end{bmatrix},
\quad 
\nabla G_{\bh} : = 
\begin{bmatrix}
\nabla G_{\bh_1} \\
\vdots \\
\nabla G_{\bh_s}
\end{bmatrix}.
\end{equation}
Similar definitions hold for $\nabla\tF_{\bx},\nabla F_{\bx}$ and $G_{\bx}$. It is easy to see that $\nabla F_{\bh} = \A^*(\A(\BX) - \by)\bx$ and $\nabla F_{\bx} = (\A^*(\A(\BX) - \by))^*\bh$.

\section{Algorithm and Theory}
\label{s:thm}
\subsection{Two-step algorithm}
As mentioned before, the first step is to find a good initial guess $(\bu^{(0)}, \bv^{(0)})\in\CC^{Ks}\times\CC^{Ns}$ such that it is inside the basin of attraction. 
The initialization follows from this key fact: 
\begin{equation*}
\E(\A_i^*(\by)) = \E\left(\A_i^*\left(\sum_{j=1}^s\A_j(\bh_{j0}\bx_{j0}^* )+\be\right)\right) = \bh_{i0}\bx_{i0}^*,
\end{equation*}
where we use $\BB^*\BB = \sum_{l=1}^L\bb_l\bb_l^* = \I_K$, $\E(\ba_{il}\ba_{il}^*) = \I_N$ and
\begin{align*}
\E(\A_i^*\A_i(\bh_{i0}\bx_{i0}^*)) & = \sum_{l=1}^L \bb_l\bb_l^*\bh_{i0}\bx_{i0}^* \E(\ba_{il}\ba_{il}^*)
= \bh_{i0}\bx_{i0}^*, \\
\E(\A_j^*\A_i(\bh_{i0}\bx_{i0}^*)) & 
= \sum_{l=1}^L \bb_l\bb_l^*\bh_{i0}\bx_{i0}^* \E(\ba_{il}\ba_{jl}^*)
= \bzero,\quad \forall j\neq i. 
\end{align*}
Therefore, it is natural to extract the leading singular value and associated left and right singular vectors from each $\A_i^*(\by)$ and use them as (a hopefully good) approximation to $(d_{i0}, \bh_{i0}, \bx_{i0}).$ This idea leads to Algorithm~\ref{Initial}, the theoretic guarantees of which are given in Section~\ref{s:init}. The second step of the algorithm is just to apply gradient descent to $\tF$ with the initial guess $\{(\bu^{(0)}_{i}, \bv^{(0)}_i, d_i)\}_{i=1}^s$ or $(\bu^{(0)}, \bv^{(0)},\{ d_i\}_{i=1}^s)$, where $\bu^{(0)}$ stems from stacking all $\bu^{(0)}_{i}$ into one long vector\footnote{It is clear that instead of gradient descent one could also use a second-order method to achieve faster convergence at the tradeoff of increased computational cost per iteration. The theoretical convergence analysis for a second-order method will require a very different approach from the one developed in this paper.}.
\begin{algorithm}[h!]
\caption{Initialization via spectral method and projection}
\label{Initial}
\begin{algorithmic}[1]
\For{ $i = 1, 2, \dots, s$}
\State Compute $\A_i^*(\by).$
\State Find the leading singular value, left and right singular vectors of $\A_i^*(\by)$, denoted by $(d_i, \hat{\bh}_{i0}, \hat{\bx}_{i0})$.
\State Solve the following optimization problem for $1\leq i\leq s$:
\begin{equation*}
\bu^{(0)}_{i} := \text{argmin}_{\bz\in\CC^K} \|\bz - \sqrt{d_{i}}\hbh_{i0}\|^2 \text{ s.t. }\sqrt{L}\|\BB\bz\|_{\infty} \leq 2\sqrt{d_{i}}\mu.
\end{equation*}
\State Set $\bv^{(0)}_i = \sqrt{d_i}\hat{\bx}_{i0}$. 
\EndFor
\State Output: $\{(\bu^{(0)}_{i}, \bv^{(0)}_i, d_i)\}_{i=1}^s$ or $(\bu^{(0)}, \bv^{(0)}, \{d_i\}_{i=1}^s)$.
\end{algorithmic}
\end{algorithm}
\begin{algorithm}[h!]
\caption{Wirtinger gradient descent with constant stepsize $\eta$}
\label{AGD}
\begin{algorithmic}[1]
\State {\bf Initialization:} obtain $(\bu^{(0)}, \bv^{(0)}, \{d_i\}_{i=1}^s)$ via Algorithm~\ref{Initial}.
\For{ $t = 1, 2, \dots, $}
\For{ $i = 1, 2, \dots, s$}
\State $\bu^{(t)}_{i} = \bu^{(t-1)}_{i} - \eta \nabla \tF_{\bh_i}(\bu^{(t-1)}_{i}, \bv^{(t-1)}_{i})$, 
\State $\bv^{(t)}_{i} = \bv^{(t-1)}_{i} - \eta \nabla \tF_{\bx_i}(\bu^{(t-1)}_{i}, \bv^{(t-1)}_{i})$,
\EndFor
\EndFor
\end{algorithmic}
\end{algorithm}
\begin{remark}
For Algorithm~\ref{AGD}, we can rewrite each iteration into
\begin{equation*}
\bu^{(t)} = \bu^{(t-1)} - \eta\nabla \tF_{\bh}(\bu^{(t-1)}, \bv^{(t-1)}), \quad\bv^{(t)} = \bv^{(t-1)} - \eta\nabla \tF_{\bx}(\bu^{(t-1)}, \bv^{(t-1)}), 
\end{equation*}
where $\nabla\tF_{\bh}$ and $\nabla\tF_{\bx}$ are in~\eqref{def:grad}, and 
\begin{equation*}
\bu^{(t)} : = 
\begin{bmatrix}
\bu_1^{(t)} \\
\vdots \\
\bu_s^{(t)}
\end{bmatrix}, \quad
\bv^{(t)} : = 
\begin{bmatrix}
\bv_1^{(t)} \\
\vdots \\
\bv_s^{(t)}
\end{bmatrix}.
\end{equation*}

\end{remark}

\subsection{Main results}
Our main findings are summarized as follows: Theorem~\ref{thm:init} shows that the initial guess given by Algorithm~\ref{Initial} indeed belongs to the basin of attraction. Moreover, $d_i$ also serves as a good approximation of $d_{i0}$ for each $i$. Theorem~\ref{thm:main} demonstrates that the regularized Wirtinger gradient descent will guarantee the linear convergence of the iterates and the recovery is exact in the noisefree case and stable in the presence of noise. 
\begin{theorem}\label{thm:init}
The initialization obtained via Algorithm~\ref{Initial} satisfies
\begin{align}
 (\bu^{(0)}, \bv^{(0)}) \in \frac{1}{\sqrt{3}}\Kd\bigcap \frac{1}{\sqrt{3}} \MN_{\mu}\bigcap \MN_{\frac{2\eps}{5\sqrt{s}\kappa}} \label{eq:init-val1} 
\end{align}
and  
\begin{equation}\label{eq:d}
0.9d_{i0} \leq d_i\leq 1.1d_{i0} ,\quad 0.9d_0 \leq d\leq 1.1d_0, 
\end{equation}
holds with probability at least
$1 - L^{-\gamma+1}$ if the number of measurements satisfies
\begin{equation}
\label{Lbound}
L \geq C_{\gamma+\log(s)}(\mu_h^2 + \sigma^2)s^2 \kappa^4  \max\{K,N\}\log^2 L/\eps^2.
\end{equation}
Here $\eps$ is any predetermined constant in $(0, \frac{1}{15}]$, and $C_{\gamma}$ is a constant only linearly depending on $\gamma$ with $\gamma \geq 1$. 

\end{theorem}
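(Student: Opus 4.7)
\medskip\noindent\textbf{Proof proposal.} My plan is to reduce the entire theorem to a single spectral-norm concentration bound
\begin{equation*}
\xi_i := \|\A_i^*(\by) - \bh_{i0}\bx_{i0}^*\| \leq \xi\, d_{i0}, \qquad \xi := \frac{\eps}{10\sqrt{s}\,\kappa},
\end{equation*}
holding for each $1\leq i\leq s$, and then combining this with Weyl--Wedin perturbation and the explicit projection in Algorithm~\ref{Initial}. The moment identity $\E[\A_i^*(\by)] = \bh_{i0}\bx_{i0}^*$ computed in the text preceding the theorem is precisely what makes $\xi_i$ the right object to control.

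The concentration step is the main obstacle. I would start from the deterministic decomposition
\begin{equation*}
\A_i^*(\by) - \bh_{i0}\bx_{i0}^* \;=\; \bigl(\A_i^*\A_i(\bh_{i0}\bx_{i0}^*) - \bh_{i0}\bx_{i0}^*\bigr) \;+\; \sum_{j\neq i}\A_i^*\A_j(\bh_{j0}\bx_{j0}^*) \;+\; \A_i^*(\be),
\end{equation*}
and bound each term by $\xi d_{i0}/3$ with probability at least $1-L^{-\gamma-\log s}$. The self-interference piece is a sum of independent sub-exponential (not sub-Gaussian) rank-one matrices $\bb_l\bb_l^*\bh_{i0}\bx_{i0}^*(\ba_{il}\ba_{il}^* - \I_N)$, for which a direct matrix Bernstein bound fails. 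I would truncate each summand at level $\sim\sqrt{\log L}$ and apply Bernstein to the truncated part, then control the tail by Gaussian concentration, in the spirit of the corresponding self-interference lemma for the $s=1$ case in \cite{LLSW16}; the incoherence enters through $|\bb_l^*\bh_{i0}|^2 \leq \mu_h^2 d_{i0}/L$, which produces the $\mu_h^2$ contribution in \eqref{Lbound}. The cross-interference sum is easier: conditioning on $\{\A_j\}_{j\neq i}$ makes each term mean zero and Gaussian in $\ba_{il}$, so conditional matrix Bernstein combined with $\sum_{j\neq i}d_{j0}^2 \leq d_0^2$ handles the $s-1$ cross terms with an $O(\sqrt{s\log L})$ overhead. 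The noise term $\A_i^*(\be) = \BB^*\diag(\be)\BA_i$ is Gaussian conditional on $\be$, and $\|\be\|\lesssim \sigma d_0$ with high probability, producing the $\sigma^2$ contribution. A union bound over $1\leq i\leq s$ gives the overall failure probability $L^{-\gamma+1}$ and the $\log s$ shift in $C_{\gamma+\log s}$.

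Once $\xi_i \leq \xi d_{i0}$ holds with $\xi \leq 1/10$, Weyl's inequality immediately gives $|d_i - d_{i0}|\leq 0.1 d_{i0}$, hence \eqref{eq:d} for $d_i$; the statement for $d$ then follows from $d^2 = \sum_i d_i^2$ and $d_0^2 = \sum_i d_{i0}^2$. Because $\bh_{i0}\bx_{i0}^*$ is rank one with spectral gap $d_{i0}$, Wedin's $\sin\Theta$ theorem yields a unimodular phase $\alpha_i$ such that
\begin{equation*}
\|\sqrt{d_i}\,\hbh_{i0} - \alpha_i \bh_{i0}\|,\quad \|\sqrt{d_i}\,\hbx_{i0} - \alpha_i \bx_{i0}\| \;\lesssim\; \xi\sqrt{d_{i0}}.
\end{equation*}
The right vector $\bv_i^{(0)} = \sqrt{d_i}\,\hbx_{i0}$ has norm at most $\sqrt{1.1\,d_{i0}} < (2/\sqrt{3})\sqrt{d_{i0}}$, giving half of the $\frac{1}{\sqrt{3}}\Kd$ membership. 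For $\bu_i^{(0)}$, I would verify that $\alpha_i\bh_{i0}$ is feasible for the projection problem (using $\sqrt{L}\|\BB\bh_{i0}\|_\infty \leq \mu_h\sqrt{d_{i0}} \leq \mu\sqrt{d_i}$), which gives both $\sqrt{L}\|\BB\bu_i^{(0)}\|_\infty \leq 2\sqrt{d_i}\mu \leq (4/\sqrt{3})\sqrt{d_{i0}}\mu$ (that is, $(\bu^{(0)},\bv^{(0)})\in\frac{1}{\sqrt{3}}\Kmu$) and, by optimality, $\|\bu_i^{(0)} - \alpha_i\bh_{i0}\|\leq 2\|\sqrt{d_i}\,\hbh_{i0} - \alpha_i\bh_{i0}\|\lesssim \xi\sqrt{d_{i0}}$. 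Expanding $\bu_i^{(0)}(\bv_i^{(0)})^* - \bh_{i0}\bx_{i0}^*$ as $(\bu_i^{(0)} - \alpha_i\bh_{i0})(\bv_i^{(0)})^* + \alpha_i\bh_{i0}(\bv_i^{(0)} - \alpha_i\bx_{i0})^*$ and using $\|\bh_{i0}\|=\|\bx_{i0}\|=\sqrt{d_{i0}}$ then yields $\delta_i \lesssim \xi \leq 2\eps/(5\sqrt{s}\kappa)$ after absorbing numerical constants. The genuinely new ingredient compared with the single-user case in \cite{LLSW16} is the cross-interference bound, which is actually easier because of independence across users; the union bound over $s$ users, together with the ratio $\kappa$ entering $\xi^{-1}$, is what produces the $s^2\kappa^4$ factor in \eqref{Lbound}.
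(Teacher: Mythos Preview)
Your proposal is correct and follows the same overall strategy as the paper: reduce everything to $\|\A_i^*(\by)-\bh_{i0}\bx_{i0}^*\|\le \xi d_{i0}$, then combine singular value/vector perturbation with the projection step in Algorithm~\ref{Initial}. Two tactical points are worth flagging. First, the paper does not split the error into three pieces but only two: it writes $\A_i^*(\by)=\A_i^*\A_i(\bh_{i0}\bx_{i0}^*)+\A_i^*(\bw_i)$ with $\bw_i=\sum_{j\ne i}\A_j(\bh_{j0}\bx_{j0}^*)+\be$, observing that $\bw_i$ is a complex Gaussian vector \emph{independent of} $\A_i$ with per-entry variance at most $(\mu_h^2+\sigma^2)d_0^2/L$; this folds your cross-interference and noise terms into a single application of the same lemma. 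Second, your remark that ``a direct matrix Bernstein bound fails'' for the self-interference term is a slight misconception: the paper uses the $\psi_1$ (sub-exponential) version of matrix Bernstein (Theorem~\ref{thm:bern1}) directly, so no truncation is needed. On the deterministic side the paper also bypasses Wedin by computing $\|(\I_K-\bh_{i0}\bh_{i0}^*/d_{i0})\hbh_{i0}\|$ explicitly and invoking non-expansiveness of convex projection (Lemma~\ref{lem:KMC}), which replaces your factor-$2$ optimality bound by a factor-$1$ bound. None of this affects the validity of your argument; it just streamlines the constants and the bookkeeping.
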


\begin{theorem}\label{thm:main}
Starting with the initial value $\bz^{(0)}:= (\bu^{(0)}, \bv^{(0)})$ satisfying~\eqref{eq:init-val1},
the Algorithm~\ref{AGD} creates a sequence of iterates $(\bu^{(t)}, \bv^{(t)})$ which converges to the global minimum linearly, 
\begin{align}\label{eq:errorbound}
\|\CH(\bu^{(t)}, \bv^{(t)}) - \CH(\bh_0, \bx_0) \|_F 
  & \leq \frac{\eps d_0}{\sqrt{2s\kappa^2}}(1 - \eta\omega)^{t/2} + 60\sqrt{s} \|\A^*(\be)\|
\end{align}
with probability at least $1 - L^{-\gamma+1}$ where $\eta\omega = \mathcal{O}((s\kappa d_0(K+N)\log^2L)^{-1})$ and 
\begin{equation*}
 \|\A^*(\be)\| \leq C_0 \sigma  d_0\sqrt{\frac{\gamma s(K + N)(\log^2L)}{L}}
\end{equation*}
 if the number of measurements $L$ satisfies
\begin{equation}
\label{boundforL}
L \geq C_{\gamma+\log (s)}(\mu^2 + \sigma^2)s^2 \kappa^4  \max\{K,N\}\log^2 L/\eps^2.
\end{equation}

\end{theorem}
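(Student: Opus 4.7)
The plan is to follow the standard two-condition recipe for proving linear convergence of a nonconvex regularized gradient descent scheme, adapted to the bilinear multi-user setting of~\eqref{eq:measure2}. Writing $\bz=(\bh,\bx)$, I would establish (i) a \emph{Local Regularity Condition} stating that for all $\bz\in\Kint$,
\begin{equation*}
\|\nabla\tF(\bz)\|^2 \;\geq\; \omega \bigl[\tF(\bz) - c\|\A^*(\be)\|^2\bigr]_+
\end{equation*}
with $\omega=\Theta\bigl((s\kappa d_0(K+N)\log^2L)^{-1}\bigr)$, and (ii) a \emph{Local Smoothness Condition} of the form $\|\nabla\tF(\bz)-\nabla\tF(\bz')\|\leq \CL\|\bz-\bz'\|$ with $\CL=\mathcal{O}(s d_0(K+N)\log^2L)$, valid on the same set. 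Together with an induction showing that the iterates remain in the basin $\Kint$ for all $t\geq 0$, these two ingredients combine via a textbook descent lemma into linear convergence up to a noise floor of order $\sqrt{s}\|\A^*(\be)\|$.

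Concretely, I would proceed as follows. First, I would prove a restricted-isometry-type statement: with the claimed probability, $\A$ approximately preserves the Frobenius norm on block-diagonal rank-one matrices whose $\bh$-factors satisfy the $\Kmu$ incoherence bound. This uses Bernstein/epsilon-net concentration on each $\A_i$ together with the decoupling afforded by the block-diagonal structure, and supplies both the ``descent direction'' needed for regularity and a quantitative link between $F_0(\bh,\bx)=\|\A(\BX-\BX_0)\|^2$ and $\|\BX-\BX_0\|_F^2$. Second, I would verify that the penalty $G$ has the desired effect: its gradient takes over and pushes iterates back whenever $(\bh,\bx)$ tries to exit $\Kd$ or $\Kmu$, while its value stays small throughout the basin. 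Third, a direct computation using an operator-norm estimate on $\A^*\A$ restricted to the relevant tangent space gives the Lipschitz constant $\CL$. The tail bound on $\|\A^*(\be)\|$ stated in the theorem follows from a matrix Bernstein inequality applied to the complex Gaussian noise vector $\be$.

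With these ingredients in place, the convergence analysis is an induction. The base case is Theorem~\ref{thm:init}, which places $(\bu^{(0)},\bv^{(0)})$ strictly inside $\Kint$ with buffer factors $\tfrac{1}{\sqrt 3}$ on $\Kd,\Kmu$ and $\delta\leq\tfrac{2\eps}{5\sqrt{s}\kappa}$ on $\Keps$. For the inductive step, smoothness together with the step size $\eta\lesssim 1/\CL$ yields $\tF(\bz^{(t+1)})\leq \tF(\bz^{(t)})-\tfrac{\eta}{2}\|\nabla\tF(\bz^{(t)})\|^2$; combining with the regularity condition gives
\begin{equation*}
\tF(\bz^{(t+1)}) - c\|\A^*(\be)\|^2 \;\leq\; (1-\eta\omega/2)\bigl(\tF(\bz^{(t)}) - c\|\A^*(\be)\|^2\bigr).
\end{equation*}
That $\bz^{(t+1)}$ still lies in $\Kint$ follows because $G(\bz^{(t+1)})\leq \tF(\bz^{(t+1)})\leq \tF(\bz^{(0)})$ is too small for any of the penalties $G_0(\cdot)$ to activate (giving membership in $\Kd\cap\Kmu$), while $F_0(\bz^{(t+1)})$ is too small for $\delta$ to leave $\Keps$ (using the restricted-isometry lower bound). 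Unrolling the recursion and converting the resulting $\tF$ bound back to a Frobenius-norm bound on $\CH(\bu^{(t)},\bv^{(t)})-\CH(\bh_0,\bx_0)$ then yields the error estimate~\eqref{eq:errorbound}, after absorbing constants into $\omega$ to match the exponent $t/2$.

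The hardest step will be establishing the Local Regularity Condition, specifically the lower bound on $\Real\langle\nabla\tF(\bz),\bz-\bz_0\rangle$ in the presence of (a) the per-user scaling ambiguity $(\alpha_i\bh_i,\alpha_i^{-1}\bx_i)$, and (b) the cross-user couplings $\A_j^*\A_i$ with $i\neq j$ appearing in the quadratic form $\|\A(\BX-\BX_0)\|^2$. For (a) one must define an ambiguity-corrected error $\Dh_i=\bh_i-\alpha_i\bh_{i0}$, $\Dx_i=\bx_i-\alpha_i^{-1}\bx_{i0}$ using the optimal $\alpha_i$, and show that it is this quantity (summed over $i$) that $\tF$ actually controls. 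For (b), the incoherence penalty built into $G$ is indispensable: it is precisely the $\Kmu$ control on $\|\BB\bh_i\|_\infty$ that bounds the off-diagonal cross terms in the expansion of $\|\A(\BX-\BX_0)\|^2$ by the diagonal ones, preventing the different users' signals from destructively interfering with the descent direction. Once this inner-product lower bound is secured, a Cauchy--Schwarz step $\|\nabla\tF\|\cdot\|\bz-\bz_0\|\geq\Real\langle\nabla\tF,\bz-\bz_0\rangle$ converts it to the PL form above, and the remainder of the argument reduces to the routine descent calculation sketched in the previous paragraph.
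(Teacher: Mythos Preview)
Your high-level strategy matches the paper's exactly: a local regularity (PL) inequality plus local Lipschitz smoothness on $\Kint$, combined via a descent-lemma induction. The ingredients you name --- a local RIP for $\A$, the role of $G$ in enforcing $\Kd\cap\Kmu$, ambiguity-corrected errors $(\Dh_i,\Dx_i)$, and cross-user incoherence --- all appear in the paper's argument.

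There is, however, a circularity in your inductive step that the paper takes some care to resolve. You apply the descent lemma to get $\tF(\bz^{(t+1)})\leq\tF(\bz^{(t)})-\tfrac{\eta}{2}\|\nabla\tF(\bz^{(t)})\|^2$ and \emph{then} use the smallness of $\tF(\bz^{(t+1)})$ to conclude $\bz^{(t+1)}\in\Kint$. But the descent lemma needs the Lipschitz bound along the whole segment $[\bz^{(t)},\bz^{(t+1)}]$, and that bound is only available inside $\Kint$ --- precisely what you are trying to prove. The paper breaks this loop by working instead with the sublevel set $\KF=\{\bz:\tF(\bz)\leq\eps^2d_0^2/(3s\kappa^2)+\|\be\|^2\}$, proving $\KF\cap\Keps\subset\Kd\cap\Kmu\cap\MN_{\frac{9}{10}\eps}$ (Lemma~\ref{lem:betamu}, so membership in $\KF\cap\Keps$ places one \emph{strictly} inside $\Kint$), and then running a continuity argument along the gradient step (Lemmas~\ref{lem:line_section}--\ref{lem:induction}, via an auxiliary parameter $\tau_{\max}$) to certify that the entire segment stays in $\KF\cap\Keps$ \emph{before} the descent inequality is invoked. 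Two smaller corrections: in the paper $\omega=d_0/7000$ is independent of $L,K,N,s$ (the scale you wrote is that of the product $\eta\omega$, inherited from $\eta\leq 1/C_L$), and the floor in the PL inequality is $c=\|\be\|^2+2000s\|\A^*(\be)\|^2$ rather than a multiple of $\|\A^*(\be)\|^2$ alone --- the additive $\|\be\|^2$ is inescapable since $F(\bh_0,\bx_0)=\|\be\|^2$.
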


\begin{remark}
Our previous work~\cite{LS17b} shows that  the convex approach via semidefinite programming (see~\eqref{eq:convex}) requires $L \geq C_0s^2(K + \mu^2_h N)\log^3(L)$ to ensure exact recovery. Later,~\cite{JungKS17} improves this result to the near-optimal bound $L\geq C_0s(K + \mu^2_h N)$ up to some $\log$-factors.
The difference between nonconvex and convex methods lies in the appearance of the condition number $\kappa$ in~\eqref{boundforL}. This is not just an artifact of the proof---empirically we also observe that the value of $\kappa$ affects the convergence rate of our nonconvex algorithm, see Figure~\ref{fig:snr-kappa}.
\end{remark}

\begin{remark}
Our theory suggests $s^2$-dependence for the number of measurements $L$, although numerically $L$ in fact depends on $s$ linearly, as shown in Section~\ref{s:numerics}. The reason for $s^2$-dependence will be addressed in details in Section~\ref{s:outline}.
\end{remark}

\begin{remark}
In the theoretical analysis, we assume that $\BA_i$ (or equivalently $\BC_i$) is a Gaussian random matrix. Numerical simulations suggest that this assumption is clearly not necessary. For example, $\BC_i$ may be chosen to be a Hadamard-type matrix which is more appropriate and favorable for communications. 
\end{remark}
\begin{remark}
If $\be = \bzero,$~\eqref{eq:errorbound} shows that $(\bu^{(t)}, \bv^{(t)})$  converges to the ground truth at a linear rate. On the other hand, if  noise exists, $(\bu^{(t)}, \bv^{(t)})$ is guaranteed to converge to a point within a small neighborhood of $(\bh_0,\bx_0).$ More importantly, if the number of measurements $L$ gets larger, $\|\A^*(\be)\|$ decays at the rate of $\mathcal{O}(L^{-1/2})$.  
\end{remark}

\section{Numerical simulations}\label{s:numerics}

In this section we present a range of numerical simulations to illustrate and complement different aspects of our theoretical framework. We will empirically
analyze the number of measurements needed for perfect joint deconvolution/demixing to see how this compares to our theoretical bounds. We will also study the robustness for noisy data. In our simulations we use Gaussian encoding matrices, as in our theorems. But we also try more realistic structured
encoding matrices, that are more reminiscent of what one might come across in wireless communications.

While Theorem~\ref{thm:main} says that the number of measurements $L$ depends {\em quadratically} on the number of sources $s$, numerical simulations suggest near-optimal performance.
Figure~\ref{fig:L-vs-s-gaussian} demonstrates that $L$ actually depends linearly on $s$, i.e., the boundary between success (white) and failure (black) is approximately a linear function of $s$. In the experiment,  $K = N = 50$ are fixed, all $\BA_i$ are complex Gaussians and all $(\bh_i,\bx_i)$ are standard complex Gaussian vectors.  For each pair of $(L,s)$, 25 experiments are performed and we treat the recovery as a success if $\frac{\|\hat{\BX} - \BX_0\|_F}{\|\BX_0\|_F} \leq 10^{-3}.$ For our algorithm, we use backtracking  to determine the stepsize and the iteration stops either if  $\|\A(\CH(\bh^{(t+1)}, \bx^{(t+1)}) - \CH(\bh^{(t)}, \bx^{(t)})) \| < 10^{-6}\|\by\|$ or if the number of iterations reaches 500.
The backtracking is based on the Armijo-Goldstein condition~\cite{luenberger2015linear}. The initial stepsize is chosen to be $\eta = \frac{1}{K+N}$. If $\tF(\bz^{(t)} - \eta \nabla \tF(\bz^{(t)})) > \tF(\bz^{(t)})$, we just divide $\eta$ by two and use a smaller stepsize.  

We see from Figure~\ref{fig:L-vs-s-gaussian} that the number of measurements for the proposed algorithm to succeed not only seems to depend linearly on the number of sensors,  but it is  actually rather close to the information-theoretic limit $s(K+N)$. Indeed, the green dashed line in Figure~\ref{fig:L-vs-s-gaussian}, which represents the empirical boundary for the phase transition between success and failure corresponds to a line with slope about $\frac{3}{2} s(K+N)$.  It is interesting to compare this empirical performance to the sharp theoretical phase transition bounds one would obtain via convex optimization~\cite{CRP12,mccoy2013demixing}. Considering the convex approach based on lifting in~\cite{LS17b}, we can adapt the theoretical framework in~\cite{CRP12} to the blind deconvolution/demixing setting, but with one modification. The bounds in~\cite{CRP12} rely on Gaussian widths of tangent cones related to the measurement matrices $\A_i$. Since simply analytic formulas for these expressions seem to be out of reach for the structured rank-one measurement matrices used in our paper, we instead compute the bounds for full-rank Gaussian random matrices, which yields a sharp bound of about  $3s(K+N)$ 
(the corresponding bounds for rank-one sensing matrices will likely have a constant larger than 3).  Note that these sharp theoretical bounds predict quite accurately the empirical behavior of convex methods. Thus our empirical bound for using a non-convex
methods compares rather favorably with that of the convex approach.

\begin{figure}[h!]
\centering
\includegraphics[width=180mm]{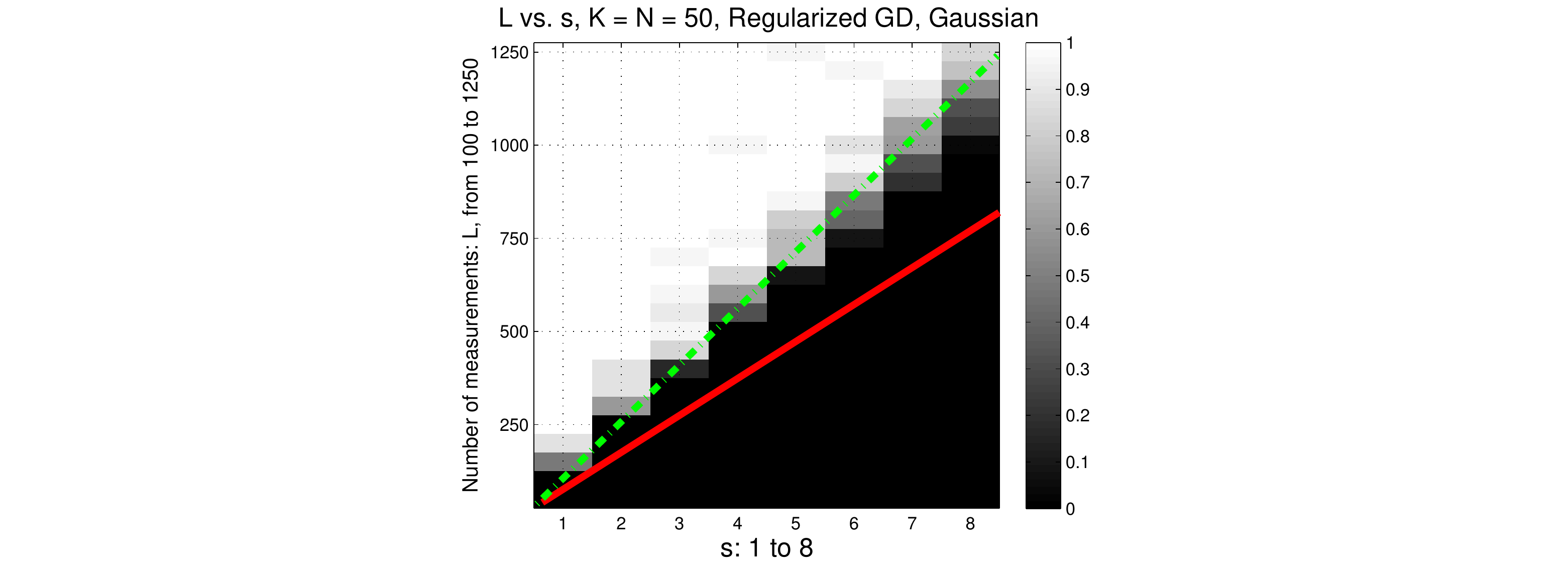}
\caption{Phase transition plot for empirical recovery performance under different choices of $(L,s)$ where $K = N =50$ are fixed. Black region: failure; white region: success. The red solid line depicts the number of degrees of freedom and the green dashed line shows the empirical phase transition bound for Algorithm~\ref{AGD}.}
\label{fig:L-vs-s-gaussian}
\end{figure}

Similar conclusions can be drawn from Figure~\ref{fig:L-vs-s-hadamard}; there all $\BA_i$ are in the form of $\BA_i = \BF \BD_i \BH$ where $\BF$ is the unitary $L\times L$ DFT matrix, all $\BD_i$ are independent diagonal binary $\pm 1$ matrices and $\BH$ is an $L\times N$ fixed partial deterministic Hadamard matrix. 
The purpose of using $\BD_i$ is to enhance the incoherence between each channel so that our algorithm is able to tell apart each individual signal and channel. As before we assume Gaussian channels, i.e., $\bh_i\sim \CN(\bzero, \I_K)$
Therefore, our approach does not only work for Gaussian encoding matrices $\BA_i$ but also for the matrices that are interesting to real-world applications, although no satisfactory theory has been derived yet for that case. Moreover, due to the structure of $\BA_i$ and $\BB$, fast transform algorithms are available, potentially allowing for  real-time deployment. 
\begin{figure}[h!]
\centering
\includegraphics[width=75mm]{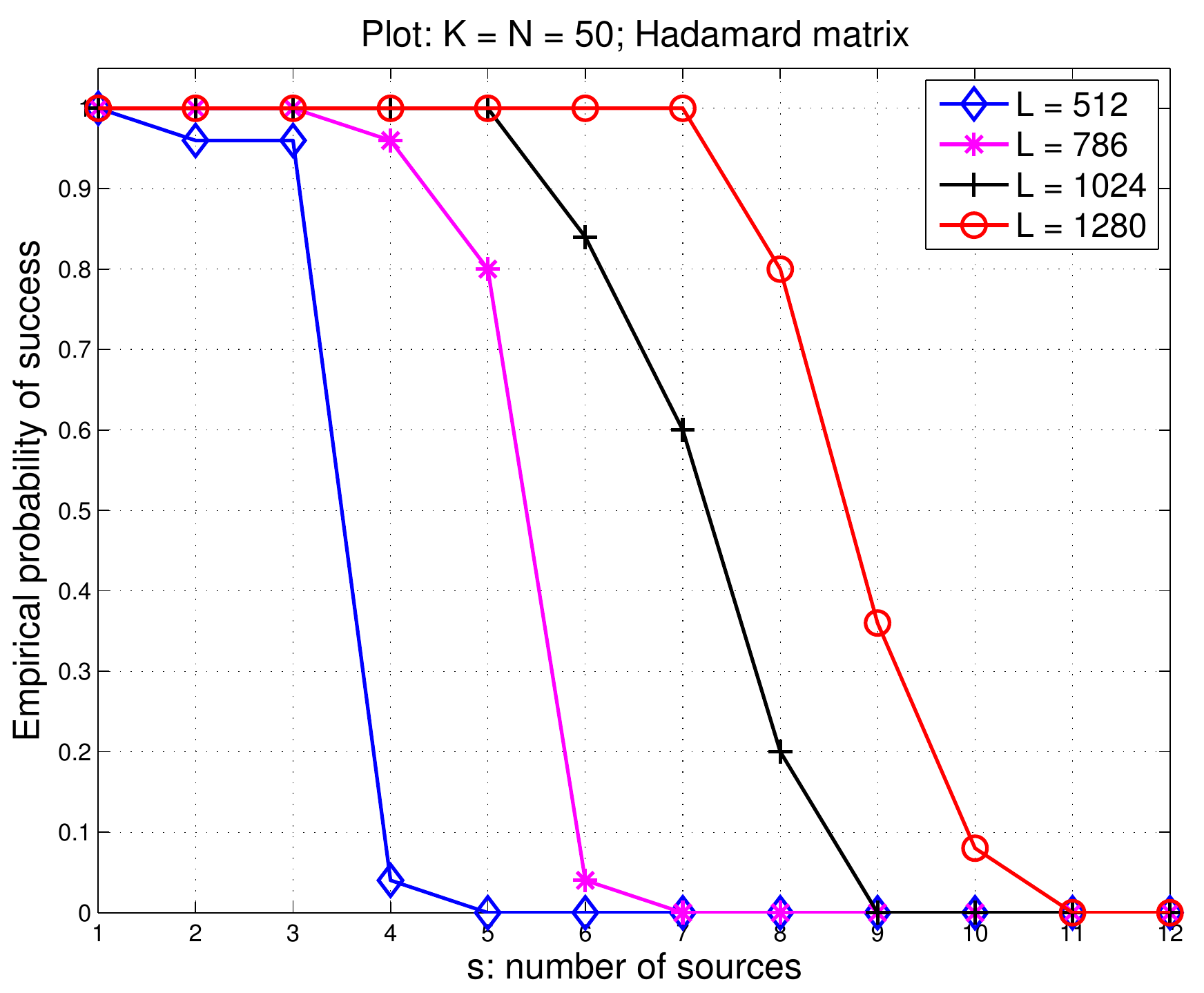}
\caption{Empirical probability of successful recovery for different pairs of $(L,s)$ when $K = N =50$ are fixed. }
\label{fig:L-vs-s-hadamard}
\end{figure}

Figure~\ref{fig:snr-gaussian} shows the robustness of our algorithm under different levels of noise. We also run 25 samples for each level of SNR and different $L$ and then compute the average relative error. It is easily seen that the relative error scales linearly with the SNR and one unit of increase in SNR (in dB) results in one unit of decrease in the relative error. 
\begin{figure}[h!]
\centering
\begin{minipage}{0.48\textwidth}
\includegraphics[width=75mm]{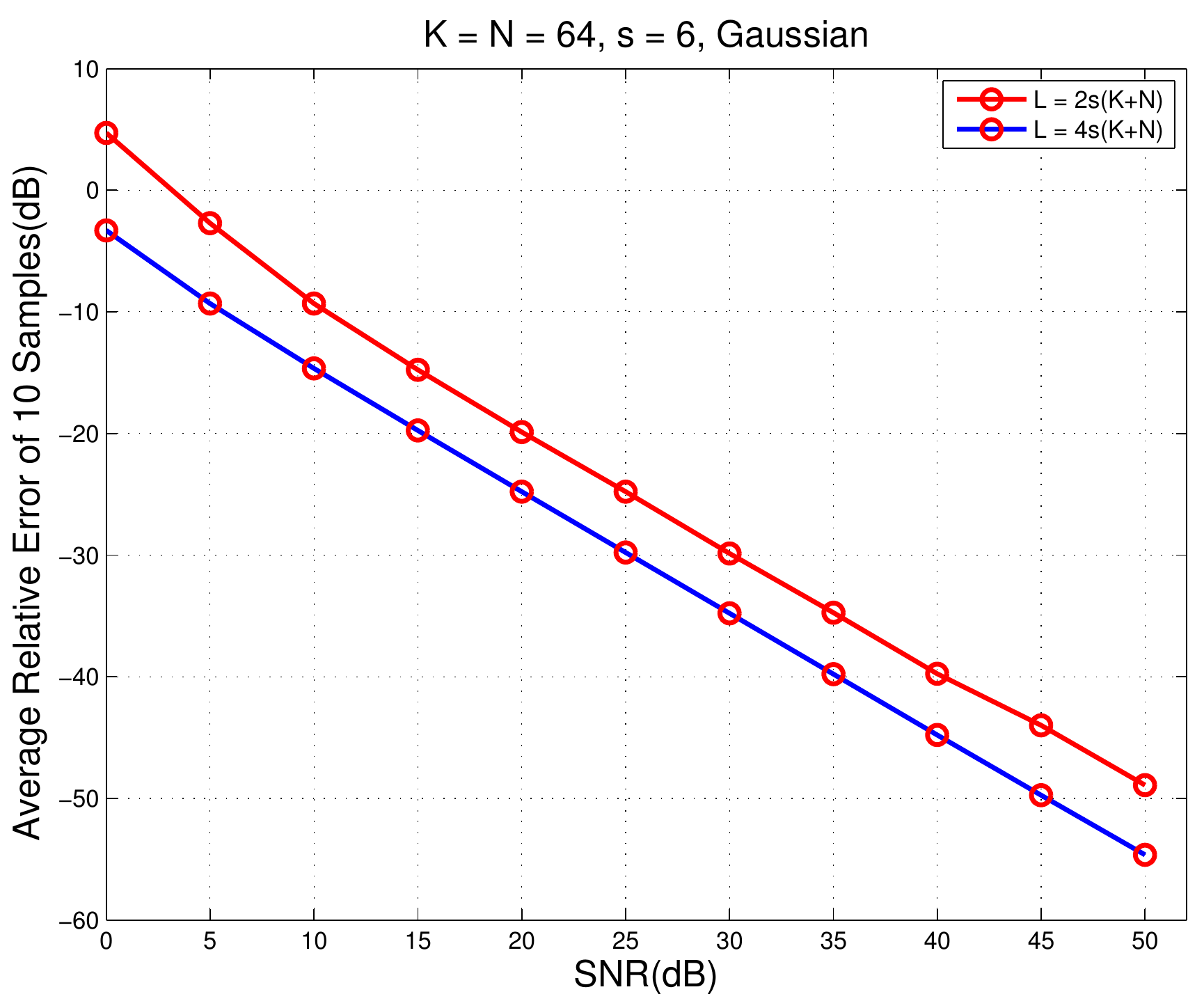}
\end{minipage}
\hfill
\begin{minipage}{0.48\textwidth}
\includegraphics[width=75mm]{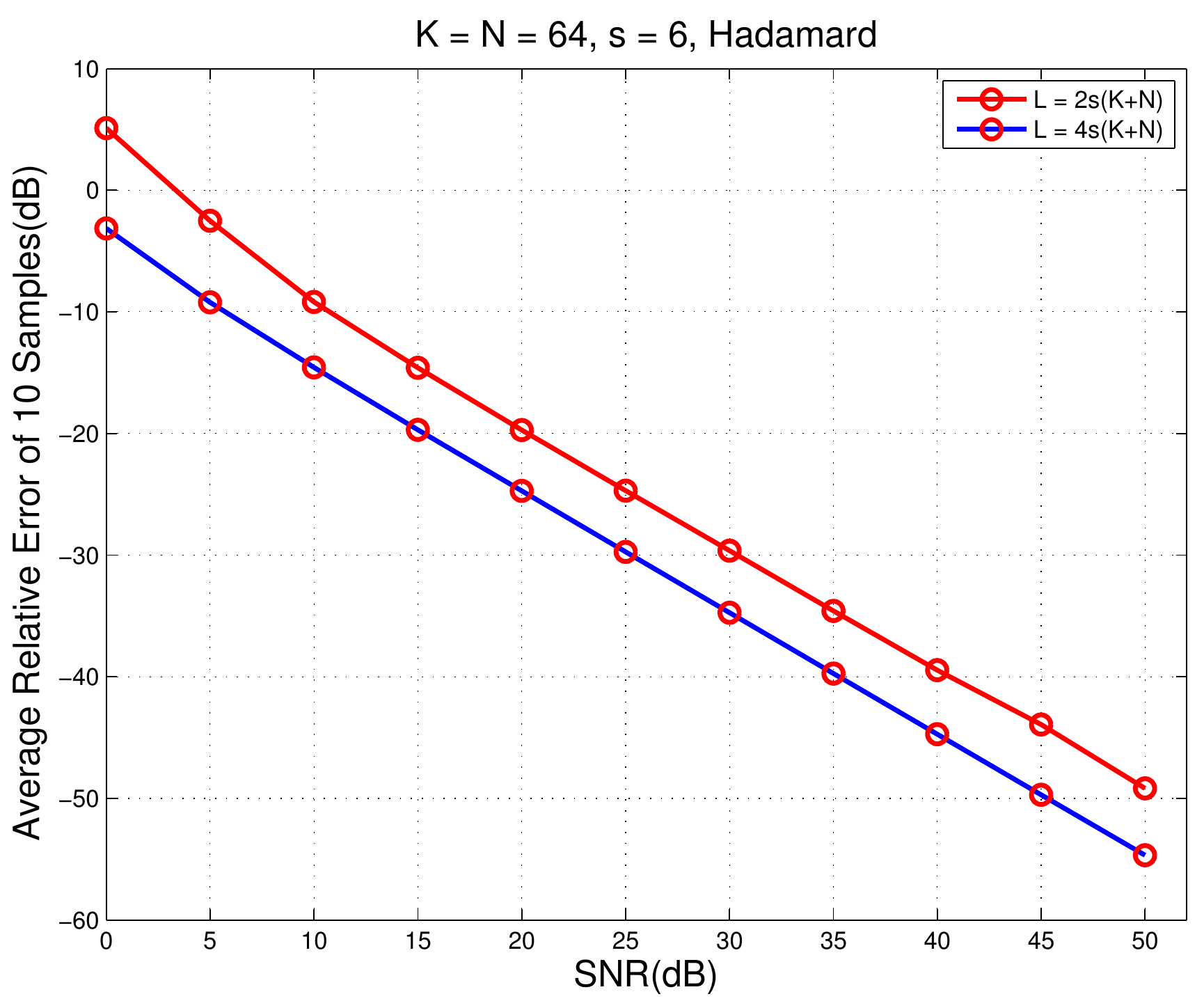}
\end{minipage}
\caption{Relative error vs. SNR (dB): SNR = $20\log_{10}\left(\frac{\|\by\|}{\|\be\|}\right)$.}
\label{fig:snr-gaussian}
\end{figure}

Theorem~\ref{thm:main} suggests that the performance and convergence rate actually depend on the condition number of $\BX_0 = \CH(\bh_0,\bx_0)$, i.e., 
on $\kappa = \frac{\max d_{i0}}{\min d_{i0}}$ where $d_{i0} = \|\bh_{i0}\|\|\bx_{i0}\|$. Next we demonstrate that this dependence on the condition number is not an artifact of the proof, but is indeed also observed empirically. In this experiment, we let $s=2$ and set for the first component $d_{1,0} = 1$ and for the second one $d_{2,0} = \kappa$ for $\kappa \in \{1,2,5\}$. Here, $\kappa = 1$ means that the received signals of both sensors have equal power, whereas $\kappa=5$ means that the signal received from the second sensor is considerably stronger. The initial stepsize is chosen as $\eta=1$, followed by the backtracking scheme. Figure~\ref{fig:snr-kappa} shows how the relative error decays with respect to the number of iterations $t$ under different condition number $\kappa$ and $L.$

The larger $\kappa$ is, the slower the convergence rate is, as we see from Figure~\ref{fig:snr-kappa}. This may result from two reasons: our spectral initialization may not be able to give a good initial guess for those weak components; moreover, during the gradient descent procedure, the gradient directions for the weak components could be totally dominated/polluted by the strong components. Currently, we still have no effective way of how to deal with this issue of slow convergence when $\kappa$ is not small. We have to leave this topic for future investigations.

\begin{figure}[h!]
\centering
\begin{minipage}{0.48\textwidth}
\includegraphics[width=75mm]{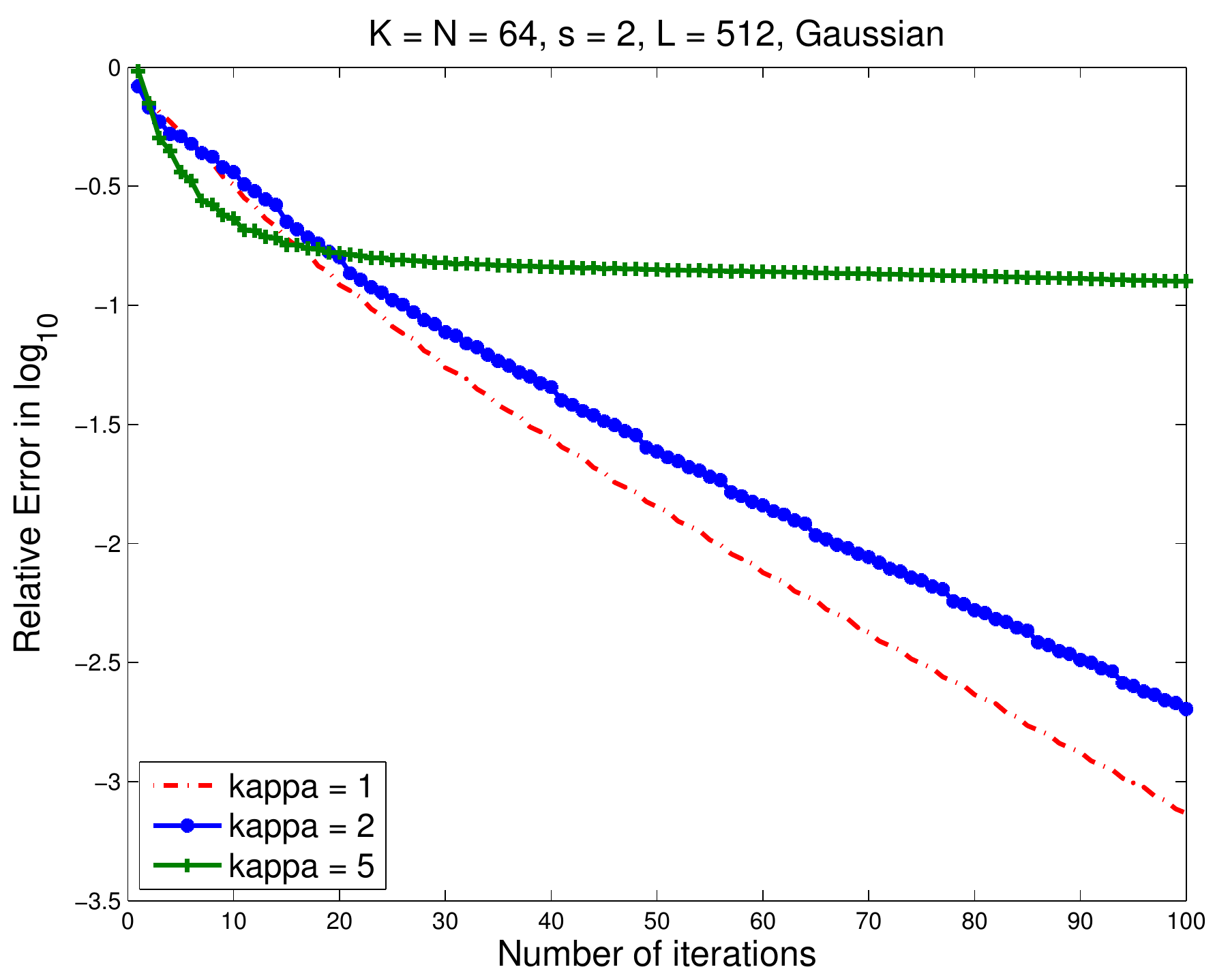}
\end{minipage}
\hfill
\begin{minipage}{0.48\textwidth}
\includegraphics[width=75mm]{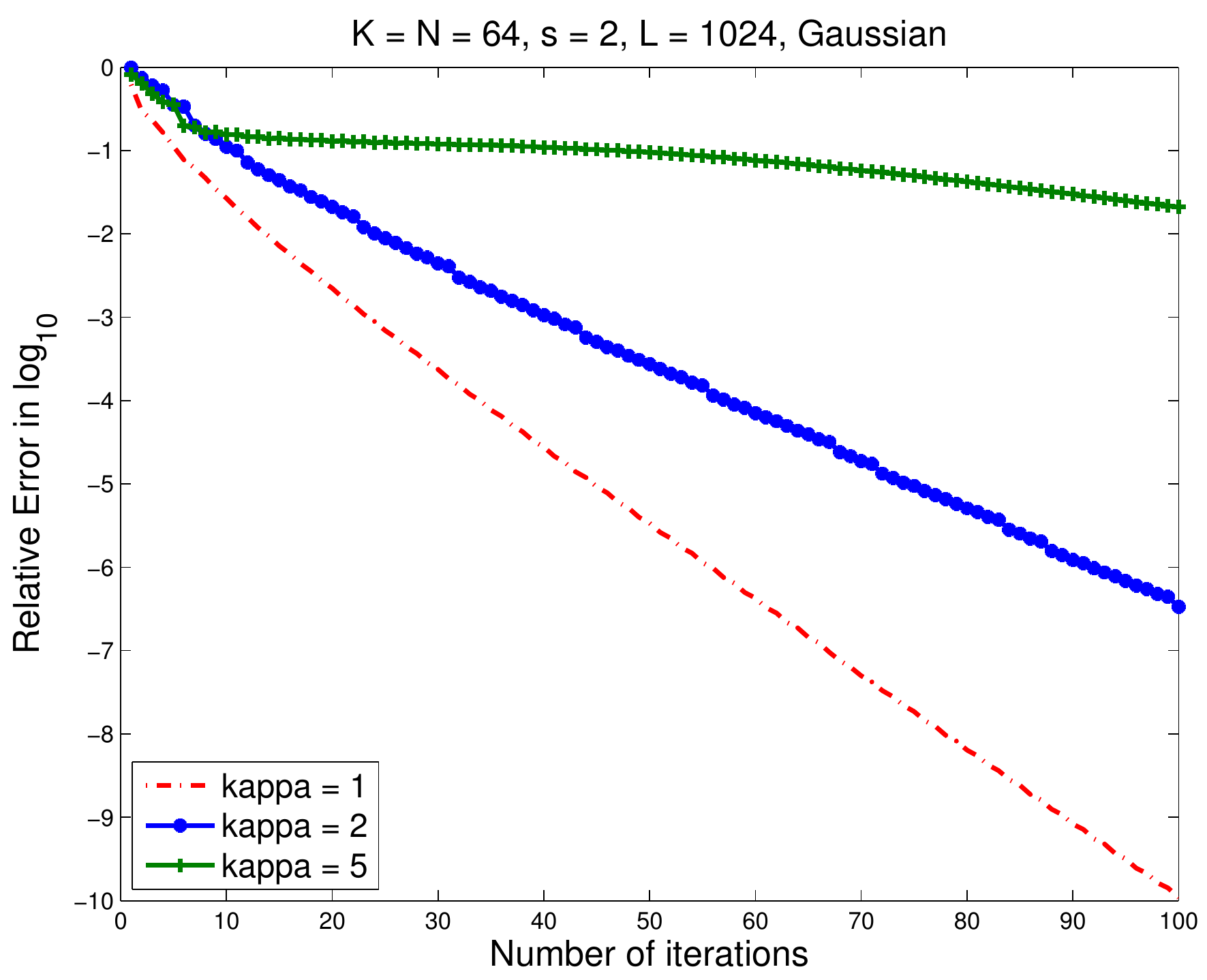}
\end{minipage}
\caption{Relative error vs. number of iterations $t$.}
\label{fig:snr-kappa}
\end{figure}

\section{Convergence analysis}
\label{s:converge}
Our convergence analysis relies on the following four conditions where the first three of them are local properties.  We will also briefly discuss how they contribute to the proof of our main theorem. Note that our previous work~\cite{LLSW16} on blind deconvolution is actually a special case $(s=1)$ of~\eqref{eq:measure1}. The proof of Theorem~\ref{thm:main} follows in part the main ideas in~\cite{LLSW16}. The readers may find the technical parts of~\cite{LLSW16} and this manuscript share many similarities.  However, there are also important differences. After all, we are now dealing with a more complicated problem where the ground truth matrix $\BX_0$ and measurement matrices are both rank-$s$ block-diagonal matrices, as shown in~\eqref{eq:measure3}, instead of rank-1 matrices in~\cite{LLSW16}. The key is to understand the properties of the linear operator $\A$ applying to different types of~\emph{block-diagonal} matrices. Therefore, many technical details are much more involved while on the other hand, some of results in~\cite{LLSW16} can be used directly. 
During the presentation, we will clearly point out both the similarities to and differences from~\cite{LLSW16}.

\subsection{Four key conditions}
\begin{condition}{\bf Local regularity condition:}\label{cond:reg}
Let $\bz : = (\bh, \bx)\in\CC^{s(K+N)}$ and $\nabla\tF(\bz) := \begin{bmatrix} \nabla\tF_{\bh}(\bz) \\ \nabla\tF_{\bx}(\bz) \end{bmatrix} \in\CC^{s(K+N)}$, then
\begin{equation}\label{def:reg}
\|\nabla \tF(\bz)\|^2 \geq \omega [\tF(\bz) - c]_+
\end{equation}
for $\bz \in\Kint$ where $\omega = \frac{d_0}{7000}$ and $c = \|\be\|^2 + 2000s\|\A^*(\be)\|^2.$
\end{condition}
We will prove Condition~\ref{cond:reg} in Section~\ref{s:LRC}. Condition~\ref{cond:reg} states that $\tF(\bz) = 0$ if $\|\nabla \tF(\bz)\| = 0$ and $\be = 0$, i.e., all the stationary points inside the basin of attraction are global minima.

\begin{condition}{\bf Local smoothness condition:}\label{cond:smooth}
Let $\bz = (\bh, \bx)$ and $\bw = (\bu, \bv)$ and there holds
\begin{equation}\label{def:CL}
\|\tF(\bz + \bw) - \tF(\bz)\| \leq C_L \|\bw\|
\end{equation}
for $\bz + \bw$ and $\bz$ inside $\Kint$ where $C_L \approx \mathcal{O}(d_0s\kappa(1 + \sigma^2)(K + N)\log^2 L )$ is the Lipschitz constant of $\tF$ over $\Kint$. The convergence rate  is governed by $C_L$.
\end{condition}
The proof of Condition~\ref{cond:smooth} can be found in Section~\ref{s:smooth}. 

\begin{condition}{\bf Local restricted isometry property:}\label{cond:rip}
Denote $\BX = \CH(\bh, \bx)$ and $\BX_0 = \CH(\bh_0, \bx_0)$. There holds
\begin{equation}\label{eq:rip}
\frac{2}{3} \|\BX - \BX_0\|_F^2
\leq \left\| \A(\BX - \BX_0) \right\|^2
\leq \frac{3}{2} \|\BX - \BX_0\|_F^2
\end{equation}
uniformly all for $(\bh, \bx)\in\Kint$.
\end{condition}
Condition~\ref{cond:rip} will be proven in Section~\ref{s:LRIP}. It says that the convergence of the objective function implies the convergence of the iterates. 
\begin{remark}[\bf Necessity of inter-user incoherence]\label{rem:inc}
Although Condition~\ref{cond:rip} is seemingly the same as the one in our previous work~\cite{LLSW16}, it is indeed very different. Recall that $\A$ is a linear operator acting on block-diagonal matrices and its output is the sum of $s$ different components involving $\A_i$. Therefore, the proof of Condition~\ref{cond:rip} heavily depends on the inter-user incoherence whereas this notion of incoherence is not needed at all for the single-user scenario. At the beginning of Section~\ref{s:model}, we discuss the choice of $\BC_i$ (or $\BA_i$). In order to distinguish one user from another, it is essential to use sufficiently different\footnote{Suppose all $\BC_i$ are the same, there is no hope to recover all pairs of $\{(\bh_i,\bx_i)\}_{i=1}^s$ simultaneously.}  encoding matrices $\BC_i$ (or $\BA_i$). Here  the independence and Gaussianity of all $\BC_i$ (or $\BA_i$) guarantee that  $\|\PP_{T_i}\A_i^*\A_j\PP_{T_j}\|$ is sufficiently small for all $i\neq j$ where $T_i$ is defined in~\eqref{def:T}. It is a key element to ensure the validity of Condition~\ref{cond:rip} which is also an important component to prove Condition~\ref{cond:reg}. On the other hand, due to the recent progress on this joint deconvolution and demixing problem, one is also able to prove a local restricted isometry property with tools such as bounding the suprema of chaos processes~\cite{JungKS17} by assuming $\{\BA_i\}_{i=1}^s$ as Gaussian matrices.
\end{remark}

\begin{condition}{\bf Robustness condition:}\label{cond:robust}
Let  $\eps \leq \frac{1}{15}$ be a predetermined constant. We have
\begin{equation}\label{eq:robust}
\|\A^*(\be)\| = \max_{1\leq i\leq s}\|\A_i^*(\be)\| \leq \frac{\eps d_0}{10\sqrt{2}s \kappa},
\end{equation}
where $\be\sim \mathcal{C}\mathcal{N}(0, \frac{\sigma^2d_0^2}{L})$ if $L \geq C_{\gamma}\kappa^2s^2(K + N)/\eps^2.$
\end{condition}
We will prove Condition~\ref{cond:robust} in Section~\ref{s:init}. We now extract one useful result based on Conditions~\ref{cond:rip} and~\ref{cond:robust}.
From these two conditions, we are able to produce a good approximation of $F(\bh, \bx)$ for all $(\bh, \bx)\in\Kint$ in terms of $\delta$ in~\eqref{def:delta}. For $(\bh, \bx)\in\Kint$,  the following inequality holds
\begin{equation}\label{eq:LUBD}
\frac{2}{3}\delta^2d_0^2 -\frac{\eps\delta d_0^2}{5\sqrt{s}\kappa} + \|\be\|^2 \leq F(\bh, \bx) \leq \frac{3}{2}\delta^2d_0^2 + \frac{\eps\delta d_0^2}{5\sqrt{s}\kappa} + \|\be\|^2.
\end{equation}
Note that~\eqref{eq:LUBD}  simply follows from
\begin{equation*}
F(\bh, \bx) = \| \A(\BX - \BX_0) \|_F^2 - 2\Real(\lag \BX- \BX_0, \A^*(\be)\rag) + \|\be\|^2.
\end{equation*}
Note that~\eqref{eq:rip} implies $\frac{2}{3}\delta^2d_0^2\leq \|\A(\BX-\BX_0)\|_F^2\leq \frac{3}{2}\delta^2d_0^2$. Thus it suffices to estimate the cross-term, 
\begin{align}
|\Real(\lag \BX- \BX_0, \A^*(\be)\rag)| 
& \leq  \|\A^*(\be)\| \|\BX - \BX_0\|_* = \|\A^*(\be)\| \sum_{i=1}^s\|\bh_i\bx_i^* - \bh_{i0}\bx_{i0}^*\|_* \nonumber \\
& \leq  \sqrt{2}\|\A^*(\be)\| \sum_{i=1}^s\|\bh_i\bx_i^* - \bh_{i0}\bx_{i0}^*\|_F \nonumber \\ 
& \leq  \sqrt{2s} \|\A^*(\be)\| \|\BX - \BX_0\|_F \leq  \frac{\eps \delta d_0^2}{10\sqrt{s}\kappa} \label{eq:cross}
\end{align}
where $\|\cdot\|_*$ and $\|\cdot\|$ are a pair of dual norms and $\|\A^*(\be)\|$ comes from~\eqref{eq:robust}.

\vskip0.5cm
\subsection{Outline of the convergence analysis}
\label{s:outline}
For the ease of proof, we introduce another neighborhood:
\begin{equation*}
\KF = \left\{ (\bh,\bx) : \tF(\bh, \bx) \leq  \frac{\eps^2 d_0^2}{3s\kappa^2} + \|\be\|^2\right\}.
\end{equation*}
Moreover, another reason to consider $\KF$ is based on the fact that gradient descent~\emph{only} allows one to make the objective function decrease if the step size is chosen appropriately. 
In other words, all the iterates $\bz^{(t)}$ generated by gradient descent are inside $\KF$ as long as $\bz^{(0)}\in \KF.$ 

On the other hand, it is crucial to note that the decrease of the objective function does not necessarily imply the decrease of the relative error of the iterates. Therefore, we want to construct an initial guess in $\Keps\cap \KF$ so that $\bz^{(0)}$ is sufficiently close to the ground truth and then analyze the behavior of $\bz^{(t)}$.

\vskip0.25cm

In the rest of this section, we basically try to prove the following relation:
\begin{equation*}
\underbrace{ \frac{1}{\sqrt{3}}\Kd\cap \frac{1}{\sqrt{3}} \MN_{\mu} \cap \MN_{\frac{2\eps}{5\sqrt{s}\kappa}}}_{\text{Initial guess}}
 \subset \underbrace{\Keps\cap \KF}_{ \{\bz^{(t)}\}_{t\geq 0} \text{ in } \Keps\cap \KF } 
 \subset \underbrace{\Kint}_{\text{Key conditions hold over }\Kint }.
\end{equation*}

Now we give a more detailed explanation of the relation above, which constitutes the main structure of the proof:
\begin{enumerate}
\item We will show $\frac{1}{\sqrt{3}}\Kd\cap \frac{1}{\sqrt{3}} \MN_{\mu} \cap \MN_{\frac{2\eps}{5\sqrt{s}\kappa}} \subset \Keps\cap \KF$ in the proof of Theorem~\ref{thm:main} in Section~\ref{s:mainthm}, which is quite straightforward.

\item Lemma~\ref{lem:betamu} explains why it holds that $\Keps\cap \KF\subset \Kint$ and where the $s^2$-bottleneck comes from. 
\item Lemma~\ref{lem:line_section} implicitly shows that the iterates $\bz^{(t)}$ will remain in $\Keps\cap\KF$ if the initial guess $\bz^{(0)}$ is inside $\Keps\cap \KF$ and $\tF(\bz^{(t)})$ is monotonically decreasing (simply by induction). 
Lemma~\ref{lem:induction} makes this observation explicit by showing  that $\bz^{(t)}\in \Keps \cap \KF$ implies $\bz^{(t+1)} : = \bz^{(t)} - \eta\nabla \tF(\bz^{(t)})\in \Keps\cap\KF$ if the stepsize $\eta$ obeys $\eta \leq \frac{1}{C_L}$.
Moreover, Lemma~\ref{lem:induction} guarantees sufficient decrease of $\tF(\bz^{(t)})$ in each iteration, which paves the road towards the proof of linear convergence of $\tF(\bz^{(t)})$ and thus $\bz^{(t)}.$
\end{enumerate}

\vskip0.25cm
Remember that $\Kd$ and $\Kmu$ are both convex sets, and the purpose of introducing regularizers $G_i(\bh_i, \bx_i)$ is to approximately project the iterates onto $\Kd\cap\Kmu.$ Moreover, we hope that once the iterates are inside $\Keps$ and inside a sublevel subset $\KF$, they will never escape from $\KF\cap\Keps$. 
Those ideas are fully reflected in the following lemma. 

\begin{lemma}
\label{lem:betamu}
Assume $0.9d_{i0}\leq d_i\leq 1.1d_{i0}$ and $0.9d_{0}\leq d\leq 1.1d_0$. There holds $\KF \subset \Kd \cap \Kmu$; moreover, under Conditions~\ref{cond:rip} and~\ref{cond:robust}, we have $\KF \cap \Keps\subset \Kd \cap \Kmu\cap\MN_{\frac{9}{10}\epsilon}$.
\end{lemma}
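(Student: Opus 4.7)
I would prove the two inclusions in sequence: the first by a direct contradiction powered by the regularizer, the second by combining the first with the error bound~\eqref{eq:LUBD} coming from Conditions~\ref{cond:rip} and~\ref{cond:robust}.

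For $\KF \subset \Kd \cap \Kmu$, I would argue contrapositively. Suppose $(\bh, \bx) \notin \Kd$; then for some index $i$ either $\|\bh_i\|^2 > 4 d_{i0}$ or $\|\bx_i\|^2 > 4 d_{i0}$. Using $d_i \leq 1.1 d_{i0}$, the corresponding argument of $G_0$ inside~\eqref{def:Gi} exceeds $4 d_{i0}/(2 \cdot 1.1 d_{i0}) = 20/11$, so $G_0 \geq (9/11)^2 = 81/121$. Combining this with $\rho \geq d^2 + 2\|\be\|^2 \geq (0.9 d_0)^2 + 2\|\be\|^2$ gives
\[
\tF(\bh,\bx) \;\geq\; G_i \;\geq\; \tfrac{81}{121}\bigl((0.9)^2 d_0^2 + 2\|\be\|^2\bigr) \;>\; \tfrac{\eps^2 d_0^2}{3 s \kappa^2} + \|\be\|^2,
\]
the strict inequality holding since $\eps \leq 1/15$ makes $\eps^2/(3 s \kappa^2)$ much smaller than $\tfrac{81 \cdot 0.81}{121}\approx 0.54$, and since $\tfrac{162}{121} > 1$. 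This contradicts $(\bh, \bx) \in \KF$. The case $(\bh, \bx) \notin \Kmu$ is handled identically using the third summand of $G_i$, where the same threshold $20/11$ arises from the definition of $\Kmu$ and the bound on $d_i$.

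For the second inclusion, any $(\bh, \bx) \in \KF \cap \Keps$ already lies in $\Kd \cap \Kmu$ by what was just proved, hence in $\Kint$, so~\eqref{eq:LUBD} is available. Since $G \geq 0$, membership in $\KF$ yields $F(\bh, \bx) \leq \eps^2 d_0^2/(3 s \kappa^2) + \|\be\|^2$; combining with the lower bound in~\eqref{eq:LUBD} and cancelling $\|\be\|^2$ gives
\[
\tfrac{2}{3}\delta^2 - \tfrac{\eps\, \delta}{5\sqrt{s}\kappa} \;\leq\; \tfrac{\eps^2}{3 s \kappa^2}.
\]
Substituting $u := \delta \sqrt{s}\kappa/\eps$ reduces this to $10 u^2 - 3u - 5 \leq 0$, whose positive root is $(3 + \sqrt{209})/20 < 9/10$. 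Hence $\delta \leq 9\eps/(10 \sqrt{s}\kappa)$, which by the remark preceding the lemma forces $\delta_i \leq 9\eps/10$ for every $i$, i.e.\ $(\bh, \bx) \in \MN_{9\eps/10}$.

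The main obstacle is arithmetic bookkeeping rather than conceptual: one must check that the constants $(9/11)^2$, $(0.9)^2$, and $\eps \leq 1/15$ cooperate to produce the strict inequality in the first inclusion, and that $(3 + \sqrt{209})/20 < 9/10$ in the second. Conceptually, the proof explains why the sub-level value $\eps^2 d_0^2/(3 s \kappa^2)$ in the definition of $\KF$ is calibrated just right: small enough that the regularizer alone traps iterates inside $\Kd \cap \Kmu$, and small enough that the RIP lower bound in~\eqref{eq:LUBD} converts the level bound into an $O(\eps/(\sqrt{s}\kappa))$ bound on $\delta$, which in turn is the source of the $s^2$ scaling in $L$ advertised in Section~\ref{s:outline}.
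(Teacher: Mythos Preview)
Your proposal is correct and follows essentially the same approach as the paper's proof: both argue the first inclusion contrapositively by showing that violating $\Kd\cap\Kmu$ forces one term of the regularizer to exceed $\rho G_0(2d_{i0}/d_i)\geq (9/11)^2(d^2+2\|\be\|^2)$, and both obtain the second inclusion by applying the lower bound in~\eqref{eq:LUBD} (valid since $\KF\cap\Keps\subset\Kint$) and solving the resulting quadratic in $\delta$. Your explicit verification of the quadratic root $(3+\sqrt{209})/20<9/10$ and the arithmetic for the first part are accurate and simply make explicit what the paper asserts.
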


\begin{proof}
If $(\vct{h}, \vct{x}) \notin \Kd \cap \Kmu$, by the definition of $G$ in~\eqref{def:G}, at least one component in $G$ exceeds $\rho G_0\left(\frac{2d_{i0}}{d_i}\right)$. We have 
\begin{eqnarray*}
\tF(\bh, \bx) & \geq & \rho G_0\left(\frac{2d_{i0}}{d_i}\right) \geq  (d^2 + 2\|\be\|^2) \left( \frac{2d_{i0}}{d_i} - 1\right)^2 \\
& \geq & (2/1.1 - 1)^2 (d^2 + 2\|\be\|^2) \\
& \geq & \frac{1}{2} d_{0}^2 + \|\be\|^2 > \frac{\eps^2 d_0^2}{3s \kappa^2} + \|\be\|^2,
\end{eqnarray*}
where $\rho \geq d^2 + 2\|\be\|^2$, $0.9d_0 \leq d \leq 1.1d_0$ and $0.9d_{i0} \leq d_i\leq 1.1d_{i0}.$
This implies $(\bh, \bx) \notin \KF$ and hence $\KF \subset \Kd \cap \Kmu$. \\

Note that $(\bh, \bx)\in \Kint$ if $(\bh, \bx) \in \KF \cap \Keps$. Applying~\eqref{eq:LUBD} gives
\begin{equation*}
\frac{2}{3}\delta^2d_0^2 -\frac{\eps\delta d_0^2}{5\sqrt{s}\kappa} + \|\be\|^2 \leq F(\bh, \bx)\leq \tF(\bh, \bx)\leq\frac{\eps^2 d_0^2}{3s\kappa^2} +  \|\be\|^2
\end{equation*}
which implies that $\delta \leq \frac{9}{10}\frac{\eps}{\sqrt{s}\kappa}.$ 
By definition of $\delta$ in~\eqref{def:delta}, there holds
\begin{equation}\label{eq:bottleneck}
\frac{81\eps^2}{100s\kappa^2} \geq \delta^2 = \frac{\sum_{i=1}^s \delta_i^2d_{i0}^2}{\sum_{i=1}^s d_{i0}^2} \geq \frac{\sum_{i=1}^s \delta_i^2}{s\kappa^2} \geq \frac{1}{s\kappa^2} \max_{1\leq i\leq s}\delta_i^2,
\end{equation}
which gives $\delta_i \leq \frac{9}{10}\eps$ and $(\bh, \bx)\in \MN_{\frac{9}{10}\eps}.$
\end{proof}
\begin{remark}
The $s^2$-bottleneck comes from~\eqref{eq:bottleneck}. If $\delta \leq \eps$ is small, we cannot guarantee that each $\delta_i$ is also smaller than $\eps$. Just consider the simplest case when all $d_{i0}$ are the same: then $d_0^2 = \sum_{i=1}^s d_{i0}^2 = s d_{i0}^2$ and there holds
\begin{equation*}
\eps^2\geq \delta^2 = \frac{1}{s}\sum_{i=1}^s \delta_i^2.
\end{equation*}
Obviously, we cannot conclude that $\max \delta_i \leq \eps$ but only say that  $\delta_i \leq \sqrt{s}\eps.$ This is why we require $\delta ={\cal O}(\frac{\eps}{\sqrt{s}})$ to ensure $\delta_i \leq \eps$, which gives $s^2$-dependence in $L.$

\end{remark}


\begin{lemma}
\label{lem:line_section}
Denote $\vct{z}_1 = (\bh_1, \bx_1)$ and $\vct{z}_2 = (\bh_2, \bx_2)$. Let $\vct{z}(\lambda):=(1-\lambda)\vct{z}_1 + \lambda \vct{z}_2$. If $\vct{z}_1 \in \Keps$ and $\vct{z}(\lambda) \in \KF$ for all $\lambda \in [0, 1]$, we have $\vct{z}_2 \in \Keps$.
\end{lemma}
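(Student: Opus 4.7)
The plan is a continuity/intermediate value argument that exploits the strict slack built into Lemma~\ref{lem:betamu}: inside $\KF \cap \Keps$, each component error $\delta_i$ actually satisfies $\delta_i \le \tfrac{9}{10}\epsilon$, which is strictly smaller than the defining threshold $\epsilon$ of $\Keps$. This strict inequality is exactly what makes the argument go through.

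First I would define the scalar function $f(\lambda) := \max_{1 \le i \le s} \delta_i(\bh(\lambda), \bx(\lambda))$ along the segment $\bz(\lambda) = (1-\lambda)\bz_1 + \lambda \bz_2$. Since each $\delta_i(\bh_i, \bx_i) = \|\bh_i\bx_i^* - \bh_{i0}\bx_{i0}^*\|_F/d_{i0}$ is a continuous function of $(\bh_i, \bx_i)$, and the maximum of finitely many continuous functions is continuous, $f$ is continuous on $[0,1]$. Moreover, $\bz \in \Keps$ is equivalent to $f(\bz) \le \epsilon$.

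At $\lambda = 0$ we have $\bz(0) = \bz_1 \in \Keps$ and, by hypothesis, $\bz(0) \in \KF$, so $\bz(0) \in \KF \cap \Keps$. Applying Lemma~\ref{lem:betamu} gives $f(0) \le \tfrac{9}{10}\epsilon < \epsilon$. Now I argue by contradiction: suppose $\bz_2 \notin \Keps$, i.e.\ $f(1) > \epsilon$. By the intermediate value theorem applied to the continuous function $f$, there exists $\lambda^* \in (0,1]$ with $f(\lambda^*) = \epsilon$. At this $\lambda^*$ we have $\bz(\lambda^*) \in \Keps$ (since $f(\lambda^*) \le \epsilon$) and, by hypothesis, $\bz(\lambda^*) \in \KF$. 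Hence $\bz(\lambda^*) \in \KF \cap \Keps$, and Lemma~\ref{lem:betamu} again forces $f(\lambda^*) \le \tfrac{9}{10}\epsilon$, contradicting $f(\lambda^*) = \epsilon$. Therefore $f(1) \le \epsilon$, i.e.\ $\bz_2 \in \Keps$.

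There is no real obstacle here beyond making the continuity argument precise; the substantive content is entirely in Lemma~\ref{lem:betamu}, which provides the quantitative gap $\tfrac{9}{10}\epsilon < \epsilon$ that prevents $f$ from reaching the boundary of $\Keps$ while remaining in $\KF$. One small point to keep in mind is that Lemma~\ref{lem:betamu} tacitly uses Conditions~\ref{cond:rip} and~\ref{cond:robust}, which are assumed to hold on the relevant set; this is consistent with the standing assumptions in this section, so no additional hypothesis is needed.
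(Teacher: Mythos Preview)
Your argument is correct and is essentially identical to the paper's own proof: both proceed by contradiction, use continuity of $\lambda \mapsto \max_i \delta_i(\bz(\lambda))$ to find a point on the segment where this maximum equals exactly $\epsilon$, and then invoke Lemma~\ref{lem:betamu} to force the strictly smaller bound $\tfrac{9}{10}\epsilon$, yielding the contradiction. Your write-up is slightly more explicit about the continuity and the intermediate value step, but the logic matches the paper's.
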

\begin{proof}
Note that for $\bz_1\in\Keps\cap \KF$, we have $\bz_1\in \Kd\cap\Kmu\cap\MN_{\frac{9}{10}\eps}$ which follows from the second part of Lemma~\ref{lem:betamu}. Now we prove $\bz_2\in\Keps$ by contradiction.  Let us suppose that  $\bz_2 \notin \Keps$ and $\bz_1 \in \Keps$. There exists $\vct{z}(\lambda_0):=(\vct{h}(\lambda_0), \vct{x}(\lambda_0)) \in \Keps$ for some $\lambda_0 \in [0, 1]$ such that $\max_{1\leq i\leq s}\frac{\|\bh_i\bx_i^* - \bh_{i0}\bx_{i0}^*\|_F}{d_{i0}} = \epsilon$. Therefore, $\vct{z}(\lambda_0) \in \KF\cap\Keps$ and \prettyref{lem:betamu} implies $\max_{1\leq i\leq s}\frac{\|\bh_i\bx_i^* - \bh_{i0}\bx_{i0}^*\|_F}{d_{i0}} \leq \frac{9}{10}\epsilon$, which contradicts $\max_{1\leq i\leq s}\frac{\|\bh_i\bx_i^* - \bh_{i0}\bx_{i0}^*\|_F}{d_{i0}} = \epsilon$.
\end{proof}

\begin{lemma}
\label{lem:induction}
Let the stepsize $\eta \leq \frac{1}{C_L}$, $\bz^{(t)} : = (\bu^{(t)}, \bv^{(t)})\in\CC^{s(K + N)}$ and $C_L$ be the Lipschitz constant of $\nabla\tF(\bz)$ over $\Kint$ in~\eqref{def:CL}. If $\bz^{(t)}\in \Keps \cap \KF$, we have $\bz^{(t+1)} \in \Keps \cap \KF$ and
\begin{equation}
\label{eq:decreasing2}
\tF(\bz^{(t+1)}) \leq  \tF(\bz^{(t)}) - \eta \|\nabla \tF(\bz^{(t)})\|^2
\end{equation}
where $\bz^{(t+1)} = \bz^{(t)} - \eta\nabla\tF(\bz^{(t)}).$
\end{lemma}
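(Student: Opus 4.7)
The plan is to couple local smoothness (Condition~\ref{cond:smooth}, read as Lipschitz continuity of $\nabla\tF$ on $\Kint$, as the induction lemma's statement indicates) with a continuity bootstrap, and then invoke Lemma~\ref{lem:line_section} to upgrade ``segment stays in $\KF$'' to ``endpoint lies in $\Keps$.'' Set $\bz(\lambda):=\bz^{(t)}-\lambda\eta\nabla\tF(\bz^{(t)})$ for $\lambda\in[0,1]$, so that $\bz(0)=\bz^{(t)}$ and $\bz(1)=\bz^{(t+1)}$. The driving identity is the standard integral-form descent calculation
\begin{equation*}
\tF(\bz(\tau))-\tF(\bz(0))=-\eta\int_0^\tau\langle\nabla\tF(\bz(\lambda)),\nabla\tF(\bz^{(t)})\rangle\,d\lambda,
\end{equation*}
which together with the Lipschitz bound $\|\nabla\tF(\bz(\lambda))-\nabla\tF(\bz^{(t)})\|\le C_L\lambda\eta\|\nabla\tF(\bz^{(t)})\|$ gives
\begin{equation*}
\tF(\bz(\tau))\le\tF(\bz^{(t)})-\eta\tau\bigl(1-\tfrac{C_L\eta\tau}{2}\bigr)\|\nabla\tF(\bz^{(t)})\|^2.
\end{equation*}
Provided this is valid for every $\tau\in[0,1]$, the choice $\tau=1$ and $\eta\le 1/C_L$ immediately yields~\eqref{eq:decreasing2} (up to the harmless factor $\tfrac12$, absorbable into the calibration of $C_L$), and furthermore $\tF(\bz(\tau))\le\tF(\bz^{(t)})\le\tfrac{\eps^2d_0^2}{3s\kappa^2}+\|\be\|^2$ throughout, so the entire segment lies in $\KF$.

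The only real subtlety, and the main obstacle, is that the Lipschitz bound is only guaranteed over $\Kint$, so a priori the integral calculation is legitimate only while $\bz(\lambda)\in\Kint$. I would discharge this via a bootstrap. Define
\begin{equation*}
T:=\sup\{\tau\in[0,1]:\bz(\lambda)\in\KF\cap\Keps\ \text{for all }\lambda\in[0,\tau]\}.
\end{equation*}
Since $\bz^{(t)}\in\KF\cap\Keps$ and both sets are closed, $T$ is well defined and, by continuity of $\bz(\cdot)$, strictly positive. On $[0,T]$ one has $\bz(\lambda)\in\KF\cap\Keps\subset\Kint$ by Lemma~\ref{lem:betamu}, so the descent calculation applies on $[0,T]$ and delivers $\tF(\bz(T))<\tF(\bz^{(t)})\le\tfrac{\eps^2d_0^2}{3s\kappa^2}+\|\be\|^2$ (strict whenever $\nabla\tF(\bz^{(t)})\ne\bzero$; the opposite case is trivial). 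Thus $\bz(T)$ lies in the interior of the sublevel set $\KF$. On the $\Keps$ side, applying Lemma~\ref{lem:line_section} to the subsegment $[0,T]$—which sits inside $\KF$ by construction—together with the second part of Lemma~\ref{lem:betamu} shows that $\bz(T)\in\MN_{\tfrac{9}{10}\eps}$, i.e.\ well inside $\Keps$. Continuity of $\bz(\cdot)$ and $\tF$ therefore forces $T=1$, for otherwise points just beyond $T$ would still lie in $\KF\cap\Keps$, contradicting maximality.

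Once $T=1$ is secured, we have simultaneously $\bz^{(t+1)}\in\KF$ and the descent inequality~\eqref{eq:decreasing2}; membership $\bz^{(t+1)}\in\Keps$ follows from one final application of Lemma~\ref{lem:line_section} with $\bz_1=\bz^{(t)}$, $\bz_2=\bz^{(t+1)}$. The structural point to emphasize is that the bootstrap succeeds only because the regularizer $G$ hard-wired into $\tF$ guarantees $\KF\cap\Keps\subset\Kint$ via Lemma~\ref{lem:betamu}: it is this containment that makes the local Lipschitz constant $C_L$ usable along the entire gradient step, converting the one-shot smoothness estimate into a safe descent guarantee that preserves the basin of attraction.
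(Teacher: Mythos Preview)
Your bootstrap is the same argument as the paper's, just parameterized differently: the paper tracks $\tau_{\max}:=\max\{\mu:\phi(\tau)\le\phi(0)\text{ for all }\tau\in[0,\mu]\}$ with $\phi(\tau)=\tF(\bz^{(t)}-\tau\nabla\tF(\bz^{(t)}))$ and derives a contradiction from $\phi(\tau_{\max})=\phi(0)$ via the descent lemma (Lemma~\ref{lem:DSL}), while you track membership in $\KF\cap\Keps$ directly and argue that the endpoint is interior; both routes feed on Lemmas~\ref{lem:betamu} and~\ref{lem:line_section} in exactly the same way.

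One point does need fixing. The claim that the stray factor $\tfrac12$ is ``absorbable into the calibration of $C_L$'' is not correct: no recalibration of $C_L$ turns $\eta(1-\tfrac{C_L\eta}{2})$ into $\eta$ under the constraint $\eta\le 1/C_L$. The actual resolution is that your integral identity is missing the Wirtinger factor of $2$: for a real-valued $\tF$ of complex argument one has $\frac{d}{d\lambda}\tF(\bz(\lambda))=2\Real\langle\nabla\tF(\bz(\lambda)),\bz'(\lambda)\rangle$, which is precisely what the paper's Lemma~\ref{lem:DSL} encodes. With that factor restored, the descent bound reads
\[
\tF(\bz(\tau))\le\tF(\bz^{(t)})-\eta\tau\bigl(2-C_L\eta\tau\bigr)\|\nabla\tF(\bz^{(t)})\|^2,
\]
and at $\tau=1$, $\eta\le 1/C_L$ gives~\eqref{eq:decreasing2} on the nose.
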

\begin{remark}
This lemma tells us that once $\bz^{(t)}\in\Keps\cap\KF$, the next iterate $\bz^{(t+1)} = \bz^{(t)} - \eta \nabla \tF(\bz^{(t)})$ is also inside $\Keps\cap\KF$ as long as the stepsize $\eta \leq \frac{1}{C_L}$. In other words, $\Keps\cap\KF$ is in fact a stronger version of the basin of attraction. Moreover, the objective function will decay sufficiently in each step as long as we can control the lower bound of the $\nabla \tF$, which is guaranteed by the Local Regularity Condition~\ref{cond:rip}.

\end{remark}
\begin{proof}
Let $\phi(\tau) := \tF(\bz^{(t)} - \tau \nabla \tF(\bz^{(t)}))$, $\phi(0) = \tF(\bz^{(t)})$ and consider the following quantity:
\begin{equation*}
\tau_{\max}: = \max \{\mu: \phi(\tau) \leq \tF(\bz^{(t)}), 0\leq\tau \leq \mu \},
\end{equation*}
where $\tau_{\max}$ is the largest stepsize such that the objective function $\tF(\bz)$ evaluated at any point over the whole line segment $\{\bz^{(t)} -\tau \tF(\bz^{(t)}), 0\leq \tau\leq \tau_{\max}\}$ is not greater than $\tF(\bz^{(t)})$. Now we will show $\tau_{\max} \geq \frac{1}{C_L}$. Obviously, if $\|\nabla\tF(\bz^{(t)})\| = 0$, it holds automatically. 

Consider $\|\nabla\tF(\bz^{(t)})\|\neq 0$ and assume $\tau_{\max} < \frac{1}{C_L}$. 
First note that, 
\begin{equation*}
\frac{\diff}{\diff \tau} \phi(\tau) < 0 \Longrightarrow\tau_{\max} > 0.
\end{equation*}
By the definition of $\tau_{\max}$, there holds $\phi(\tau_{\max}) = \phi(0)$ since $\phi(\tau)$ is a continuous function w.r.t. $\tau$. Lemma~\ref{lem:line_section} implies
\begin{equation*}
\{ \bz^{(t)} - \tau \nabla\tF(\bz^{(t)}), 0\leq \tau \leq \tau_{\max} \} \subseteq \Keps\cap\KF.
\end{equation*}
 Now we apply Lemma~\ref{lem:DSL}, the modified descent lemma, and obtain
\begin{equation*}
\tF(\bz^{(t)} - \tau_{\max}\nabla\tF(\bz^{(t)})) \leq \tF(\bz^{(t)}) - (2\tau_{\max} - C_L\tau_{\max}^2)\|\tF(\bz^{(t)})\|^2 \leq \tF(\bz^{(t)}) - \tau_{\max}\|\tF(\bz^{(t)})\|^2
\end{equation*}
where $C_L\tau_{\max} \leq 1.$ In other words, $\phi(\tau_{\max}) \leq \tF(\bz^{(t)} - \tau_{\max}\nabla\tF(\bz^{(t)})) < \tF(\bz^{(t)}) = \phi(0)$ contradicts $\phi(\tau_{\max}) = \phi(0)$. 

Therefore, we conclude that $\tau_{\max} \geq \frac{1}{C_L}$. For any $\eta \leq \frac{1}{C_L}$, Lemma~\ref{lem:line_section} implies
\begin{equation*}
\{ \bz^{(t)} - \tau \nabla\tF(\bz^{(t)}), 0\leq \tau \leq \eta \} \subseteq \Keps\cap\KF
\end{equation*}
and applying Lemma~\ref{lem:DSL} gives
\begin{equation*}
\tF(\bz^{(t)} - \eta \nabla\tF(\bz^{(t)})) \leq \tF(\bz^{(t)}) - (2\eta - C_L\eta^2)\|\tF(\bz^{(t)})\|^2 \leq \tF(\bz^{(t)}) - \eta\|\tF(\bz^{(t)})\|^2.
\end{equation*}
\end{proof}

\subsection{Proof of Theorem~\ref{thm:main}}
\label{s:mainthm}
Combining all the considerations above, we now prove Theorem~\ref{thm:main} to conclude this section.
\begin{proof}
The proof consists of three parts: 
\paragraph{Part I: Proof of $\bz^{(0)} : = (\bu^{(0)}, \bv^{(0)}) \in \Keps\cap\KF$.} From the assumption of Theorem~\ref{thm:main},
\begin{equation*}
\bz^{(0)} \in \frac{1}{\sqrt{3}}\Kd \bigcap \frac{1}{\sqrt{3}}\Kmu\cap \MN_{\frac{2\eps}{5\sqrt{s}\kappa}}.
\end{equation*}
First we show $G(\bu^{(0)}, \bv^{(0)}) = 0$: for $0\leq i\leq s$ and the definition of $\Kd$ and $\Kmu$,
\begin{equation*}
\frac{\|\bu^{(0)}_i\|^2}{2d_i} \leq \frac{2d_{i0}}{3d_i} < 1, \quad \frac{L|\bb_l^* \bu^{(0)}_i|^2}{8d_i\mu^2} \leq \frac{L}{8d_i\mu^2} \cdot\frac{16d_{i0}\mu^2}{3L} \leq \frac{2d_{i0}}{3d_i} < 1,
\end{equation*}
where $\|\bu^{(0)}_i\| \leq \frac{2\sqrt{d_{i0}}}{\sqrt{3}}$, $\sqrt{L}\|\BB\bu^{(0)}_i\|_{\infty} \leq \frac{4 \sqrt{d_{i0}}\mu}{\sqrt{3}}$ and $\frac{9}{10}d_{i0} \leq d_i\leq \frac{11}{10}d_{i0}.$ Therefore 
$$G_0\left( \frac{\|\bu^{(0)}_i\|^2}{2d_i}\right) = G_0\left( \frac{\|\bv^{(0)}_i\|^2}{2d_i}\right) = G_0\left(\frac{L|\bb_l^*\bu_i^{(0)}|^2}{8d_i\mu^2}\right) = 0$$ 
for all $1\leq l\leq L$ and $G(\bu^{(0)}, \bv^{(0)}) = 0.$ 

For $\bz^{(0)} = (\bu^{(0)}, \bv^{(0)})\in \MN_{\frac{2\eps}{5\sqrt{s}\kappa}}$, we have 
$\delta(\bz^{(0)}) := \frac{\sqrt{\sum_{i=1}^s \delta_i^2d_{i0}^2 }}{d_0} \leq \frac{2\eps}{5\sqrt{s}\kappa}.$
By~\eqref{eq:LUBD}, there holds $\delta(\bz^{(0)}) \leq \frac{2\eps}{5\sqrt{s}\kappa}$ and $G(\bu^{(0)}, \bv^{(0)}) = 0$,
\begin{equation*}
\tF(\bu^{(0)}, \bv^{(0)}) = F(\bu^{(0)}, \bv^{(0)}) \leq \|\be\|^2 + \frac{3}{2}\delta^2(\bz^{(0)})d_0^2 + \frac{\eps \delta(\bz^{(0)}) d_0^2}{5\sqrt{s}\kappa} \leq \|\be\|^2 + \frac{\eps^2 d_0^2}{3s\kappa^2}
\end{equation*}
and hence $\bz^{(0)} = (\bu^{(0)}, \bv^{(0)})\in \Keps\bigcap \KF.$

\paragraph{Part II: The linear convergence of the objective function $\tF(\bz^{(t)})$.}

Denote $\bz^{(t)} : = (\bu^{(t)}, \bv^{(t)}).$ Note that $\bz^{(0)}\in\Keps\cap\KF$,  Lemma~\ref{lem:induction} implies $\bz^{(t)}\in \Keps\cap\KF$ for all $t\geq 0$ by induction if  $\eta \leq \frac{1}{C_L}$. Moreover, combining Condition~\ref{cond:reg} with Lemma~\ref{lem:induction} leads to 
\begin{equation*}
\tF(\bz^{(t )}) \leq \tF(\bz^{(t-1)}) - \eta\omega \left[  \tF(\bz^{(t-1)})  - c \right]_+, \quad t\geq 1
\end{equation*}
with $c = \|\be\|^2 + a\|\A^*(\be)\|^2$ and $a = 2000s$. 
Therefore, by induction, we have
\begin{equation*}
\left[ \tF(\bz^{(t)}) - c\right]_+ \leq (1 - \eta\omega)  \left[  \tF(\bz^{(t-1)})  - c \right]_+ \leq  \left(1 - \eta\omega\right)^t \left[ \tF(\bz^{(0)}) - c\right]_+ \leq \frac{\eps^2 d_0^2}{3s\kappa^2} (1 - \eta\omega)^{t}  
\end{equation*}
where $\tF(\bz^{(0)}) \leq \frac{\eps^2d_0^2}{3s\kappa^2} + \|\be\|^2$ and $\left[ \tF(\bz^{(0)}) - c \right]_+ \leq   \left[ \frac{1}{3s\kappa^2}\eps^2 d_0^2 - a\|\A^*(\be)\|^2 \right]_+ \leq \frac{\eps^2 d_0^2}{3s\kappa^2}.$
Now we conclude that  $\left[ \tF(\bz^{(t)}) - c\right]_+$ converges to $0$ linearly. 

\paragraph{Part III: The linear convergence of the iterates $(\bu^{(t)}, \bv^{(t)})$.}
Denote 
\begin{equation*}
\delta(\bz^{(t)}) : = \frac{\|\CH(\bu^{(t)}, \bv^{(t)}) - \CH(\bh_0,\bx_0)\|_F}{d_0}.
\end{equation*}
Note that $\bz^{(t)}\in \Keps\cap\KF\subseteq \Kint$ and over $\Kint$, there holds
$F_0(\bz^{(t)}) \geq \frac{2}{3}\delta^2(\bz^{(t)})d_0^2$
which follows from Local RIP Condition in~\eqref{eq:rip} and $F_0(\bz^{(t)})$ defined in~\eqref{def:F0}. Moreover
\begin{eqnarray*}
\tF(\bz^{(t)}) - \|\be\|^2 & \geq & F_0(\bz^{(t)}) - 2\Real\left(\lag \A^*(\be), \CH(\bu^{(0)}, \bv^{(0)}) - \CH(\bh_0, \bx_0) \rag\right)  \\
& \geq & \frac{2}{3} \delta^2(\bz^{(t)})d_0^2 - 2\sqrt{2s}\|\A^*(\be)\| \delta(\bz^{(t)})d_0
\end{eqnarray*}
where $G(\bz^{(t)}) \geq 0$ and the second inequality follows from~\eqref{eq:cross}.
There holds
\begin{equation*}
\frac{2}{3} \delta^2(\bz^{(t)})d_0^2 - 2\sqrt{2s}\|\A^*(\be)\| \delta(\bz^{(t)})d_0  - a\|\A^*(\be)\|^2   \leq \left[ \tF(\bz^{(t)}) - c \right]_+ \leq \frac{ \eps^2 d_0^2}{3s\kappa^2}(1 - \eta\omega)^t 
\end{equation*}
and equivalently, 
\begin{equation*}
\left|\delta(\bz^{(t)})d_0 - \frac{3\sqrt{2}}{2} \|\A^*(\be)\| \right|^2 \leq  \frac{\eps^2 d_0^2}{2s\kappa^2} (1 - \eta\omega)^t  + \left(\frac{3}{2}a + \frac{9}{2}\right)\|\A^*(\be)\|^2.
\end{equation*}
Solving the inequality above for $\delta(\bz^{(t)})$, we have 
\begin{eqnarray}
\delta(\bz^{(t)}) d_0 
& \leq &  \frac{\eps d_0}{\sqrt{2s\kappa^2}}(1 - \eta\omega)^{t/2}   +\left(\frac{3\sqrt{2}}{2} + \sqrt{\frac{3}{2}a + \frac{9}{2}} \right)\|\A^*(\be)\| \nonumber \\
& \leq &  \frac{\eps d_0}{\sqrt{2s\kappa^2}}(1 - \eta\omega)^{t/2} + 60\sqrt{s} \|\A^*(\be)\| \label{eq:main-res-2}
\end{eqnarray}
where $a = 2000s.$
Let $d^{(t)} : = \sqrt{\sum_{i=1}^s \|\bu_i^{(t)}\|^2\|\bv_i^{(t)}\|^2 }$ for $t\in\ZZ_{\geq 0}.$
By~\eqref{eq:main-res-2} and triangle inequality, we immediately obtain
$|d^{(t)} - d_0| \leq  \frac{\eps d_0}{\sqrt{2s\kappa^2}}(1 - \eta\omega)^{t/2} + 60\sqrt{s} \|\A^*(\be)\|.$
\end{proof}

\section{Proof of the four conditions}
This section is devoted to proving the four key conditions introduced in Section~\ref{s:converge}. The~\emph{local smoothness condition} and the~\emph{robustness condition} are relatively less challenging to deal with. The more difficult part is to show the~\emph{local regularity condition} and the~\emph{local isometry property}.
The key to solve those problems is to understand how the vector-valued linear operator $\A$ in~\eqref{def:A} behaves on block-diagonal matrices, such as $\CH(\bh,\bx)$, $\CH(\bh_0,\bx_0)$ and $\CH(\bh,\bx) - \CH(\bh_0,\bx_0).$ In particular, when $s=1$, all those matrices become rank-1 matrices, which have been well discussed in our previous work~\cite{LLSW16}.

\vskip0.25cm
First of all, we define the linear subspace $T_i\subset\CC^{K\times N}$ along with its orthogonal complement for $1\leq i\leq s$ as
\begin{eqnarray}\label{def:T}
\begin{split}
T_i & :=  \{ \BZ_i\in\CC^{K\times N} : \BZ_i = \bh_{i0}\bv_i^* + \bu_i\bx_{i0}^*, \quad \bu_i\in\CC^K,\bv_i\in\CC^N \}, \\ 
\TB_i & :=  \left\{ \left(\I_K - \frac{\bh_{i0}\bh_{i0}^*}{d_{i0}}\right) \BZ_i \left(\I_N - \frac{\bx_{i0}\bx_{i0}^*}{d_{i0}}\right) :\BZ_i\in\CC^{K\times N} \right\}
\end{split}
\end{eqnarray}
where $\|\bh_{i0}\| = \|\bx_{i0}\| = \sqrt{d_{i0}}.$
In particular, $\bh_{i0}\bx_{i0}^* \in T_i$ for all $1\leq i\leq s$.

\vskip0.25cm
The proof also requires us to consider block-diagonal matrices whose $i$-th block belongs to $T_i$ (or $\TB_i$). Let $\BZ = \blkdiag(\BZ_1,\cdots,\BZ_s)\in\CC^{Ks\times Ns}$ be a block-diagonal matrix and say $\BZ\in T$ if
\begin{equation*}
T := \{ \text{blkdiag}(\{\BZ_i\}_{i=1}^s) | \BZ_i\in T_i \}
\end{equation*}
and $\BZ\in\TB$ if
\begin{equation*}
\TB := \{ \text{blkdiag}(\{\BZ_i\}_{i=1}^s) | \BZ_i\in \TB_i \}
\end{equation*}
where both $T$ and $\TB$ are subsets in $\CC^{Ks\times Ns}$ and $\CH(\bh_0,\bx_0)\in T.$

\vskip0.25cm
Now we take a closer look at a special case of block-diagonal matrices, i.e., $\CH(\bh, \bx)$ and calculate its projection onto $T$ and $\TB$ respectively and it suffices to consider $\PP_{T_i}(\bh_i\bx_i^*)$ and $\PP_{\TB_i}(\bh_i\bx_i^*)$.
For each block $\bh_i\bx_i^*$ and $1\leq i\leq s$, there are unique orthogonal decompositions
\begin{equation}\label{eq:orth}
\bh_i := \alpha_{i1} \bh_{i0} + \tilde{\bh}_i, \quad \bx := \alpha_{i2} \bx_{i0} + \tilde{\bx}_i,
\end{equation}
where $\bh_{i0} \perp \tilde{\bh}_i$ and $\bx_{i0} \perp \tilde{\bx}_i$. It is important to note that $\alpha_{i1} = \alpha_{i1}(\bh_i) = \frac{\lag \bh_{i0}, \bh_i\rag}{d_{i0}}$ and $\alpha_{i2} = \alpha_{i2}(\bx_i) =  \frac{\lag \bx_{i0}, \bx_i\rag}{d_{i0}}$  and thus $\alpha_{i1}$ and $\alpha_{i2}$ are functions of $\bh_i$ and $\bx_i$ respectively.
Immediately, we have the following matrix orthogonal decomposition for $\bh_i\bx_i^*$ onto $T_i$ and $\TB_i$,
\begin{equation}\label{eq:decomposition}
\bh_i\bx_i^* - \bh_{i0} \bx_{i0}^* = \underbrace{(\alpha_{i1} \overline{\alpha_{i2}} - 1)\bh_{i0}\bx_{i0}^* + \overline{\alpha_{i2}} \tilde{\bh}_i \bx_{i0}^* + \alpha_{i1} \bh_{i0} \tilde{\bx}_i^*}_{\text{belong to } T_i}  + \underbrace{\tilde{\bh}_i \tilde{\bx}_i^*}_{\text{belongs to }\TB_i}
\end{equation}
where the first three components are in $T_i$ while $\tilde{\bh}_i\tilde{\bx}_i^*\in T^{\bot}_i$.

\subsection{Key lemmata}

From the decomposition in~\eqref{eq:orth} and~\eqref{eq:decomposition}, we want to analyze how $\|\tilde{\bh}_i\|$, $\|\tilde{\bx}_i\|$, $\alpha_{i1}$ and $\alpha_{i2}$ depend on $\delta_i = \frac{\|\bh_i\bx_i^* - \bh_{i0}\bx_{i0}^*\|_F}{d_{i0}}$ if $\delta_i < 1$. The following lemma answers this question, which can be viewed as an application of singular value/vector perturbation theory~\cite{Wedin72} applied to rank-1 matrices. From the lemma below, we can see that if $\bh_i\bx_i^*$ is close to $\bh_{i0}\bx_{i0}^*$, then $\PP_{\TB_i}(\bh_i\bx_i^*)$ is in fact very small (of order ${\cal O}(\delta_i^2 d_{i0})$).
\begin{lemma}{\bf (Lemma 5.9 in~\cite{LLSW16})}
\label{lem:orth_decomp}
Recall that $\|\bh_{i0}\| = \|\bx_{i0}\| = \sqrt{d_{i0}}$. If $\delta_i := \frac{\|\bh_i\bx_i^* - \bh_{i0} \bx_{i0}^*\|_F}{d_{i0}}<1$, we have the following useful bounds
\[
|\alpha_{i1}|\leq \frac{\|\bh_i\|}{\|\bh_{i0}\|}, \quad |\alpha_{i1}\overline{\alpha_{i2}} - 1|\leq \delta_i,
\]
and
\[
\|\tilde{\bh}_{i}\| \leq \frac{\delta_i}{1 - \delta_i}\|\bh_i\|,\quad \|\tilde{\bx}_i\| \leq \frac{\delta_i}{1 - \delta_i}\|\bx_i\|,\quad \|\tilde{\bh}_i\| \|\tilde{\bx}_i\| \leq \frac{\delta_i^2}{2(1 - \delta_i)} d_{i0}.
\]
Moreover, if $\|\bh_i\| \leq 2\sqrt{d_{i0}}$ and $\sqrt{L}\|\mtx{B} \bh_i\|_\infty \leq 4\mu \sqrt{d_{i0}}$, i.e., $\bh_i\in\Kd\bigcap\Kmu$, we have $\sqrt{L}\|\mtx{B}_i \tilde{\bh}_i\|_\infty \leq 6 \mu \sqrt{d_{i0}}$.
\end{lemma}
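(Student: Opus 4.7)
The plan is to work directly with the orthogonal decomposition in~\eqref{eq:decomposition}, compute Frobenius norms by Pythagoras, and then chain the resulting inequalities. First I would verify the mutual Frobenius-orthogonality of the four summands in~\eqref{eq:decomposition}: pairwise inner products between the three $T_i$-summands vanish because $\bh_{i0}\perp\tilde\bh_i$ and $\bx_{i0}\perp\tilde\bx_i$, and the last summand $\tilde\bh_i\tilde\bx_i^*$ is orthogonal to everything in $T_i$ by the very definition of $T_i^\bot$. This gives the master identity
\begin{equation*}
\delta_i^2 d_{i0}^2 = |\alpha_{i1}\overline{\alpha_{i2}}-1|^2 d_{i0}^2 + |\alpha_{i2}|^2\|\tilde\bh_i\|^2 d_{i0} + |\alpha_{i1}|^2\|\tilde\bx_i\|^2 d_{i0} + \|\tilde\bh_i\|^2\|\tilde\bx_i\|^2.
\end{equation*}
From this single identity the first two claimed inequalities follow immediately: $|\alpha_{i1}|\leq \|\bh_i\|/\|\bh_{i0}\|$ is just Cauchy--Schwarz applied to $\alpha_{i1}=\langle\bh_{i0},\bh_i\rangle/d_{i0}$, while dropping all but the first squared term yields $|\alpha_{i1}\overline{\alpha_{i2}}-1|\leq\delta_i$.

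Next I would exploit the bound $|\alpha_{i1}\overline{\alpha_{i2}}-1|\leq\delta_i$ in reverse, to lower-bound $|\alpha_{i1}||\alpha_{i2}|\geq 1-\delta_i$ (this uses $\delta_i<1$, so that $|\alpha_{i1}\overline{\alpha_{i2}}|\neq 0$). Combining this with the Cauchy--Schwarz bound $|\alpha_{i1}|\leq \|\bh_i\|/\sqrt{d_{i0}}$ gives $|\alpha_{i2}|\geq (1-\delta_i)\sqrt{d_{i0}}/\|\bh_i\|$, and symmetrically $|\alpha_{i1}|\geq (1-\delta_i)\sqrt{d_{i0}}/\|\bx_i\|$. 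Dropping all but the second term in the master identity gives $|\alpha_{i2}|\|\tilde\bh_i\|\leq \delta_i\sqrt{d_{i0}}$; plugging in the lower bound on $|\alpha_{i2}|$ produces $\|\tilde\bh_i\|\leq \delta_i\|\bh_i\|/(1-\delta_i)$, and similarly for $\|\tilde\bx_i\|$.

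For the product bound $\|\tilde\bh_i\|\|\tilde\bx_i\|\leq \delta_i^2 d_{i0}/(2(1-\delta_i))$, the trick I would use is AM--GM on the second and third terms of the master identity: $|\alpha_{i2}|^2\|\tilde\bh_i\|^2 d_{i0}+|\alpha_{i1}|^2\|\tilde\bx_i\|^2 d_{i0}\geq 2|\alpha_{i1}||\alpha_{i2}|\|\tilde\bh_i\|\|\tilde\bx_i\|d_{i0}\geq 2(1-\delta_i)\|\tilde\bh_i\|\|\tilde\bx_i\|d_{i0}$. Dropping the first and fourth squared terms and rearranging yields the desired inequality; the factor of $2$ that makes it sharper than the crude bound $\|\tilde\bh_i\|\|\tilde\bx_i\|\leq \delta_i d_{i0}$ comes precisely from this AM--GM step.

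Finally, for the incoherence transfer, I would simply write $\tilde\bh_i=\bh_i-\alpha_{i1}\bh_{i0}$ and apply the triangle inequality in the $\ell_\infty$-norm on $\BB\tilde\bh_i$: $\sqrt L\|\BB\tilde\bh_i\|_\infty\leq \sqrt L\|\BB\bh_i\|_\infty+|\alpha_{i1}|\sqrt L\|\BB\bh_{i0}\|_\infty$. Using $\sqrt L\|\BB\bh_i\|_\infty\leq 4\mu\sqrt{d_{i0}}$, the first bound $|\alpha_{i1}|\leq \|\bh_i\|/\sqrt{d_{i0}}\leq 2$, and $\sqrt L\|\BB\bh_{i0}\|_\infty\leq \mu_h\sqrt{d_{i0}}\leq \mu\sqrt{d_{i0}}$ (because $\mu\geq\mu_h$) gives the stated $6\mu\sqrt{d_{i0}}$. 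The main conceptual step (and the only place real ingenuity is needed) is the AM--GM trick for the sharpened product bound; everything else is essentially bookkeeping on the Pythagorean identity produced by~\eqref{eq:decomposition}.
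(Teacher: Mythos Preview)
Your proof is correct in every step: the Pythagorean ``master identity'' is exactly what the orthogonal decomposition~\eqref{eq:decomposition} yields, and each of the claimed bounds follows from it by the moves you describe (Cauchy--Schwarz for $|\alpha_{i1}|$, term-dropping for $|\alpha_{i1}\overline{\alpha_{i2}}-1|$, the chain $|\alpha_{i1}||\alpha_{i2}|\geq 1-\delta_i$ for the $\|\tilde{\bh}_i\|,\|\tilde{\bx}_i\|$ bounds, AM--GM for the product bound, and the triangle inequality for the incoherence transfer). Note that the paper does not actually prove this lemma here; it is quoted verbatim as Lemma~5.9 of~\cite{LLSW16}, so there is no in-paper argument to compare against --- your write-up is precisely the natural proof one would expect in that reference.
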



Now we start to focus on several results related to the linear operator $\A$.
\begin{lemma}{\bf (Operator norm of $\A$).}\label{lem:A-UPBD}
For $\A$ defined in~\eqref{def:A}, there holds
\begin{equation}\label{eq:A-UPBD}
\|\A\| \leq \sqrt{s(N\log(NL/2) + (\gamma+\log s)\log L)}
\end{equation}
with probability at least $1 - L^{-\gamma}.$
\end{lemma}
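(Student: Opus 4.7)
The strategy is to reduce the spectral norm of the block operator $\A$ to the spectral norms of the single-user blocks $\A_i$, and then invoke a (known) single-user operator-norm bound together with a union bound adjusted for the extra factor of $s$.

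\textbf{Step 1 (reduction via the adjoint).} By duality $\|\A\|=\|\A^*\|$. From the definition~\eqref{def:A-adj}, the adjoint produces a block-diagonal matrix, so for every $\bz\in\CC^L$
\[
\|\A^*(\bz)\|_F^2 \;=\; \sum_{i=1}^s\|\A_i^*(\bz)\|_F^2 \;\leq\; s\,\max_{1\leq i\leq s}\|\A_i^*(\bz)\|_F^2.
\]
Taking the supremum over the unit sphere $\{\bz:\|\bz\|=1\}$ gives $\|\A\|^2\leq s\,\max_i\|\A_i\|^2$, which already accounts for the factor $s$ inside the square root in~\eqref{eq:A-UPBD}.

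\textbf{Step 2 (single-user bound).} For each fixed $i$, I would invoke the $s=1$ operator-norm estimate proved in the companion work~\cite{LLSW16} to conclude
\[
\|\A_i\|^2 \;\leq\; N\log(NL/2)+\gamma'\log L
\qquad\text{with probability}\geq 1-L^{-\gamma'}.
\]
If one prefers a self-contained derivation, the shortest route is to use $\A_i^*(\bz)=\BB^*\diag(\bz)\BA_i$ with $\|\BB^*\|=1$, which yields $\|\A_i^*(\bz)\|_F^2\leq\|\bz\|^2\max_l\|\ba_{il}\|^2$, and then apply the standard $\chi^2_{2N}$ concentration for $\|\ba_{il}\|^2$; a slightly tighter variant, obtained by an $\epsilon$-net over the Frobenius unit ball of $\CC^{K\times N}$ followed by a Hanson--Wright tail estimate, gives the $N\log(NL/2)$ scaling used in the statement.

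\textbf{Step 3 (union bound with $\gamma+\log s$).} Choosing $\gamma'=\gamma+\log s$ in the single-user bound and taking a union bound over $i=1,\dots,s$ yields failure probability at most
\[
s\,L^{-(\gamma+\log s)} \;=\; s\,L^{-\gamma}\,e^{-(\log s)(\log L)}\;\leq\; L^{-\gamma},
\]
valid as soon as $L\geq e$ (so that $e^{-(\log s)(\log L)}\leq e^{-\log s}=1/s$). Combining with Step~1 produces exactly~\eqref{eq:A-UPBD}.

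\textbf{Main obstacle.} The only substantive content lies in Step~2: obtaining the uniform tail bound for $\sup_{\|\BZ\|_F=1}\sum_l|\bb_l^*\BZ\ba_{il}|^2$ with the correct interaction between the dimension $N$ and the logarithmic factor. This is exactly the $s=1$ operator-norm statement from~\cite{LLSW16}, so once that is quoted the remaining pieces—duality, Cauchy--Schwarz across the $s$ blocks, and the probability rescaling $\gamma\mapsto\gamma+\log s$—are routine bookkeeping.
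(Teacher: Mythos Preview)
Your proposal is correct and follows essentially the same three-step structure as the paper: reduce $\|\A\|$ to $\sqrt{s}\max_i\|\A_i\|$, invoke the single-user operator-norm bound (the paper cites Lemma~1 in~\cite{RR12} rather than~\cite{LLSW16}), and union-bound with $\gamma'=\gamma+\log s$. The only cosmetic difference is that the paper carries out the reduction on the forward operator via the triangle inequality plus Cauchy--Schwarz, $\|\A(\BZ)\|\leq\sum_i\|\A_i\|\|\BZ_i\|_F\leq\sqrt{s}\,(\max_i\|\A_i\|)\,\|\BZ\|_F$, whereas you work on the adjoint; both arrive at the identical inequality.
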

\begin{proof}
Note that $\A_i(\BZ_i) : = \{\bb_l^*\BZ_i\ba_{il}\}_{l=1}^L$ in~\eqref{def:Ai}. Lemma 1 in~\cite{RR12} implies
\begin{equation*}
\|\A_i\| \leq \sqrt{N\log(NL/2) + \gamma'\log L}
\end{equation*}
with probability at least $1 - L^{-\gamma'}.$ By taking the union bound over $1\leq i\leq s$, 
\begin{equation*}
\max\|\A_i\| \leq \sqrt{N\log(NL/2) + (\gamma+ \log s)\log L}
\end{equation*}
with probability at least $1 - sL^{-\gamma-\log s} \geq 1 - L^{-\gamma}.$

\vskip0.25cm

For $\A$ defined in~\eqref{def:A}, applying the triangle inequality gives
\begin{align*}
\|\A(\BZ)\| & = \left\|\sum_{i=1}^s \A_i(\BZ_i)\right\| \leq \sum_{i=1}^s \|\A_i\|\|\BZ_i\|_F  \leq \max_{1\leq i\leq s} \|\A_i\| \sqrt{s \sum_{i=1}^s \|\BZ_i\|_F^2} = \sqrt{s}\max_{1\leq i\leq s} \|\A_i\|  \|\BZ\|_F
\end{align*}
where $\BZ = \blkdiag(\BZ_1,\cdots, \BZ_s)\in\CC^{Ks\times Ns}.$
Therefore,
\begin{equation*}
\|\A\| \leq \sqrt{s}\max_{1\leq i\leq s}\|\A_i\| \leq \sqrt{ s(N\log(NL/2) + (\gamma+\log s)\log L)}
\end{equation*}
with probability at least $1 - L^{-\gamma}$.
\end{proof}

\vskip0.25cm

\begin{lemma}{\bf (Restricted isometry property for $\A$ on $T$).}
\label{lem:ripu}
The linear operator $\A$ restricted on $T$ is well-conditioned, i.e., 
\begin{equation}\label{eq:RIP-AT}
\|\PP_T\A^*\A\PP_T - \PP_T\| \leq \frac{1}{10}
\end{equation}
where $\PP_T$ is the projection operator from $\CC^{Ks\times Ns}$ onto $T$, given $L \geq C_{\gamma}s^2 \max\{K, \mu_h^2 N\}\log^2L$ with probability at least $1 - L^{-\gamma}.$
\end{lemma}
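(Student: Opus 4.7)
The plan exploits the block structure of $\PP_T\A^*\A\PP_T$. Viewing the operator as acting on the direct sum $T = T_1 \oplus \cdots \oplus T_s$, its $(i,j)$-block is $M_{ij} := \PP_{T_i}\A_i^*\A_j\PP_{T_j} : T_j \to T_i$. Since $\PP_T$ is block-diagonal with entries $\PP_{T_i}$ and the operator in question is self-adjoint, the Schur test for block matrices gives
\[
\|\PP_T\A^*\A\PP_T - \PP_T\|
\;\leq\; \max_i\|M_{ii}-\PP_{T_i}\|
\;+\;(s-1)\max_{i\neq j}\|M_{ij}\|.
\]
It therefore suffices to prove $\|M_{ii}-\PP_{T_i}\|\leq \tfrac{1}{20}$ for every $i$ and $\|M_{ij}\|\leq \tfrac{1}{20s}$ for every $i\neq j$, both with the claimed high probability.

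For the diagonal blocks I would use that $\sum_l \bb_l\bb_l^* = \I_K$ and $\E[\ba_{il}\ba_{il}^*]=\I_N$ imply $\E[\A_i^*\A_i] = \mathrm{id}$, so $M_{ii}-\PP_{T_i}$ is a centred random operator on $T_i$. The bound $\|M_{ii}-\PP_{T_i}\|\leq 1/20$ is then exactly the restricted isometry statement proved for the single-user blind deconvolution problem in \cite{LLSW16} (adapting the line of argument of Ahmed-Recht-Romberg \cite{RR12}), which requires $L\geq C_\gamma \max\{K,\mu_h^2 N\}\log^2 L$. A union bound over $i=1,\dots,s$ costs only an extra $\log s$, absorbed into $C_\gamma$.

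The substantive step is the off-diagonal bound. For $i\neq j$ the Gaussian vectors $\{\ba_{il}\}_l$ and $\{\ba_{jl}\}_l$ are independent, so $\E[M_{ij}]=0$ and one may write
\[
M_{ij} \;=\; \sum_{l=1}^{L} X_l,\qquad
X_l(\BZ_j) \;=\; \PP_{T_i}\bigl(\bb_l\,(\bb_l^*\BZ_j\ba_{jl})\,\ba_{il}^*\bigr),
\]
as a sum of independent mean-zero random operators between the finite-dimensional subspaces $T_j$ and $T_i$. I would apply a matrix Bernstein inequality, after a standard truncation of $\ba_{il},\ba_{jl}$ at scale $\sqrt{\log L}$ to tame the tails of products of complex Gaussians. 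The peak bound on $\|X_l\|$ uses $\|\bb_l\|^2=K/L$ together with the incoherence inequality $|\bb_l^*\bh_{j0}|\leq \mu_h\sqrt{d_{j0}/L}$ (and its counterpart for index $i$), which is exactly how the $T_i,T_j$ structure enters. The variance parameter $\max\{\|\sum_l\E X_l X_l^*\|,\|\sum_l\E X_l^*X_l\|\}$ is computed directly from $\E[\ba_{jl}\ba_{jl}^*]=\I_N$ and the independence of $\ba_{il}$ from $\ba_{jl}$. The outcome is that, for each fixed pair $(i,j)$,
\[
\|M_{ij}\| \;\lesssim\; \sqrt{\tfrac{\max\{K,\mu_h^2 N\}\log^2 L}{L}}
\]
with probability at least $1-L^{-\gamma-2\log s}$. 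A union bound over the $s(s-1)$ pairs leaves this estimate intact. Demanding the right-hand side be at most $1/(20s)$ gives exactly $L\geq C_\gamma s^2\max\{K,\mu_h^2 N\}\log^2 L$, which is the sample complexity of the lemma and pinpoints where the $s^2$ factor originates.

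The main obstacle is the off-diagonal matrix Bernstein computation. Unlike the diagonal case, no single-user precursor can be cited directly; one must execute both the variance and peak bounds as operators between $T_i$ and $T_j$, with the incoherence $\mu_h^2$ and the identity $\|\bb_l\|^2 = K/L$ entering in several places, and handle the fact that the summands $\bb_l(\bb_l^*\BZ_j\ba_{jl})\ba_{il}^*$ involve products of independent complex Gaussians (heavier-tailed than sub-Gaussian), which forces the preliminary truncation step. Once these two estimates are in place, the block-triangle assembly above and the union bounds are purely mechanical.
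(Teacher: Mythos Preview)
Your approach is essentially identical to the paper's: both decompose $\PP_T\A^*\A\PP_T-\PP_T$ into the block operators $M_{ij}=\PP_{T_i}\A_i^*\A_j\PP_{T_j}$, bound the diagonal and off-diagonal blocks separately at level $O(1/s)$, and assemble via a block-norm inequality (you use the Schur/row-sum test, the paper bounds the quadratic form $\langle\BZ,(\PP_T\A^*\A\PP_T-\PP_T)\BZ\rangle$ directly---these are equivalent here). The only cosmetic difference is that the paper simply cites the block bounds $\|M_{ii}-\PP_{T_i}\|\le 1/(10s)$ and $\|M_{ij}\|\le 1/(10s)$ from Corollaries~5.3 and~5.8 of \cite{LS17b} rather than re-deriving them, whereas you sketch the matrix-Bernstein argument for the off-diagonal yourself; that sketch is exactly what those corollaries contain, and your observation that the diagonal block only needs a constant (not $1/s$) bound is a harmless sharpening since the off-diagonal still forces the $s^2$ in the sample complexity.
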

\begin{remark}
Here $\A\PP_{T}$ and $\PP_T\A^*$ are defined as
\begin{equation*}
\A\PP_T(\BZ) = \sum_{i=1}^s \A_i(\PP_{T_i}(\BZ_i)), \quad\PP_T\A^*(\bz) = \blkdiag( \PP_{T_1}(\A_1^*(\bz)), \cdots, \PP_{T_s}(\A_s^*(\bz)) )
\end{equation*}
respectively where $\BZ$ is a block-diagonal matrix and $\bz\in\CC^L.$
\end{remark}
As shown in the remark above, the proof of Lemma~\ref{lem:ripu} depends on the properties of both $\PP_{T_i}\A_i^*\A_i\PP_{T_i}$ and $\PP_{T_i}\A_i^*\A_j\PP_{T_j}$ for $i\neq j$. Fortunately, we have already proven related results in~\cite{LS17b} which are written as follows:
\begin{lemma}[\bf Inter-user incoherence, Corollary 5.3 and 5.8 in~\cite{LS17b}]
There hold
\begin{equation}\label{eq:AT1}
\|\PP_{T_i} \A_i^*\A_j\PP_{T_j}\|  \leq \frac{1}{10s},  \quad \forall i\neq j; \qquad \|\PP_{T_i} \A_i^*\A_i\PP_{T_i} - \PP_{T_i}\|  \leq \frac{1}{10s}, \quad\forall 1\leq i\leq s
\end{equation}
with probability at least $1 - L^{-\gamma+1}$ if $L\geq C_{\gamma}s^2\max\{K, \mu^2_hN\}\log^2L\log(s+1).$ 
\end{lemma}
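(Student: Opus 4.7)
The plan is to dualize each operator norm in \eqref{eq:AT1} into a bilinear form over the tangent spaces, apply a scalar concentration inequality for each fixed pair of test matrices, and upgrade to a uniform bound by a covering argument. Concretely,
\begin{equation*}
\|\PP_{T_i}\A_i^*\A_j\PP_{T_j}\| \;=\; \sup_{\BW\in T_i,\,\BZ\in T_j,\,\|\BW\|_F=\|\BZ\|_F=1}\; \Re \sum_{l=1}^L (\bb_l^*\BZ\ba_{jl})\,\overline{(\bb_l^*\BW\ba_{il})},
\end{equation*}
so for each admissible pair $(\BW,\BZ)$ the random quantity is a sum of $L$ independent scalars. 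For $i\ne j$, the independence of $\ba_{il}$ and $\ba_{jl}$ immediately gives zero mean; for $i=j$, direct computation using $\E(\ba_{il}\ba_{il}^*)=\I_N$ and $\sum_l\bb_l\bb_l^*=\I_K$ yields $\E\,\A_i^*\A_i(\BZ)=\BZ$, so after applying $\PP_{T_i}$ one studies a centered sum. The two cases proceed in parallel; only the centering differs.

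For the Bernstein step I would decompose $\BW=\bh_{i0}\bv^*+\bu\bx_{i0}^*\in T_i$ and $\BZ=\bh_{j0}\tilde{\bv}^*+\tilde{\bu}\bx_{j0}^*\in T_j$ with the obvious Frobenius constraints on $(\bu,\bv)$ and $(\tilde{\bu},\tilde{\bv})$, expand the scalar product $(\bb_l^*\BZ\ba_{jl})\overline{(\bb_l^*\BW\ba_{il})}$ into its four bilinear cross-terms, and bound each term using the two facts $|\bb_l^*\bh_{k0}|\le \mu_h\sqrt{d_{k0}/L}$ and $\|\bb_l\|^2=K/L$, combined with the usual sub-Gaussian tail estimates for $\ba_{il},\ba_{jl}$. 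After truncating the Gaussian magnitudes at a $\sqrt{\log L}$ level, a scalar Bernstein inequality yields deviation of order $\sqrt{\max\{K,\mu_h^2 N\}\log L/L}$ for each fixed $(\BW,\BZ)$: the $K$-contribution arises from summing $\|\bb_l\|^2$ over $l$, while the $\mu_h^2 N$-contribution arises from the $\ba_{il}$-factors paired with $\bb_l^*\bh_{k0}$.

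To upgrade to the uniform bound I would cover the Frobenius unit sphere of each $T_k$ by an $\varepsilon$-net of cardinality at most $(C/\varepsilon)^{2(K+N)}$ (the complex dimension of $T_k$ is $K+N-1$), use the Lipschitz behaviour of the bilinear form in $(\BW,\BZ)$ to bridge the net, and take a union bound over the $\mathcal{O}(s^2)$ ordered pairs $(i,j)$. Setting the target deviation $\eta = 1/(10s)$ forces the sample complexity $L\ge C_\gamma s^2\max\{K,\mu_h^2 N\}\log^2 L\,\log(s+1)$ stated in the lemma, and the failure probability collapses to $L^{-\gamma+1}$ after absorbing the $\log(s+1)$ union-bound factor into the exponent.

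The main obstacle is the careful bookkeeping of the four bilinear cross-terms in the decomposition above: of the sixteen resulting pieces in $(\bb_l^*\BZ\ba_{jl})\overline{(\bb_l^*\BW\ba_{il})}$, one must show that the $K$-dependence is carried only by summations of $\bb_l\bb_l^*$-type factors and the $\mu_h^2 N$-dependence only by the $|\bb_l^*\bh_{k0}|^2$-type factors, rather than by their product. This is what produces the tight $\max\{K,\mu_h^2 N\}$ scaling; it is precisely the calculation executed in Corollaries~5.3 and 5.8 of~\cite{LS17b}, which the present lemma invokes.
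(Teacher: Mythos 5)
There is a genuine quantitative gap in the net-plus-scalar-Bernstein route you propose, and it sits exactly where the lemma's scaling comes from. For a \emph{fixed} pair $(\BW,\BZ)$ with $\|\BW\|_F=\|\BZ\|_F=1$, the scalar sum $\sum_{l=1}^L(\bb_l^*\BZ\ba_{jl})\overline{(\bb_l^*\BW\ba_{il})}$ has variance $\sum_l\|\BW^*\bb_l\|^2\|\BZ^*\bb_l\|^2\leq \max_l\|\BW^*\bb_l\|^2\cdot\sum_l\|\BZ^*\bb_l\|^2\lesssim \max\{K,\mu_h^2\}/L$ (using $\sum_l\bb_l\bb_l^*=\I_K$, $\|\bb_l\|^2=K/L$ and $|\bb_l^*\bh_{i0}|\leq\mu_h\|\bh_{i0}\|/\sqrt{L}$), and a comparable sub-exponential scale; crucially \emph{no factor $N$ appears}, because $\ba_{il},\ba_{jl}$ enter only through the one-dimensional projections $\bb_l^*\BW\ba_{il}$ and $\bb_l^*\BZ\ba_{jl}$, so $\E\ba_{il}\ba_{il}^*=\I_N$ is never activated. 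Hence your claimed fixed-pair deviation $\sqrt{\max\{K,\mu_h^2N\}\log L/L}$ is not what scalar Bernstein delivers. In your scheme the dimension can only enter through the union bound over nets of cardinality $e^{c(K+N)}$, which forces you to run the tail bound at level $t\asymp(K+N)+\gamma\log L$; the resulting uniform deviation is of order $\sqrt{\max\{K,\mu_h^2\}(K+N)/L}$, and making it $\leq 1/(10s)$ requires $L\gtrsim s^2\max\{K,\mu_h^2\}(K+N)$ up to logs --- e.g. quadratic in $K$ when $K\asymp N$ and $\mu_h=O(1)$, strictly worse than the stated $L\geq C_\gamma s^2\max\{K,\mu_h^2N\}\log^2L\log(s+1)$. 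So the final sentence of your argument, where the net cost is "absorbed" into log factors, does not follow from the steps you outline.

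The paper itself does not reprove this lemma; it imports Corollaries 5.3 and 5.8 of~\cite{LS17b}, whose proofs apply the operator-valued (matrix) Bernstein inequality --- the same Theorem~\ref{thm:bern1} reproduced in this paper's appendix --- directly to the sum of independent centered random operators $\BZ\mapsto(\bb_l^*\PP_{T_j}(\BZ)\ba_{jl})\,\PP_{T_i}(\bb_l\ba_{il}^*)$ (and to $\PP_{T_i}\A_i^*\A_i\PP_{T_i}-\PP_{T_i}$ when $i=j$). In that argument the factor $N$ arises from $\E[\ba_{il}\ba_{il}^*]=\I_N$ in the variance term $\sum_l\E(\CZ_l\CZ_l^*)$, the factor $K$ from $\sum_l\bb_l\bb_l^*=\I_K$ in the adjoint variance term, and the ambient dimension enters only through $\log(K+N)$; a union bound over the $O(s^2)$ pairs $(i,j)$ at deviation level $1/(10s)$ then yields exactly the stated sample complexity and probability. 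If you wish to avoid operator Bernstein, a crude $\varepsilon$-net cannot be the uniformity device; you would need genuine chaining bounds for suprema of chaos processes, which is the alternative route taken in~\cite{JungKS17} and mentioned in Remark~\ref{rem:inc}.
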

Note that $\|\PP_{T_i} \A_i^*\A_j\PP_{T_j}\|  \leq \frac{1}{10s}$ holds because of independence between each individual random Gaussian matrix $\BA_i$.
In particular, if $s=1$, the inter-user incoherence $\|\PP_{T_i} \A_i^*\A_j\PP_{T_j}\|  \leq \frac{1}{10s}$ is not needed at all. With~\eqref{eq:AT1}, it is easy to prove Lemma~\ref{lem:ripu}. 
\begin{proof}[\bf Proof of Lemma~\ref{lem:ripu}]
For any block diagonal matrix $\BZ = \blkdiag(\BZ_1, \cdots,\BZ_s)\in\CC^{Ks\times Ns}$ and $\BZ_i\in\CC^{K\times N}$, 
\begin{align}
\lag \BZ, \PP_T\A^*\A\PP_T(\BZ) - \PP_T(\BZ)\rag 
& = \sum_{1\leq i,j\leq s} \lag \A_i\PP_{T_i}(\BZ_i), \A_j\PP_{T_j}(\BZ_j)\rag  - \|\PP_T(\BZ)\|_F^2 \nonumber \\
& =   \sum_{i=1}^s \lag \BZ_i, \PP_{T_i}\A_i^*\A_i\PP_{T_i}(\BZ_i) - \PP_{T_i}(\BZ_i)\rag +  \sum_{i\neq j} \lag \A_i\PP_{T_i}(\BZ_i), \A_j\PP_{T_j}(\BZ_j)\rag. \label{eq:TAAT2}
\end{align}
Using~\eqref{eq:AT1}, the following two inequalities hold,
\begin{align*}
|\lag \BZ_i, \PP_{T_i}\A_i^*\A_i\PP_{T_i}(\BZ_i) - \PP_{T_i}(\BZ_i)\rag| & \leq  \|\PP_{T_i}\A_i^*\A_i\PP_{T_i}  - \PP_{T_i} \| \|\BZ_i\|_F^2 \leq \frac{\|\BZ_i\|^2_F}{10s}, \\
|\lag \A_i\PP_{T_i}(\BZ_i), \A_j\PP_{T_j}(\BZ_j)\rag| & \leq \|\PP_{T_i}\A_i^*\A_j\PP_{T_j} \| \|\BZ_i\|_F\|\BZ_j\|_F \leq \frac{\|\BZ_i\|_F\|\BZ_j\|_F}{10s}. 
\end{align*}
After substituting both estimates into~\eqref{eq:TAAT2}, we have 
\begin{equation*}
|\lag \BZ, \PP_T\A^*\A\PP_T(\BZ) - \PP_T(\BZ)\rag| \leq \sum_{1\leq i, j\leq s} \frac{ \|\BZ_i\|_F\|\BZ_j\|_F }{10s} \leq \frac{1}{10s}\left(\sum_{i=1}^s \|\BZ_i\|_F\right)^2 \leq \frac{\|\BZ\|_F^2}{10}.
\end{equation*}
\end{proof}

Finally, we show how $\A$ behaves when applied to block-diagonal matrices $\BX = \CH(\bh,\bx)$. In particular, the calculations will be much simplified for the case $s=1$. 
\begin{lemma}{(\bf $\A$ restricted on block-diagonal matrices with rank-1 blocks).}
\label{lem:key}

\noindent Consider $\BX = \CH(\bh, \bx)$ and
\begin{equation}\label{eq:sigmamax}
\sigma^2_{\max}(\bh, \bx) := \max_{1\leq l\leq L} \sum_{i=1}^s |\bb^*_l\bh_i|^2 \|\bx_i\|^2.
\end{equation}
Conditioned on~\eqref{eq:A-UPBD}, we have
\begin{equation}\label{eq:AX-rank}
\|\A(\BX)\|^2  \leq \frac{4}{3} \|\BX\|_F^2+ 2  \sqrt{2s\|\BX\|_F^2 \sigma^2_{\max}(\bh, \bx)(K+N)\log L}  + 8s\sigma^2_{\max}(\bh, \bx)(K+N) \log L,
\end{equation}
uniformly for any $\bh\in\CC^{Ks}$ and $\bx\in\CC^{Ns}$ with probability  at least $1  - \frac{1}{\gamma}\exp(-s(K+N))$ if $L\geq C_{\gamma}s(K+N)\log L$. Here $ \|\BX\|_F^2= \|\CH(\bh, \bx)\|_F^2 = \sum_{i=1}^s \|\bh_i\|^2\|\bx_i\|^2.$
\end{lemma}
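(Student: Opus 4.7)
The plan is to establish a pointwise concentration inequality for a fixed pair $(\bh, \bx)$ and then globalize it via an $\epsilon$-net argument, using the operator norm bound~\eqref{eq:A-UPBD} to control the discretization error. Writing $\A(\BX)$ componentwise gives
\begin{equation*}
[\A(\BX)]_l = \sum_{i=1}^s (\bb_l^*\bh_i)(\bx_i^*\ba_{il}),
\end{equation*}
and since the Gaussian vectors $\{\ba_{il}\}_{i,l}$ are mutually independent, for a deterministic $(\bh,\bx)$ each $[\A(\BX)]_l$ is a centered complex Gaussian of variance $\sigma_l^2(\bh,\bx):= \sum_i |\bb_l^*\bh_i|^2\|\bx_i\|^2$, with independence across $l$. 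Two identities anchor the analysis: by $\sum_l \bb_l\bb_l^* = \I_K$, $\sum_l\sigma_l^2 = \sum_i\|\bh_i\|^2\|\bx_i\|^2 = \|\BX\|_F^2$, while $\sigma_l^2\leq\sigma_{\max}^2$ by definition. Hence $\|\A(\BX)\|^2 = \sum_l \sigma_l^2 |W_l|^2$ with i.i.d.\ unit-mean exponential weights $|W_l|^2$.

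Second, I would apply a Bernstein/Laurent--Massart tail bound for weighted sums of centered exponentials. With $a_l := \sigma_l^2$ one has $\|a\|_1 = \|\BX\|_F^2$, $\|a\|_\infty = \sigma_{\max}^2$, and $\|a\|_2^2 \leq \sigma_{\max}^2\|\BX\|_F^2$, so that for any $u > 0$ and any \emph{fixed} $(\bh,\bx)$,
\begin{equation*}
\mathbb{P}\Bigl(\|\A(\BX)\|^2 > \|\BX\|_F^2 + 2\sqrt{2u\sigma_{\max}^2\|\BX\|_F^2} + 8u\sigma_{\max}^2\Bigr) \leq e^{-u}.
\end{equation*}
Taking $u = s(K+N)\log L$ reproduces exactly the last two terms of~\eqref{eq:AX-rank} but with the coefficient of $\|\BX\|_F^2$ still equal to $1$, deliberately leaving slack $\tfrac{1}{3}\|\BX\|_F^2$ for the covering step.

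Third, every term in~\eqref{eq:AX-rank} is $4$-homogeneous in $(\bh,\bx)$, so it suffices to prove the inequality on the compact set $\MS := \{(\bh,\bx) : \|\bh_i\|\leq 1,\ \|\bx_i\|\leq 1,\ 1\leq i\leq s\}$ and then scale. Build an $\epsilon$-net $\MN\subset\MS$ of granularity $\epsilon = L^{-O(1)}$; standard volumetric estimates give $|\MN|\leq\exp(Cs(K+N)\log L)$. Union-bounding the pointwise estimate over $\MN$ succeeds with probability at least $1 - \tfrac{1}{\gamma}\exp(-s(K+N))$ once $L \geq C_\gamma s(K+N)\log L$. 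To pass from a net point $(\bh', \bx')$ to a nearby $(\bh, \bx) \in \MS$, I would use the bilinear identity
\begin{equation*}
\CH(\bh, \bx) - \CH(\bh', \bx') = \CH(\bh - \bh', \bx) + \CH(\bh', \bx - \bx'),
\end{equation*}
combined with $\|\A(\BZ)\|\leq \|\A\|\,\|\BZ\|_F$ where $\|\A\|$ is controlled by~\eqref{eq:A-UPBD}. For $\epsilon$ a small enough inverse polynomial in $L$, the resulting perturbation is absorbed by the slack $\tfrac{1}{3}\|\BX\|_F^2$, after which homogeneity extends the bound to all of $\CC^{Ks}\times\CC^{Ns}$.

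The main technical obstacle is the discretization of $\sigma_{\max}^2(\bh, \bx)$: unlike $\|\BX\|_F^2$, it is a maximum of $L$ quadratic forms in $(\bh, \bx)$ and is a priori more sensitive to perturbations than the Frobenius norm. I would handle this by verifying that the map $(\bh, \bx)\mapsto \sigma_{\max}^2$ is Lipschitz on $\MS$ with constant controlled by $\max_l\|\bb_l\|^2 \leq K/L$ and the unit bounds on $\|\bh_i\|, \|\bx_i\|$, so that for $\epsilon$ a sufficiently small polynomial in $1/L$ the value of $\sigma_{\max}^2$ at an arbitrary $(\bh, \bx)$ and at its nearest net point differ only by a multiplicative factor comfortably absorbed by the constants $2$ and $8$ in the target bound. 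This closes the uniform inequality claimed in the lemma.
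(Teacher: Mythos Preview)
Your pointwise concentration step and the plan to control the discretization of $\sigma_{\max}^2$ via its Lipschitz property are exactly what the paper does. The gap is in your normalization for the covering argument. Working on the product of unit balls together with only the global scaling $(\bh,\bx)\to(\lambda\bh,\lambda\bx)$ does not bound $\|\BX\|_F$ from below: the point $\|\bh_1\|=1$, $\|\bx_1\|=\delta$, all other blocks zero, lies in your set with $\|\BX\|_F=\delta$ arbitrarily small, and no global rescaling brings it to a point in the same set with $\|\BX\|_F$ of order one. For such a point with $\delta\lesssim\|\A\|^2\epsilon$ (where $\epsilon$ is your net granularity), the discretization error $\bigl|\,\|\A(\BX)\|^2-\|\A(\BX')\|^2\,\bigr| = O\!\bigl(\|\A\|^2\epsilon(\delta+\epsilon)\bigr)$ from the nearest net point is not dominated by your slack $\tfrac13\|\BX\|_F^2=\tfrac13\delta^2$, so the step ``the resulting perturbation is absorbed by the slack $\tfrac13\|\BX\|_F^2$'' fails as written.

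The paper resolves this with a second invariance you did not invoke: each rank-one block is unchanged under $(\bh_i,\bx_i)\mapsto(\lambda_i\bh_i,\lambda_i^{-1}\bx_i)$, and hence so is every term in~\eqref{eq:AX-rank}. Using this block-wise freedom, one first fixes $\|\bx_i\|=1$ for every $i$, which reduces $\bx$ to a product of unit spheres and forces $\|\BX\|_F^2=\sum_i\|\bh_i\|^2=\|\bh\|^2$; a single global rescaling then sets $\|\bh\|=1$. The net is taken on $\MS^{Ks-1}\times(\MS^{N-1})^s$, where $\|\BX\|_F\equiv 1$, so the slack is the constant $\tfrac13$ and comfortably absorbs a discretization error of size $O\!\bigl(s(K+N+\gamma)\log L\cdot\epsilon_0\bigr)$ once $\epsilon_0$ is an inverse polynomial in these quantities. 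With this normalization the rest of your sketch goes through; without it, the covering step does not close.
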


\begin{remark}
Here are a few more explanations and facts about $\sigma^2_{\max}(\bh,\bx)$.
Note that $\|\A(\BX)\|^2$ is the sum of $L$ sub-exponential~\footnote{For the definition and properties of sub-exponential random variables, the readers can find all relevant information in~\cite{Ver10}.} random variables, i.e., 
\begin{equation}\label{eq:AX}
\|\A(\BX)\|^2 = \sum_{l=1}^L \left|\sum_{i=1}^s \bb_l^*\bh_i \bx_i^*\ba_{il}\right|^2.
\end{equation}
Here $\sigma^2_{\max}(\bh, \bx)$ corresponds to the largest expectation of all those components in $\|\A(\BX)\|^2$. 

For $\sigma^2_{\max}(\bh, \bx)$, without loss of generality, we assume $\|\bx_i\| = 1$ for $1\leq i\leq s$ and let $\bh\in\CC^{Ks}$ be a unit vector, i.e., $\|\bh\|^2 = \sum_{i=1}^s \|\bh_i\|^2= 1$.  The bound
\begin{equation}\label{eq:sigma-lb}
\frac{1}{L} \leq \sigma^2_{\max}(\bh, \bx) \leq \frac{K}{L}
\end{equation}
follows from $L \sigma^2_{\max}(\bh, \bx) \geq \sum_{l=1}^L \sum_{i=1}^s |\bb_l^*\bh_i|^2 = \|\bh\|^2=1.$

Moreover, $\sigma_{\max}^2(\bh,\bx)$ and $\sigma_{\max}(\bh,\bx)$ are both Lipschitz functions w.r.t. $\bh.$ Now we want to determine their Lipschitz constants. First note that for $\|\bx_i\| = 1$, $\sigma_{\max}(\bh,\bx)$ equals
\begin{equation*}
\sigma_{\max}(\bh, \bx) =  \max_{1\leq l\leq L} \|(\I_s\otimes \bb_l^*)\bh\|
\end{equation*}
where $\otimes$ denotes Kronecker product. 
 Let $\bu\in\CC^{Ks}$ be another unit vector and we have
\begin{align}
|\sigma_{\max}(\bh, \bx) - \sigma_{\max}(\bu, \bx)| 
& = \left| \max_{1\leq l\leq L} \|(\I_s\otimes \bb_l^*)\bh  - \max_{1\leq l\leq L} \|(\I_s\otimes \bb_l^*)\bu\|  \right| \nonumber \\
& = \max_{1\leq l\leq L} \left| \|(\I_s\otimes \bb_l^*) \bh\| - \|(\I_s\otimes \bb_l^*) \bu\| \right| \nonumber \\
& \leq \max_{1\leq l\leq L} \|(\I_s\otimes \bb_l^*) (\bh - \bu)\| \leq \|\bh-\bu\| \label{eq:Lips-sigma}
\end{align}
where $\|\I_s\otimes \bb_l^*\| = \|\bb_l\| \sqrt{\frac{K}{L}} < 1.$ For $\sigma^2_{\max}(\bh,\bx),$
\begin{align}
|\sigma^2_{\max}(\bh, \bx) - \sigma^2_{\max}(\bu, \bx)| 
& \leq  (\sigma_{\max}(\bh, \bx) + \sigma_{\max}(\bu, \bx)) \cdot |\sigma_{\max}(\bh, \bx) - \sigma_{\max}(\bu, \bx)| \nonumber \\
& \leq \frac{2K}{L}\|\bh-\bu\| \leq  2\|\bh-\bu\|. \label{eq:Lips-sigmasq}
\end{align}
\end{remark}

\begin{proof}[{\bf Proof of Lemma~\ref{lem:key}}]
Without loss of generality, let  $\|\bx_i\| = 1$ and $\sum_{i=1}^s \|\bh_i\|^2 = 1$. It suffices to prove $f(\bh, \bx) \leq \frac{4}{3}$ for all $(\bh, \bx)\in\CC^{Ks}\times\CC^{Ns}$ in~\eqref{def:hx}  where $f(\bh, \bx)$ is defined as 
\begin{equation*}
f(\bh, \bx) := \|\A(\BX)\|^2 - 2  \sqrt{2s \sigma^2_{\max}(\bh, \bx)(K+N)\log L} - 8s\sigma^2_{\max}(\bh, \bx)(K+N) \log L.
\end{equation*}
\paragraph{Part I: Bounds of $\|\A(\BX)\|^2$ for any fixed $(\bh,\bx)$.}
From~\eqref{eq:AX}, we already know that $Y = \|\A(\BX)\|_F^2 = \sum_{i=1}^{2L} c_i\xi_i^2$ where $\{\xi_i\}$ are i.i.d. $\chi^2_1$ random variables and $\bc = (c_1, \cdots, c_{2L})^T\in \RR^{2L}$. More precisely, we can determine $\{c_i\}_{i=1}^{2L}$ as 
\begin{equation*}
\left| \sum_{i=1}^s \bb_l^*\bh_i\bx^*\ba_{il}\right|^2 = c_{2l-1} \xi_{2l-1}^2 + c_{2l}\xi_{2l}^2,\quad c_{2l-1} = c_{2l} = \frac{1}{2}\sum_{i=1}^s |\bb_l^*\bh_i|^2
\end{equation*}
because $\sum_{i=1}^s \bb^*_l\bh_i  \bx_i^*\ba_{il} \sim \mathcal{C}\mathcal{N}\left(0, \sum_{i=1}^s |\bb^*_l \bh_i|^2\right)$.

By the Bernstein inequality, there holds
\begin{equation}\label{ineq:bern}
\mathbb{P}(Y - \mathbb{E}(Y) \geq t) \leq \exp\left(- \frac{t^2}{8\|\vct{c}\|^2}\right) \vee \exp\left(- \frac{t}{8\|\vct{c}\|_\infty}\right)
\end{equation}
where $\E(Y) = \|\BX\|_F^2 = 1.$ In order to apply the Bernstein inequality, we need to estimate $\|\bc\|^2$ and $\|\bc\|_{\infty}$ as follows,
\begin{align*}
\|\bc\|_{\infty} & =  \frac{1}{2}\max_{1\leq l\leq L}\sum_{i=1}^s|\bb^*_l\bh_i|^2 = \frac{1}{2} \sigma^2_{\max}(\bh, \bx), \\
\|\bc\|_2^2 & =  \frac{1}{2}\sum_{l=1}^L \left|\sum_{i=1}^s|\bb^*_l\bh_i|^2 \right|^2 
 \leq \frac{1}{2}\left(  \sum_{i=1}^s\sum_{l=1}^L|\bb^*_l\bh_i|^2 \right)\max_{1\leq l\leq L}\sum_{i=1}^s|\bb^*_l\bh_i|^2  \leq \frac{1}{2} \sigma^2_{\max}(\bh, \bx).
\end{align*}
Applying~\eqref{ineq:bern} gives
\begin{equation*}
\mathbb{P}( \|\A(\BX)\|^2 \geq 1 + t)\leq  \exp\left(- \frac{t^2}{4 \sigma^2_{\max}(\bh, \bx)}\right) \vee \exp\left(- \frac{t}{4\sigma^2_{\max}(\bh, \bx)}\right).
\end{equation*}
In particular, by setting 
\begin{equation*}
t = g(\bh,\bx):= 2 \sqrt{2 s\sigma^2_{\max}(\bh, \bx)(K+N)\log L} + 8s\sigma^2_{\max}(\bh, \bx)(K + N)\log L,
\end{equation*}
we have
\begin{equation*}
\mathbb{P}\left(\|\A(\BX)\|^2 \geq 1 + g(\bh,\bx)\right) \leq e^{ - 2 s(K+N)(\log L)}.
\end{equation*}
So far, we have shown that $f(\bh, \bx) \leq 1$ with probability at least $1 - e^{- 2 s(K+N)(\log L)}$ for a fixed pair of $(\bh, \bx).$ 

\paragraph{Part II: Covering argument.}
Now we will use a covering argument to extend this result for all $(\bh, \bx)$ and thus prove that $f(\bh, \bx)\leq \frac{4}{3}$ uniformly for all $(\bh, \bx)$.

We start with defining $\mathcal{K}$ and $\mathcal{N}_i$ as $\epsilon_0$-nets of $\MS^{Ks-1}$ and $\MS^{N-1}$ for $\bh$ and $\bx_i,1\leq i\leq s$, respectively. The bounds $|\mathcal{K}|\leq (1+\frac{2}{\epsilon_0})^{2sK}$ and $|\mathcal{N}_i|\leq (1+\frac{2}{\epsilon_0})^{2N}$ follow from the covering numbers of the sphere (Lemma 5.2 in~\cite{Ver10}). Here we let $\mathcal{N} := \mathcal{N}_1\times \cdots \times  \mathcal{N}_s.$
By taking the union bound over $\mathcal{K}\times \mathcal{N},$ we have that $f(\bh, \bx)\leq 1$ holds uniformly for all $(\bh, \bx) \in \mathcal{K} \times \mathcal{N}$ with probability at least 
\begin{equation*}
1- \left(1+ 2/\epsilon_0\right)^{2s(K + N)} e^{ - 2s(K+N)\log L } = 1- e^{-2s(K + N)\left(\log L - \log \left(1 + 2/\eps_0\right)\right)}.
\end{equation*}
For any $(\bh, \bx) \in \MS^{Ks-1}\times \underbrace{\MS^{N-1}\times \cdots \times \MS^{N-1}}_{s \text{ times}}$, we can find a point $(\bu, \bv) \in \mathcal{K} \times \mathcal{N}$ satisfying $\| \bh - \bu \| \leq \eps_0$ and $\|\bx_i - \bv_i\| \leq \eps_0$ for all $1\leq i\leq s$. 
Conditioned on~\eqref{eq:A-UPBD}, we know that 
\begin{equation*}
\|\A\|^2\leq s(N\log(NL/2) + (\gamma + \log s)\log L) \leq s(N + \gamma + \log s)\log L.
\end{equation*}

Now we aim to evaluate $|f(\bh,\bx) - f(\bu,\bv)|$. 
First we consider $|f(\bu, \bx) - f(\bu, \bv)|$. Since $\sigma^2_{\max}(\bu, \bx) = \sigma^2_{\max}(\bu,\bv)$ if $\|\bx_i\| = \|\bv_i\| = \|\bu\|=1$ for $1\leq i\leq s$, we have 
\begin{eqnarray*}
|f(\bu, \bx) - f(\bu, \bv)| & = &  \left|\left\| \A(\CH(\bu, \bx))\right\|_F^2 - \left\| \A(\CH(\bu,\bv)) \right\|_F^2 \right| \\
& \leq & \left\| \A(\CH(\bu, \bx - \bv))\right\| \cdot \left\| \A(\CH(\bu, \bx + \bv))\right\| \\
& \leq & \|\A\|^2 \sqrt{\sum_{i=1}^s \|\bu_i\|^2\|\bx_i - \bv_i\|^2} \sqrt{\sum_{i=1}^s \|\bu_i\|^2\|\bx_i + \bv_i\|^2} \\
& \leq & 2\|\A\|^2 \eps_0 \leq 2s(N + \gamma + \log s)(\log L)\eps_0
\end{eqnarray*}
where the first inequality is due to $||z_1|^2 - |z_2|^2| \leq |z_1 - z_2||z_1 + z_2|$ for any $z_1, z_2 \in \mathbb{C}$.

We proceed to estimate  $|f(\bh, \bx) - f(\bu, \bx)|$ by using~\eqref{eq:Lips-sigmasq} and~\eqref{eq:Lips-sigma}, 
\begin{eqnarray*}
 | f(\bh, \bx) - f(\bu, \bx)| & \leq  & J_1 + J_2 + J_3 \\
& \leq &  (2\|\A\|^2 + 2\sqrt{2s(K+N)\log L}+ 16s(K+N) \log L) \eps_0\\
& \leq  & 25s(K +N + \gamma + \log s)(\log L) \eps_0
\end{eqnarray*}
where~\eqref{eq:Lips-sigmasq} and~\eqref{eq:Lips-sigma} give
\begin{eqnarray*}
J_1 & = &\left| \|\A(\CH(\bh,\bx))\|_F^2 - \|\A(\CH(\bu,\bx))\|_F^2\right| \leq \left\|  \A( \CH(\bh - \bu,\bx) )\right\| \left\| \A( \CH(\bh + \bu,\bx) )\right\| \leq 2\|\A\|^2 \eps_0, \\
J_2 & = & 2  \sqrt{2s(K+N)\log L}\cdot |\sigma_{\max}(\bh, \bx) - \sigma_{\max}(\bu, \bx)| \leq 2  \sqrt{2s(K+N)\log L} \eps_0, \\
J_3 & = & 8s(K+N) (\log L) \cdot |\sigma^2_{\max}(\bh, \bx) - \sigma^2_{\max}(\bu, \bx)|  \leq 16s(K+N)(\log L) \eps_0.
\end{eqnarray*}

\vskip0.25cm
Therefore, if $\epsilon_0 = \frac{1}{81s(N + K + \gamma + \log s)\log L}$, there holds
\begin{equation*}
f(\bh,\bx) \leq f(\bu,\bv) + \underbrace{|f(\bu, \bx) - f(\bu, \bv)| + |f(\bh,\bx) -f(\bu,\bx) |}_{\leq 27s(K+N+\gamma + \log s)(\log L)\eps_0\leq \frac{1}{3}} \leq \frac{4}{3}
\end{equation*}
for all $(\bh,\bx)$ uniformly with probability  at least $1- e^{-2s(K + N)\left(\log L - \log \left(1 + 2/\eps_0\right)\right)}.$
By letting $L \geq C_{\gamma}s(K+N)\log L$ with $C_{\gamma}$ reasonably large and $\gamma \geq 1$, we have $\log L - \log\left(1 + 2/\eps_0\right) \geq \frac{1}{2}(1 + \log(\gamma))$ and
with probability at least $1 - \frac{1}{\gamma}\exp(-s(K+N))$.
\end{proof}

\subsection{Proof of the local restricted isometry property}
\label{s:LRIP}
\begin{lemma}
\label{lem:rip}
Conditioned on~\eqref{eq:RIP-AT} and~\eqref{eq:AX-rank}, the following RIP type of property holds:
\begin{equation*}
\frac{2}{3} \|\BX - \BX_0\|_F^2 \leq \|\A(\BX - \BX_0)\|^2 \leq \frac{3}{2}\|\BX-\BX_0\|_F^2
\end{equation*}
uniformly for all $(\bh,\bx)\in \Kint$ with $\mu \geq \mu_h$ and $\epsilon \leq \frac{1}{15}$ if $L \geq C_{\gamma}\mu^2 s(K+N)\log^2 L$ for some numerical constant $C_{\gamma}$.
\end{lemma}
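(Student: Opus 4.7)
My plan is to decompose $\Delta := \BX - \BX_0$ into its tangent-space and normal-space components and then apply the two lemmata already in hand: the tangent-space RIP (Lemma~\ref{lem:ripu}) and the uniform quadratic bound on $\A$ applied to block-diagonal rank-one matrices (Lemma~\ref{lem:key}). Using the blockwise decomposition~\eqref{eq:decomposition}, I write $\Delta = \BU + \BV$, where $\BU := \PP_T(\Delta)\in T$ collects the three ``tangent'' terms $(\alpha_{i1}\overline{\alpha_{i2}}-1)\bh_{i0}\bx_{i0}^* + \overline{\alpha_{i2}}\tbh_i\bx_{i0}^* + \alpha_{i1}\bh_{i0}\tbx_i^*$, while $\BV := \PP_{\TB}(\Delta) = \blkdiag(\tbh_1\tbx_1^*,\ldots,\tbh_s\tbx_s^*) = \CH(\tbh,\tbx)$ is itself a block-diagonal matrix with rank-one blocks, so that Lemma~\ref{lem:key} applies directly. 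Expanding
\[
\|\A(\Delta)\|^2 = \|\A(\BU)\|^2 + 2\Real\lag \A(\BU), \A(\BV)\rag + \|\A(\BV)\|^2
\]
and using the orthogonality $\|\BU\|_F^2 + \|\BV\|_F^2 = \|\Delta\|_F^2$ (since $T$ and $\TB$ are orthogonal complements blockwise), it suffices to control each summand and then glue them together.

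For the tangent piece, Lemma~\ref{lem:ripu} immediately gives $\tfrac{9}{10}\|\BU\|_F^2 \leq \|\A(\BU)\|^2 \leq \tfrac{11}{10}\|\BU\|_F^2$. For the normal piece I first invoke Lemma~\ref{lem:orth_decomp}, valid on $\Kint$ since $\delta_i\leq\eps\leq 1/15$, to conclude
\[
\|\BV\|_F^2 = \sum_{i=1}^s \|\tbh_i\|^2\|\tbx_i\|^2 \leq \sum_{i=1}^s \frac{\delta_i^4\,d_{i0}^2}{4(1-\delta_i)^2} \leq \frac{\eps^2}{4(1-\eps)^2}\,\|\Delta\|_F^2.
\]
Combining the pointwise incoherence bound $\sqrt{L}\,|\bb_l^*\tbh_i|\leq 6\mu\sqrt{d_{i0}}$ (which uses $\bh_i\in\Kmu$ and $\mu\geq\mu_h$) with $\|\tbx_i\|\leq \tfrac{2\delta_i}{1-\delta_i}\sqrt{d_{i0}}$ (which uses $\bx_i\in\Kd$), I then obtain
\[
\sigma^2_{\max}(\tbh,\tbx) = \max_{l}\sum_{i=1}^s |\bb_l^*\tbh_i|^2\|\tbx_i\|^2 \leq \frac{144\mu^2}{L(1-\eps)^2}\sum_{i=1}^s \delta_i^2 d_{i0}^2 = \frac{144\mu^2}{L(1-\eps)^2}\,\|\Delta\|_F^2.
\]
Substituting these two estimates into~\eqref{eq:AX-rank} and using the hypothesis $L\geq C_\gamma \mu^2 s(K+N)\log^2 L$ to make $\mu^2 s(K+N)\log L/L$ as small as desired, each of the three terms on the right-hand side of~\eqref{eq:AX-rank} is a small multiple of $\|\Delta\|_F^2$, and by choosing $C_\gamma$ sufficiently large one concludes $\|\A(\BV)\|^2 \leq \eta^2\|\Delta\|_F^2$ for any preassigned absolute constant $\eta$ (say $\eta\leq 1/30$).

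The final step is a Cauchy--Schwarz assembly: $|2\Real\lag\A(\BU),\A(\BV)\rag|\leq 2\sqrt{\tfrac{11}{10}}\,\|\BU\|_F\cdot \eta\|\Delta\|_F \leq 3\eta\|\Delta\|_F^2$, which combined with the two one-sided bounds on $\|\A(\BU)\|^2$ and $\|\A(\BV)\|^2$ and with $\|\BU\|_F^2\leq \|\Delta\|_F^2$ gives
\[
\Big(\tfrac{9}{10} - 3\eta - \tfrac{11}{10}\tfrac{\eps^2}{4(1-\eps)^2}\Big)\|\Delta\|_F^2 \leq \|\A(\Delta)\|^2 \leq \Big(\tfrac{11}{10} + 3\eta + \eta^2\Big)\|\Delta\|_F^2,
\]
so that for $\eps\leq 1/15$ and $\eta$ small (both arranged by the sample-size hypothesis) the stated interval $[\tfrac{2}{3},\tfrac{3}{2}]\|\Delta\|_F^2$ is recovered. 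The main obstacle, and the essential novelty beyond the tangent-space RIP, is the \emph{uniform} bound $\sigma^2_{\max}(\tbh,\tbx) = O(\mu^2\|\Delta\|_F^2/L)$ for all $(\bh,\bx)\in\Kint$. This is precisely where the incoherence parameter $\mu$ enters and why the sample-complexity hypothesis must scale with $\mu^2$ rather than being independent of it.
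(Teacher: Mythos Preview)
Your proposal is correct and follows essentially the same approach as the paper: decompose $\BX-\BX_0=\BU+\BV$ with $\BU\in T$ and $\BV=\CH(\tbh,\tbx)\in\TB$, control $\|\A(\BU)\|$ via Lemma~\ref{lem:ripu}, and control $\|\A(\BV)\|$ via Lemma~\ref{lem:key} together with the incoherence estimates from Lemma~\ref{lem:orth_decomp}. The only cosmetic difference is that the paper applies the triangle inequality $\bigl|\|\A(\BU)\|-\|\A(\BV)\|\bigr|\leq\|\A(\BU+\BV)\|\leq\|\A(\BU)\|+\|\A(\BV)\|$ directly (working in terms of $\delta d_0$), whereas you expand $\|\A(\Delta)\|^2$ and bound the cross term by Cauchy--Schwarz; the two routes are equivalent.
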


\begin{proof}
The main idea of the proof follows two steps: decompose $\BX-\BX_0$ onto $T$ and $\TB$, then apply~\eqref{eq:RIP-AT} and~\eqref{eq:AX-rank} to $\PP_T(\BX-\BX_0)$ and $\PP_{\TB}(\BX-\BX_0)$ respectively.

\vskip0.25cm
For any $\BX =\CH(\bh,\bx)\in\Keps$ with $\delta_i \leq \eps\leq \frac{1}{15}$, we can decompose $\BX - \BX_0$ as the sum of two block diagonal matrices $\BU = \blkdiag(\BU_i, 1\leq i\leq s)$ and $\BV = \blkdiag(\BV_i, 1\leq i\leq s)$ where each pair of $(\BU_i, \BV_i)$ corresponds to the orthogonal decomposition of $\bh_i\bx_i^* - \bh_{i0}\bx_{i0}^*$, 
\begin{equation}\label{eq:UV}
\bh_i\bx^*_i - \bh_{i0}\bx_{i0}^* :=
\underbrace{(\alpha_{i1} \overline{\alpha_{i2}} - 1)\bh_{i0}\bx_{i0}^* + \overline{\alpha_{i2}} \tilde{\bh}_i \bx_{i0}^* + \alpha_{i1} \bh_{i0}\tilde{\bx}_i^*}_{\BU_i\in T_i}  + \underbrace{ \tilde{\bh}_i \tilde{\bx}_i^*}_{\BV_i \in  T_i^\perp}
\end{equation}
which has been briefly discussed in~\eqref{eq:orth} and~\eqref{eq:decomposition}.
Note that
$\A(\BX - \BX_0) = \A(\BU + \BV)$ and 
\begin{equation*}
\|\A(\BU)\| - \|\A(\BV)\| \leq \|\A(\BU + \BV)\| \leq \|\A(\BU)\| + \|\A(\BV)\|.
\end{equation*}
Therefore, it suffices to have a two-side bound for $\|\A(\BU)\|$ and an upper bound for $\|\A(\BV)\|$ where $\BU\in T$ and $\BV\in \TB$ in order to establish the local isometry property.

\paragraph{Estimation of $\|\A(\BU)\|$:} For $\|\A(\BU)\|$, we know from~\prettyref{lem:ripu} that 
\begin{equation}\label{eq:AU1}
\sqrt{\frac{9}{10}}\|\BU\|_F\leq \|\A(\BU)\| \leq \sqrt{\frac{11}{10}}\|\BU\|_F
\end{equation}
and hence we only need to compute $\|\BU\|_F.$
By \prettyref{lem:orth_decomp}, there also hold $\|\BV_i\|_F \leq \frac{\delta_i^2}{2(1 - \delta_i)} d_{i0}$ and $\delta_i - \|\BV_i\|_F\leq \|\BU_i\|_F\leq \delta_i + \|\BV_i\|_F$, i.e., 
\begin{equation*}
\left(\delta_i - \frac{\delta_i^2}{2(1 - \delta_i)}\right)d_{i0} \leq \|\BU_i\|_F \leq \left(\delta_i + \frac{\delta_i^2}{2(1 - \delta_i)}\right)d_{i0}, \quad 1\leq i\leq s.
\end{equation*}
With $\|\BU\|_F^2 = \sum_{i=1}^s \|\BU_i\|_F^2$, 
it is easy to get $\delta d_0\left(1 - \frac{\eps}{2(1-\eps)}\right)
\leq \|\BU\|_F \leq \delta d_0 \left(1 + \frac{\eps}{2(1-\eps)}\right)$.
Combined with~\eqref{eq:AU1}, we get
\begin{equation}
\label{eq:AU}
\sqrt{\frac{9}{10}}\left(1 - \frac{\eps}{2(1-\eps)}\right)\delta d_0 \leq \|\A(\BU) \| \leq \sqrt{\frac{11}{10}}\left(1 + \frac{\eps}{2(1-\eps)}\right)\delta d_0.
\end{equation}
\paragraph{Estimation of $\|\A(\BV)\|$:}
Note that $\BV$ is a block-diagonal matrix with rank-1 block. So  applying \prettyref{lem:key} gives us
\begin{align}
\|\A(\BV)\|^2 
&\leq \frac{4}{3} \|\BV\|_F^2+ 2  \sqrt{2s\|\BV\|_F^2 \sigma^2_{\max}(\tilde{\bh}, \tilde{\bx})(K+N)\log L}  + 8s\sigma^2_{\max}(\tilde{\bh}, \tilde{\bx})(K+N) \log L \label{ineq:AV} 
\end{align}
where $\BV = \CH(\tilde{\bh}, \tilde{\bx})$ and $\tilde{\bh} = \begin{bmatrix}
\tilde{\bh}_1 \\
\vdots \\
\tilde{\bh}_s
\end{bmatrix}.
$
It suffices to get an estimation of $\|\BV\|_F$ and $\sigma^2_{\max}(\tilde{\bh},\tilde{\bx})$ to bound $\|\A(\BV)\|$ in~\eqref{ineq:AV}.

Lemma~\ref{lem:orth_decomp} says that $\|\tilde{\bh}_i\| \|\tilde{\bx}_i\| \leq \frac{\delta_i^2}{2(1 - \delta_i)} d_{i0} \leq \frac{\eps}{2(1-\eps)} \delta_i d_{i0}$ if $\eps < 1$. Moreover,
\begin{equation}\label{eq:orth-1}
\|\tilde{\bx}_i\| \leq \frac{\delta_i}{1 - \delta_i}\|\bx_i\| \leq \frac{2\delta_{i}}{1 - \delta_i} \sqrt{d_{i0}}, \quad \sqrt{L}\|\mtx{B} \tilde{\bh}_i \|_\infty \leq 6 \mu \sqrt{d_{i0}}, \quad 1\leq i\leq s
\end{equation}
if $(\bh,\bx)$ belongs to $\Kint.$
For $\|\BV\|_F$,
\begin{equation*}
\|\BV\|_F = \sqrt{\sum_{i=1}^s \|\BV_i\|_F^2} = \sqrt{\sum_{i=1}^s \|\tilde{\bh}_i\|^2 \|\tilde{\bx}_i\|^2} \leq \frac{\eps\delta d_0}{2(1-\eps)}.
\end{equation*}
Now we aim to get an upper bound for $\sigma^2_{\max}(\tilde{\bh}, \tilde{\bx})$ by using~\eqref{eq:orth-1},
\begin{equation*}
\sigma_{\max}^2(\tilde{\bh}, \tilde{\bx}) = \max_{1\leq l\leq L}\sum_{i=1}^s |\bb^*_l\tilde{\bh}_i|^2 \|\tilde{\bx}_i\|^2 \leq C_0\frac{\mu^2 \sum_{i=1}^s \delta_i^2 d_{i0}^2}{L} = C_0\frac{\mu^2\delta^2 d_0^2}{L}.
\end{equation*}
By substituting the estimations of $\|\BV\|_F$ and $\sigma^2_{\max}(\tilde{\bh}, \tilde{\bx})$ into~\eqref{ineq:AV}
\begin{equation}\label{eq:AV}
\|\A(\BV)\|^2 \leq \frac{\eps^2 \delta^2d_0^2}{3(1-\eps)^2} + \frac{\sqrt{2}\eps \delta^2 d_0^2}{1-\eps} \sqrt{\frac{C_0\mu^2 s (K+N)\log L}{L}} + \frac{8C_0 \mu^2 \delta^2d_0^2s(K+N)\log L}{L}.
\end{equation}
By letting $L \geq C_{\gamma }\mu^2 s(K + N)\log^2 L$ with $C_{\gamma} $ sufficiently large and combining \prettyref{eq:AV} and \prettyref{eq:AU}, we have
\begin{equation*}
\sqrt{\frac{2}{3}}\delta d_0 \leq \|\A(\BU)\| - \|\A(\BV)\|  \leq \|\A(\BU+\BV)\| \leq \|\A(\BU)\| + \|\A(\BV)\|  \leq \sqrt{\frac{3}{2}}\delta d_0,
\end{equation*}
which gives $\frac{2}{3}\|\BX - \BX_0\|_F^2 \leq \|\A(\BX - \BX_0)\|^2 \leq \frac{3}{2}\|\BX - \BX_0\|_F^2.$
\end{proof}

\subsection{Proof of the local regularity condition} 
\label{s:LRC}
We first introduce a few notations: for all $(\bh, \bx) \in \Kd \cap \Keps$, consider $\alpha_{i1}, \alpha_{i2}, \tilde{\bh}_i$ and $\tilde{\bx}_i$ defined in \prettyref{eq:orth} and define
\begin{equation*}
\Dh_i = \bh_i - \alpha_i \bh_{i0}, \quad \Dx_i = \bx_i - \overline{\alpha}_i^{-1}\bx_{i0}
\end{equation*}
where 
\begin{equation*}
\alpha_i (\bh_i, \bx_i)= 
\begin{cases} 
(1 - \delta_0)\alpha_{i1}, & \text{~if~} \|\bh_i\|_2 \geq \|\bx_i\|_2 \\ 
\frac{1}{(1 - \delta_0)\overline{\alpha_{i2}}}, & \text{~if~} \|\bh_i\|_2 < \|\bx_i\|_2\end{cases}
\end{equation*}
with 
\begin{equation}\label{def:delta0}
\delta_0 := \frac{\delta}{10}.
\end{equation} 

The function $\alpha_i(\bh_i,\bx_i)$ is defined for each block of $\BX = \CH(\bh, \bx).$
The particular form of  $\alpha_i(\bh, \bx)$ serves primarily for proving the  Lemma~\ref{lem:regG}, i.e., local regularity condition of $G(\bh, \bx)$. We also define
\begin{equation*}
\Dh : = 
\begin{bmatrix}
\bh_1 - \alpha_1 \bh_{1,0} \\
\vdots \\
\bh_s - \alpha_s \bh_{s0}
\end{bmatrix}\in\CC^{Ks}, \quad
\Dx : = 
\begin{bmatrix}
\bx_1 - \alpha_1 \bx_{1,0} \\
\vdots \\
\bx_s - \alpha_s \bx_{s0}
\end{bmatrix}\in\CC^{Ns}.
\end{equation*}
The following lemma gives bounds of $\Dx_i$ and $\Dh_i$.
\begin{lemma}
\label{lem:DxDh}
For all $(\bh,\bx) \in \Kd \cap \Keps$ with $\epsilon \leq \frac{1}{15}$, there hold 
\begin{align*}
\max\{ \|\Dh_i\|_2^2, \|\Dx_i\|_2^2\} & \leq (7.5\delta_i^2 + 2.88\delta_0^2) d_{i0}, \\
\|\Dh_i\|_2^2 \|\Dx_i\|_2^2 & \leq \frac{1}{26}(\delta_i^2 + \delta_0^2) d_{i0}^2.
\end{align*}
Moreover, if we assume $(\bh_i, \bx_i) \in \Kmu$ additionally, we have $ \sqrt{L}\|\BB(\Dh_i)\|_\infty \leq 6\mu\sqrt{d_{i0}}$.
\end{lemma}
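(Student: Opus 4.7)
I fix an index $i$ and split into two symmetric cases: (Case 1) $\|\bh_i\|\geq \|\bx_i\|$, so $\alpha_i=(1-\delta_0)\alpha_{i1}$, and (Case 2) $\|\bh_i\|<\|\bx_i\|$, so $\alpha_i=1/((1-\delta_0)\overline{\alpha_{i2}})$. Case 2 is obtained from Case 1 by swapping the roles of $(\bh,\bh_0)$ and $(\bx,\bx_0)$, so I will only describe Case 1 in detail. The starting point is the orthogonal decomposition~\eqref{eq:orth}, $\bh_i=\alpha_{i1}\bh_{i0}+\tilde{\bh}_i$ and $\bx_i=\alpha_{i2}\bx_{i0}+\tilde{\bx}_i$, together with the estimates of Lemma~\ref{lem:orth_decomp}.

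Substituting the decomposition into the definitions gives
\[
\Dh_i=\delta_0\alpha_{i1}\bh_{i0}+\tilde{\bh}_i,\qquad \Dx_i=\Big(\alpha_{i2}-\frac{1}{(1-\delta_0)\overline{\alpha_{i1}}}\Big)\bx_{i0}+\tilde{\bx}_i,
\]
and orthogonality of the two terms in each expression yields $\|\Dh_i\|^2=\delta_0^2|\alpha_{i1}|^2 d_{i0}+\|\tilde{\bh}_i\|^2$ and $\|\Dx_i\|^2=|\beta|^2 d_{i0}+\|\tilde{\bx}_i\|^2$ with $\beta:=\alpha_{i2}-((1-\delta_0)\overline{\alpha_{i1}})^{-1}$. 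For $\beta$ I use the identity $(1-\delta_0)\overline{\alpha_{i1}}\alpha_{i2}-1=(1-\delta_0)(\overline{\alpha_{i1}}\alpha_{i2}-1)-\delta_0$ and the bound $|\alpha_{i1}\overline{\alpha_{i2}}-1|\leq\delta_i$ to conclude $|(1-\delta_0)\overline{\alpha_{i1}}\alpha_{i2}-1|\leq(1-\delta_0)\delta_i+\delta_0$. A lower bound on $|\alpha_{i1}|$ comes from $|\alpha_{i1}\overline{\alpha_{i2}}|\geq 1-\delta_i$ together with $|\alpha_{i2}|\sqrt{d_{i0}}\leq\|\bx_i\|$.

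The key refinement for obtaining the stated constants is to note that in Case 1, since $\|\bh_i\|\geq\|\bx_i\|$, the elementary estimate $\|\bx_i\|^2\leq\|\bh_i\|\|\bx_i\|=\|\bh_i\bx_i^*\|_F\leq\|\bh_{i0}\bx_{i0}^*\|_F+\delta_i d_{i0}=(1+\delta_i)d_{i0}$ gives a much tighter bound on $\|\bx_i\|$ (and hence on $|\alpha_{i2}|$) than the crude $\|\bx_i\|\leq 2\sqrt{d_{i0}}$ from $\Kd$. Combining this refined bound with Lemma~\ref{lem:orth_decomp}, the crude bound $|\alpha_{i1}|\sqrt{d_{i0}}\leq\|\bh_i\|\leq 2\sqrt{d_{i0}}$ for the $\Dh_i$ estimate, and the hypothesis $\delta_i\leq\epsilon\leq 1/15$ (so $\delta_0=\delta/10\leq 1/150$), produces the bound $\max\{\|\Dh_i\|^2,\|\Dx_i\|^2\}\leq(7.5\delta_i^2+2.88\delta_0^2)d_{i0}$. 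For the product bound I multiply these estimates and observe that the resulting quadratic expression $(\delta_i^2+\delta_0^2)^2$ can be relaxed to $(\delta_i^2+\delta_0^2)\cdot 2\epsilon^2\leq(\delta_i^2+\delta_0^2)/100$, so with the constants coming out of the individual bounds one obtains $\|\Dh_i\|^2\|\Dx_i\|^2\leq\frac{1}{26}(\delta_i^2+\delta_0^2)d_{i0}^2$.

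For the $\ell_\infty$ statement, write $\BB\Dh_i=\BB\bh_i-\alpha_i\BB\bh_{i0}$ and apply the triangle inequality:
\[
\sqrt{L}\|\BB\Dh_i\|_\infty\leq \sqrt{L}\|\BB\bh_i\|_\infty+|\alpha_i|\sqrt{L}\|\BB\bh_{i0}\|_\infty\leq 4\mu\sqrt{d_{i0}}+|\alpha_i|\mu_h\sqrt{d_{i0}},
\]
using the $\Kmu$ hypothesis on $\bh_i$ and the definition of $\mu_h$ together with $\mu\geq\mu_h$. In Case 1, $|\alpha_i|=(1-\delta_0)|\alpha_{i1}|\leq \|\bh_i\|/\sqrt{d_{i0}}\leq 2$. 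In Case 2, $|\alpha_i|=1/((1-\delta_0)|\alpha_{i2}|)$ and, since $\|\bh_i\|<\|\bx_i\|$, the same refined trick gives $\|\bh_i\|\leq\sqrt{(1+\delta_i)d_{i0}}$, so $|\alpha_{i1}|\leq\sqrt{1+\delta_i}$ and then $|\alpha_i|\leq\sqrt{1+\delta_i}/((1-\delta_0)(1-\delta_i))\leq 2$ for the allowed range of $\delta_i,\delta_0$. In either case the total is at most $6\mu\sqrt{d_{i0}}$.

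\textbf{Main obstacle.} The proof is structurally straightforward; the real work is in tracking the explicit constants. In particular, the constants $7.5$, $2.88$, and $1/26$ are tight only once the Case 1 refinement $\|\bx_i\|^2\leq(1+\delta_i)d_{i0}$ (and its symmetric counterpart in Case 2) is used to tighten the smaller of the two norms, and the product bound requires exploiting $\delta_i,\delta_0\leq 1/15$ to convert a quadratic estimate into a linear one. Verifying these numerics carefully under the constraint $\epsilon\leq 1/15$ is the most delicate step, though it requires no new ideas beyond those above.
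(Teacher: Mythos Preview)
Your setup (case split, decomposition of $\Dh_i$ and $\Dx_i$, the identity for $\beta$, and the $\ell_\infty$ triangle-inequality argument) matches the paper's proof, and your refinement $\|\bx_i\|^2\le(1+\delta_i)d_{i0}$ in Case~1 is a valid route to the lower bound on $|\alpha_{i1}|$ (the paper instead uses $\|\bh_i\|^2\ge\|\bh_i\|\|\bx_i\|\ge(1-\eps)d_{i0}$ directly, but the two are equivalent in spirit). So for the individual bounds and for $\sqrt{L}\|\BB\Dh_i\|_\infty$, your plan is correct and essentially the same as the paper's.

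The gap is in the product bound. Multiplying the two individual estimates, as you propose, gives at best something like $(4.6\delta_i^2+4\delta_0^2)(7.5\delta_i^2+2.88\delta_0^2)d_{i0}^2$; even after using $\delta_i^2+\delta_0^2\le 2\eps^2\le 1/100$, the leading coefficient you get in front of $(\delta_i^2+\delta_0^2)d_{i0}^2$ is roughly $34.5/100\approx 0.34$, which is far larger than $1/26\approx 0.038$. The loss comes from the term $\|\tilde{\bh}_i\|^2\|\tilde{\bx}_i\|^2$: bounding $\|\tilde{\bh}_i\|$ and $\|\tilde{\bx}_i\|$ separately via Lemma~\ref{lem:orth_decomp} and multiplying gives $\approx\frac{4\delta_i^2}{(1-\delta_i)^2}d_{i0}$, whereas the \emph{joint} bound from that same lemma is $\|\tilde{\bh}_i\|\|\tilde{\bx}_i\|\le\frac{\delta_i^2}{2(1-\delta_i)}d_{i0}$, an improvement by a factor of about $8$ at $\delta_i=1/15$. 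The paper exploits this by expanding
\[
\|\Dh_i\|^2\|\Dx_i\|^2\;\le\;\|\tilde{\bh}_i\|^2\|\tilde{\bx}_i\|^2+\delta_0^2|\alpha_{i1}|^2 d_{i0}\|\Dx_i\|^2+|\beta|^2 d_{i0}\|\Dh_i\|^2
\]
via the orthogonal decompositions, applying the sharp joint bound to the first term, and controlling the remaining two cross terms with the individual estimates multiplied by the small prefactors $\delta_0^2$ and $|\beta|^2\le 1.44(\delta_i+\delta_0)^2$. That extra decomposition step is what makes the constant $1/26$ attainable; without it your argument does not close.
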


\begin{proof}
We only consider $\|\bh_i\|_2 \geq \|\bx_i\|_2$ and $\alpha_i = (1 - \delta_0) \alpha_{1i}$, and the other case is exactly the same due to the symmetry. For both $\Dh_i$ and $\Dx_i$, by definition, 
\begin{align}
\Dh_i & = \bh_i - \alpha_{i}\bh_{i0} =  \delta_0 \alpha_{i1} \bh_{i0} + \tilde{\bh}_i\label{def:Dhi},\\
\Dx_i & = \bx_i - \frac{1}{(1 - \delta_0)\overline{\alpha_i}_1} \bx_{i0} = \left(\alpha_{i2} - \frac{1}{(1 - \delta_0)\overline{\alpha}_{i1}}\right)\bx_{i0} + \tilde{\bx}_i \label{def:Dxi},
\end{align}
where $\bh_i = \alpha_{i1}\bh_{i0} + \tilde{\bh}_i$ and $\bx_i = \alpha_{i2}\bx_{i0} + \tilde{\bx}_i$ come from the orthogonal decomposition in~\eqref{eq:orth}.

\vskip0.25cm
We start with estimating $\|\Dh_i\|^2.$ Note that $\|\bh_i\|_2^2 \leq 4d_{i0}$ and $\|\alpha_{i1} \bh_{i0}\|_2^2\leq \|\bh_i\|_2^2$ since $(\bh, \bx)\in\Kd\cap\Kmu$. By \prettyref{lem:orth_decomp}, we have 
\begin{equation}\label{eq:Dh-est}
\|\Dh_i\|_2^2 = \|\tilde{\bh}_i\|_2^2 + \delta_0^2\|\alpha_{i1} \bh_{i0}\|_2^2 \leq \left(\left(\frac{\delta_i}{1-\delta_i}\right)^2 + \delta_0^2\right)\|\bh_i\|_2^2 \leq ( 4.6\delta_i^2 + 4\delta_0^2) d_{i0}.
\end{equation}

Then we calculate $\|\Dx_i\|$: from~\eqref{def:Dxi}, we have
\begin{equation*}
\|\Dx_i\|^2 = \left|\alpha_{i2} - \frac{1}{(1 - \delta_0)\overline{\alpha}_{i1}} \right|^2d_{i0} + \|\tilde{\bx}_i\|^2 \leq \left|\alpha_{i2} - \frac{1}{(1 - \delta_0)\overline{\alpha}_{i1}} \right|^2d_{i0} + \frac{4\delta_i^2 d_{i0}}{(1 - \delta_i)^2},
\end{equation*}
where \prettyref{lem:orth_decomp} gives $\|\tilde{\bx}_i\|_2 \leq \frac{\delta_i}{1-\delta_i}\|\bx_i\|_2 \leq \frac{2\delta_i}{1-\delta_i} \sqrt{d_{i0}}$ for $(\bh,\bx)\in\Kd\cap\Keps$.

So it suffices to estimate $\left| \alpha_{i2} - \frac{1}{(1 - \delta_0)\overline{\alpha}_{i1}} \right|$, which satisfies
\begin{equation}\label{eq:coef-up}
\left|\alpha_{i2} - \frac{1}{(1 - \delta_0)\overline{\alpha_{i1}}}\right| 
=
\frac{1}{|\alpha_{i1}|} \left| \overline{\alpha_{i1}} \alpha_{i2}- 1  
-  \frac{\delta_0}{1 - \delta_0} \right|
\leq 
\frac{1}{|\alpha_{i1}|} \left( \left|(\overline{\alpha_{i1}} \alpha_{i2}- 1)\right| 
+  \frac{\delta_0}{1 - \delta_0} \right).
\end{equation}
Lemma~\ref{lem:orth_decomp} implies that $| \overline{\alpha_{i1}} \alpha_{i2}- 1| \leq \delta_i$, and~\eqref{eq:orth} gives
\begin{equation}\label{eq:lb-alpha}
|\alpha_{i1}|^2 = \frac{1}{d_{i0}}(\|\bh_i\|^2 - \| \tilde{\bh}_i \|^2) \geq  \frac{1}{d_{i0}}\left(1 - \frac{\delta_i^2}{(1-\delta_i)^2} \right)\|\bh_i\|^2 \geq \left(1 - \frac{\delta_i^2}{(1-\delta_i)^2} \right)(1-\eps)
\end{equation}
where $\|\tilde{\bh}_i\| \leq \frac{\delta_i}{1-\delta_i}\|\bh_i\|$ and $\|\bh_i\|^2 \geq \|\bh_i\|\|\bx_i\| \geq (1-\eps)d_{i0}$ if $\|\bh_i\|\geq \|\bx_i\|.$ Substituting~\eqref{eq:lb-alpha} into~\eqref{eq:coef-up} gives
\begin{eqnarray*}
\left|\alpha_{i2} - \frac{1}{(1 - \delta_0)\overline{\alpha_{i1}}}\right| 
& \leq & \frac{1}{\sqrt{1-\eps}} \left(1 - \frac{\delta_i^2}{(1-\delta_i)^2} \right)^{-1/2}\left(\delta_i + \frac{\delta_0}{1-\delta_0}\right) \leq 1.2(\delta_i + \delta_0).
\end{eqnarray*}
Then we have
\begin{eqnarray}
\|\Dx_i\|_2^2 
& \leq & \left(1.44(\delta_i+\delta_0)^2+ \frac{4\delta^2_i}{(1 - \delta_i)^2}\right)  d_{i0} \leq (7.5\delta_i^2 + 2.88\delta_0^2)d_{i0}.\label{eq:Dx-est}
\end{eqnarray}

Finally, we try to bound $\|\Dh_i\|^2\|\Dx_i\|^2.$
\prettyref{lem:orth_decomp} gives $\|\tilde{\bh}_i\|_2 \|\tilde{\bx}_i\|_2 \leq \frac{\delta_i^2d_{i0}}{2(1 - \delta_i)}$ and  $|\alpha_{i1}| \leq 2$.
Combining them along with~\eqref{def:Dhi},~\eqref{def:Dxi},~\eqref{eq:Dh-est} and~\eqref{eq:Dx-est}, we have
\begin{align*}
\|\Dh_i\|_2^2 \|\Dx_i\|_2^2 &\leq \|\tilde{\bh}_i\|_2^2\|\tilde{\bx}_i\|_2^2 + \delta_0^2 |\alpha_{i1}|^2 \|\bh_{i0}\|_2^2 \|\Dx_i\|_2^2 + \left|\alpha_{i2} - \frac{1}{(1 - \delta_0)\overline{\alpha}_{i1}}\right|^2 \|\bx_{i0}\|_2^2 \|\Dh_i\|_2^2 
\\
& \leq \left(\frac{\delta_i^4}{4(1 - \delta_i)^2}  + 4\delta_0^2 (7.5\delta_i^2 + 2.88\delta_0^2) + 1.44(\delta_i + \delta_0)^2 (4.6\delta_i^2 + 4\delta_0^2 )\right) d_{i0}^2 \\
& \leq \frac{(\delta_i^2 + \delta_0^2)d_{i0}^2}{26}. 
\end{align*}

By symmetry, similar results hold for the case $\|\bh_i\|_2 < \|\bx_i\|_2$ and $\max\{\|\Dh_i\|, \|\Dx_i\|\} \leq (7.5\delta_i^2 + 2.88\delta_0^2)d_{i0}.$
\vskip0.25cm
Next, under the additional assumption $(\bh, \bx) \in \Kmu$, we now prove $\sqrt{L}\|\mtx{B}(\Dh_i)\|_\infty \leq 6\mu\sqrt{d_{i0}}$:\\
Case 1: $\|\bh_i\|_2 \geq \|\bx_i\|_2$ and $\alpha_i = (1 - \delta_0) \alpha_{i1}$. By \prettyref{lem:orth_decomp} gives $|\alpha_{i1}| \leq 2$, which implies
\begin{align*}
\sqrt{L}\|\mtx{B}(\Dh_i)\|_\infty &\leq \sqrt{L}\|\mtx{B}\bh_i \|_\infty + (1 - \delta_0) |\alpha_{i1}|\sqrt{L}\|\mtx{B}\bh_{i0}\|_\infty 
\\
&\leq 4\mu\sqrt{d_{i0}} + 2(1 - \delta_0)\mu_h \sqrt{d_{i0}} \leq 6\mu\sqrt{d_{i0}}.
\end{align*}
Case 2: $\|\bh_i\|_2 < \|\bx_i\|_2$ and $\alpha_i = \frac{1}{(1-\delta_0)\overline{\alpha_{i2}}}$. Using the same argument as~\eqref{eq:lb-alpha} gives
\begin{equation*}
|\alpha_{i2}|^2\geq \left(1 - \frac{\delta_i^2}{(1-\delta_i)^2} \right)(1-\eps).
\end{equation*}
Therefore,
\begin{align*}
\sqrt{L}\|\mtx{B}(\Dh_i)\|_\infty &\leq \sqrt{L}\|\mtx{B}\bh_i\|_\infty + \frac{1}{(1 - \delta_0) |\overline{\alpha_{i2}}|} \sqrt{L}\|\mtx{B}\bh_0\|_\infty
\\
&\leq 4\mu\sqrt{d_0} + \left(1 - \frac{\delta_i^2}{(1-\delta_i)^2} \right)^{-1/2} \frac{\mu_h \sqrt{d}_0}{(1-\delta_0)\sqrt{1-\eps}} \leq 6 \mu\sqrt{d_0}.
\end{align*}
\end{proof}

\begin{lemma}{\bf (Local Regularity for $F(\bh,\bx)$)}\label{lem:regF}
Conditioned on~\eqref{eq:rip} and~\eqref{eq:AX-rank},  the  following inequality holds
\begin{equation*}
 \Real\lp\lag \nabla F_{\bh}, \Dh \rag + \lag \nabla F_{\bx}, \Dx\rag\rp 
\geq \frac{\delta^2 d_0^2}{8} - 2\sqrt{s}\delta d_0 \|\A^*(\be)\|,
\end{equation*}
uniformly for any $(\bh, \bx) \in \Kint$ with $\epsilon \leq \frac{1}{15}$ if
$L \geq C\mu^2 s(K+N)\log^2 L$  for some numerical constant $C$.
\end{lemma}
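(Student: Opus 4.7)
My plan is to reduce the left-hand side to an inner product with $\br := \A(\BX-\BX_0) - \be$, exploit an algebraic identity that replaces the awkward quantity $\CH(\Dh,\bx)+\CH(\bh,\Dx)$ by $(\BX-\BX_0)+\CH(\Dh,\Dx)$, and then use the Local RIP~\eqref{eq:rip} on the dominant term while controlling the remainders via Lemma~\ref{lem:key} and Lemma~\ref{lem:DxDh}.

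First, using the Wirtinger derivative formulas~\eqref{eq:WFh}--\eqref{eq:WFx} and the adjoint relation $\lag \A_i^*(\bz),\BZ\rag=\lag \bz,\A_i(\BZ)\rag$, a direct computation shows
\begin{equation*}
\lag \nabla F_{\bh_i},\Dh_i\rag + \lag \nabla F_{\bx_i},\Dx_i\rag
= \lag \br,\, \A_i(\Dh_i\bx_i^* + \bh_i\Dx_i^*)\rag + \overline{\lag\br,\A_i(\bh_i\Dx_i^*)\rag} - \lag\br,\A_i(\bh_i\Dx_i^*)\rag
\end{equation*}
(taking real parts kills the last two terms). Summing over $i$,
$\Real\bigl(\lag\nabla F_\bh,\Dh\rag+\lag\nabla F_\bx,\Dx\rag\bigr)=\Real\lag\br,\A(\CH(\Dh,\bx)+\CH(\bh,\Dx))\rag$.
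Next, writing $\bh_i=\alpha_i\bh_{i0}+\Dh_i$ and $\bx_i=\bar\alpha_i^{-1}\bx_{i0}+\Dx_i$ and using $\alpha_i(\bar\alpha_i^{-1})^*=1$, a short block-wise calculation gives the key identity
\begin{equation*}
\CH(\Dh,\bx) + \CH(\bh,\Dx) \;=\; (\BX-\BX_0) + \CH(\Dh,\Dx).
\end{equation*}

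Substituting $\br=\A(\BX-\BX_0)-\be$ and splitting:
\begin{equation*}
\Real\lag\br,\A(\BX-\BX_0)+\A(\CH(\Dh,\Dx))\rag = \|\A(\BX-\BX_0)\|^2 + \Real\lag\A(\BX-\BX_0),\A(\CH(\Dh,\Dx))\rag - \Real\lag\A^*(\be),(\BX-\BX_0)+\CH(\Dh,\Dx)\rag.
\end{equation*}
The first term is $\geq \frac{2}{3}\delta^2 d_0^2$ by Local RIP~\eqref{eq:rip}. For the second (cross) term, I apply Cauchy--Schwarz and Local RIP on the first factor, and use Lemma~\ref{lem:key} on $\CH(\Dh,\Dx)$ (which is block-diagonal with rank-$1$ blocks), so the bound on $\|\A(\CH(\Dh,\Dx))\|^2$ needs estimates of $\|\CH(\Dh,\Dx)\|_F^2$ and $\sigma_{\max}^2(\Dh,\Dx)$. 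Lemma~\ref{lem:DxDh} supplies both: $\sum_i\|\Dh_i\|^2\|\Dx_i\|^2\leq \tfrac{1}{26}(\delta^2+\delta_0^2)d_0^2$, and the incoherence $\sqrt{L}\|\BB\Dh_i\|_\infty\leq 6\mu\sqrt{d_{i0}}$ combined with $\|\Dx_i\|^2\leq(7.5\delta_i^2+2.88\delta_0^2)d_{i0}$ yields $\sigma_{\max}^2(\Dh,\Dx)\leq C\mu^2\delta^2 d_0^2/L$. Plugging $L\geq C\mu^2 s(K+N)\log^2 L$ makes the $\sqrt{K+N}$-weighted and $(K+N)$-weighted tails in Lemma~\ref{lem:key} negligible compared to $\delta^2 d_0^2$, so $\|\A(\CH(\Dh,\Dx))\|^2\leq c\,\delta^2d_0^2$ with $c$ as small as I like (e.g. $\leq 1/18$). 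Then $|\Real\lag\A(\BX-\BX_0),\A(\CH(\Dh,\Dx))\rag|\leq \sqrt{3/2}\,\delta d_0\cdot\sqrt{c}\,\delta d_0$, and combined with the $\frac{2}{3}\delta^2 d_0^2$ from Local RIP I obtain a clean lower bound of $\tfrac{1}{8}\delta^2d_0^2$ on the noise-free part.

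For the noise term, I use $|\lag\A^*(\be),\M\rag|\leq \|\A^*(\be)\|\,\|\M\|_*$ block-wise. Each block of $\BX-\BX_0$ has rank $\leq 2$, giving $\|\BX-\BX_0\|_*\leq \sqrt{2}\sum_i\|\bh_i\bx_i^*-\bh_{i0}\bx_{i0}^*\|_F\leq\sqrt{2s}\,\delta d_0$ by Cauchy--Schwarz; and each block of $\CH(\Dh,\Dx)$ is rank-$1$, giving $\|\CH(\Dh,\Dx)\|_*=\sum_i\|\Dh_i\|\|\Dx_i\|\leq\sqrt{s}\,\|\CH(\Dh,\Dx)\|_F\leq \tfrac{\sqrt{s}}{5}\delta d_0$. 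Summed, the noise contribution is at most $2\sqrt{s}\,\|\A^*(\be)\|\,\delta d_0$, which is exactly the stated subtractive term.

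The main obstacle is the cross term in the signal part, because the $\CH(\Dh,\Dx)$ piece has no obvious Local RIP control and naively bounding it costs us a constant that could overwhelm $\tfrac{2}{3}$ from Local RIP. Making this step work depends on two refinements that are exactly what Lemma~\ref{lem:DxDh} is built for: (i)~the quartic improvement $\|\Dh_i\|^2\|\Dx_i\|^2={\cal O}(\delta_i^2 d_{i0}^2)$ (rather than $\delta_i d_{i0}\cdot\delta_i d_{i0}$ with a bad constant) which gives $\|\CH(\Dh,\Dx)\|_F\leq \tfrac15\delta d_0$, and (ii)~the preserved incoherence $\sqrt{L}\|\BB\Dh_i\|_\infty\leq 6\mu\sqrt{d_{i0}}$ which keeps $\sigma_{\max}^2$ small enough so that the ``$(K+N)\log L$'' tail in Lemma~\ref{lem:key} is absorbed once $L\gtrsim \mu^2 s(K+N)\log^2L$.
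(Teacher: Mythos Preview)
Your proposal is correct and follows essentially the same route as the paper: both recognize the identity $(\Dh_i)\bx_i^* + \bh_i(\Dx_i)^* = (\bh_i\bx_i^*-\bh_{i0}\bx_{i0}^*) + \Dh_i(\Dx_i)^*$, control the rank-one block-diagonal piece $\CH(\Dh,\Dx)$ via Lemma~\ref{lem:key} together with the bounds of Lemma~\ref{lem:DxDh}, and handle the noise via the nuclear-norm/operator-norm duality. The only cosmetic difference is that the paper further splits $\BX-\BX_0=\BU+\BV$ with $\BU\in T$ and lower-bounds $\Real(I_{01})\geq(\|\A(\BU)\|-\|\A(\BV)\|)(\|\A(\BU)\|-2\|\A(\BV)\|)$ using the RIP on $T$ (Lemma~\ref{lem:ripu}), whereas you keep $\BX-\BX_0$ intact and invoke the already-conditioned Local RIP~\eqref{eq:rip} directly on $\|\A(\BX-\BX_0)\|^2$; your bookkeeping is slightly more direct but the two arguments are equivalent.
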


\begin{proof}
First note that for 
\begin{equation*}
I_0 =  \lag \nabla F_{\bh}, \Dh \rag + \overline{\lag \nabla F_{\bx}, \Dx\rag }  = \sum_{i=1}^s \lag \nabla F_{\bh_i}, \Dh_i \rag + \overline{\lag \nabla F_{\bx_i}, \Dx_i \rag}.
\end{equation*}
For each component, recall that~\eqref{eq:WFh} and~\eqref{eq:WFx}, we have
\begin{align*}
\lag \nabla F_{\bh_i}, \Dh_i \rag + \overline{\lag \nabla F_{\bx_i}, \Dx_i \rag}
& = \lag \A_i^*(\A(\BX - \BX_0) - \be)\bx_i, \Dh_i \rag  + \overline{\lag (\A_i^*(\A(\BX - \BX_0) - \be))^*\bh_i, \Dx_i \rag}  \\
& = \left\lag \A(\BX - \BX_0)  - \be, \A_i((\Dh_i)\bx_i^* + \bh_i (\Dx_i)^*) \right\rag.
\end{align*}
Define $\BU_i$ and $\BV_i$ as
\begin{equation}
\label{eq:UV2}
\BU_i := \alpha_i\bh_{i0}(\Dx_i)^* + \overline{\alpha_i}^{-1}(\Dh_i)\bx_{i0}^* \in T_i, \quad \BV_i := \Dh_i(\Dx_i)^*.
\end{equation}
Here $\BV_i$ does not necessarily belong to $\TB_i.$ From the way of how $\Dh_i$, $\Dx_i$, $\BU_i$ and $\BV_i$ are constructed, two simple relations hold:
\begin{eqnarray*}
\bh_i\bx_i^* - \bh_{i0}\bx_{i0}^* & = &  \BU_i + \BV_i, \\
(\Dh_i)\bx_i^* + \bh_i (\Dx_i)^* & = & \BU_i + 2\BV_i.
\end{eqnarray*}
Define $\BU : = \blkdiag(\BU_1, \cdots, \BU_s)$ and $\BV : = \blkdiag(\BV_1, \cdots, \BV_s)$. $I_0$ can be simplified to
\begin{align*}
I_0 & = \sum_{i=1}^s \lag \nabla F_{\bh_i}, \Dh_i \rag + \overline{\lag \nabla F_{\bx_i}, \Dx_i \rag} = \sum_{i=1}^s \lag \A(\BU+\BV)-  \be, \A_i(\BU_i + 2\BV_i)\rag \\
& = \underbrace{\lag \A(\BU+\BV), \A(\BU + 2\BV)\rag}_{I_{01}} - \underbrace{\lag  \be, \A(\BU + 2\BV)\rag}_{I_{02}}.
\end{align*}
Now we will give a lower bound for $\Real(I_{01})$ and an upper bound for $\Real(I_{02})$ so that the lower bound of $\Real(I_0)$ is obtained. 
By the Cauchy-Schwarz inequality, $\Real(I_{01})$ has the lower bound
\begin{equation}\label{eq:I_0}
\Real(I_{01}) \geq  (\|\A(\BU)\| - \|\A(\BV)\|) (\|\A(\BU)\| - 2\|\A(\BV)\|). 
\end{equation}
In the following, we will give an upper bound for $\|\A(\BV)\|$ and a lower bound for $\|\A(\BU)\|$.

\paragraph{Upper bound for $\|\A(\BV)\|$:} Note that $\BV$ is a block-diagonal matrix with rank-1 blocks, and applying \prettyref{lem:key} results in
\begin{equation*}
\|\A(\BV)\|^2 \leq \frac{4}{3}\sum_{i=1}^s \|\BV\|_F^2 + 2\sigma_{\max}(\Dh,\Dx)  \|\BV\|_F\sqrt{2s(K+N)\log L} + 8s\sigma_{\max}^2(\Dh, \Dx)(K+N) \log L.
\end{equation*}
By using Lemma~\ref{lem:DxDh}, we have $\|\Dh_i\|^2 \leq (7.5\delta_i^2 + 2.88\delta_0^2)d_{i0}$ and $\sqrt{L}\|\BB(\Dh_i)\|_{\infty} \leq 6\mu\sqrt{d_{i0}}$. Substituting them into $\sigma^2_{\max}(\Dh,\Dx)$ gives
\begin{eqnarray*}
\sigma_{\max}^2(\Dh, \Dx) = \max_{1\leq l\leq L}\left(\sum_{i=1}^s |\bb_l^*\Dh_i|^2 \|\Dx_i\|^2\right) \leq \frac{36\mu^2}{L} \sum_{i=1}^s  (7.5\delta_i^2 + 2.88\delta_0^2)d_{i0}^2  \leq \frac{272\mu^2 \delta^2 d_0^2}{L}.
\end{eqnarray*}
For $\|\BV\|_F$, note that $\|\Dh_i\|^2\|\Dx_i\|^2 \leq \frac{1}{26}(\delta_i^2 + \delta_0^2)d_{i0}^2$ and thus
\begin{equation*}
\|\BV\|^2_F = \sum_{i=1}^s \|\Dh_i\|^2\|\Dx_i\|^2 \leq\frac{1}{26} \sum_{i=1}^s(\delta_i^2 + \delta_0^2)d_{i0}^2 \leq \frac{1}{26} \cdot 1.01\delta^2d_0^2 = \frac{\delta^2 d_0^2}{25}.
\end{equation*}

Then by $\delta  \leq \eps \leq  \frac{1}{15}$ and letting $L \geq C\mu^2 s(K + N)\log^2 L$ for a sufficiently large numerical constant $C$, there holds
\begin{equation}\label{eq:AVdelta}
\|\A(\BV)\|^2 \leq \frac{\delta^2 d_0^2}{16} \implies \|\A(\BV)\| \leq \frac{\delta d_0}{4}.
\end{equation}

\paragraph{Lower bound for $\|\A(\BU)\|$:}  By the triangle inequality, 
$\|\BU\|_F \geq \delta d_0 - \frac{1}{5} \delta d_0 \geq \frac{4}{5}\delta d_0$
if $\epsilon \leq \frac{1}{15}$ since $\|\BV\|_F \leq 0.2\delta d_0$. Since $\BU \in T$, by \prettyref{lem:ripu}, there holds
\begin{equation}\label{eq:AUdelta}
\|\A(\BU)\| \geq \sqrt{\frac{9}{10}}\|\BU\|_F \geq \frac{3}{4} \delta d_0.
\end{equation}
With the upper bound of $\A(\BV)$ in~\eqref{eq:AVdelta}, the lower bound of $\A(\BU)$ in \eqref{eq:AUdelta}, and~\eqref{eq:I_0}, we get
$\Real(I_{01}) \geq \frac{\delta^2 d_0^2}{8}.$

Now let us give an upper bound for $\Real(I_{02})$, 
\begin{eqnarray*}
\| I_{02} \| & \leq & \|\A^*(\be)\| \|\BU + 2\BV\|_* = \|\A^*(\be)\| \sum_{i=1}^s\|\underbrace{\BU_i + 2\BV_i}_{\text{rank-2}}\|_* \\
& \leq & \sqrt{2}\|\A^*(\be)\| \sum_{i=1}^s \|\BU_i + 2\BV_i\|_F \\
& \leq & \sqrt{2s}\|\A^*(\be)\| \|\BU + 2\BV\|_F  \leq 2\sqrt{s}\delta d_0 \|\A^*(\be)\|
\end{eqnarray*}
where $\|\cdot\|$ and $\|\cdot\|_*$ are a pair of dual norms and 
\begin{equation*}
\|\BU + 2\BV\|_F \leq \|\BU + \BV\|_F + \|\BV\|_F \leq \delta d_0+ 0.2\delta d_0 \leq 1.2\delta d_0.
\end{equation*}
Combining the estimation of $\Real(I_{01})$ and $\Real(I_{02})$ above leads to 
\begin{equation*}
\Real( \lag \nabla F_{\bh}, \Dh\rag + \lag \nabla F_{\bx}, \Dx\rag)
\geq  \frac{\delta^2 d_0^2}{8} - 2\sqrt{s}\delta d_0 \|\A^*(\be)\|.
\end{equation*}
\end{proof}

\begin{lemma}{\bf (Local Regularity for $G(\bh,\bx)$}\label{lem:regG}
For any $(\bh, \bx) \in \Kd \bigcap \Keps$ with $\epsilon \leq \frac{1}{15}$ and $\frac{9}{10}d_{0} \leq d \leq \frac{11}{10}d_{0}$, $\frac{9}{10}d_{i0} \leq d_i \leq \frac{11}{10}d_{i0}$,   the following inequality holds uniformly
\begin{equation}
\label{eq:G_regularity}
\Real\lp\lag \nabla G_{\bh_i}, \Dh_i \rag + \lag \nabla G_{\bx_i}, \Dx_i \rag\rp  \geq 2\delta_0\sqrt{ \rho G_i(\bh_i, \bx_i)} = \frac{\delta}{5}\sqrt{ \rho G_i(\bh_i, \bx_i)},
\end{equation}
where $\rho \geq d^2 + 2\|\be\|^2.$ Immediately, we have
\begin{equation}
\label{eq:G_regularity_demix}
\Real\lp\lag \nabla G_{\bh}, \Dh \rag + \lag \nabla G_{\bx}, \Dx \rag\rp =\sum_{i=1}^s\Real\lp\lag \nabla G_{\bh_i}, \Dh_i \rag + \lag \nabla G_{\bx_i}, \Dx_i \rag\rp  \geq {\frac{\delta}{5}}\sqrt{ \rho G(\bh, \bx)}.
\end{equation}
\end{lemma}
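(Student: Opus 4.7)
The plan is to prove~\eqref{eq:G_regularity} first and then recover~\eqref{eq:G_regularity_demix} by summing over $i$. By the symmetry between $\bh_i$ and $\bx_i$ in the definition of $\alpha_i$, I will treat only the case $\|\bh_i\|_2 \geq \|\bx_i\|_2$, in which $\alpha_i = (1-\delta_0)\alpha_{i1}$ and the explicit form $\Dh_i = \delta_0\alpha_{i1}\bh_{i0} + \tilde{\bh}_i$ from~\eqref{def:Dhi} is available; the complementary case is handled verbatim with $\bh_i$ and $\bx_i$ swapped. Reading off~\eqref{eq:WGh} and~\eqref{eq:WGx}, the quantity $\Real(\langle\nabla G_{\bh_i},\Dh_i\rangle + \langle\nabla G_{\bx_i},\Dx_i\rangle)$ splits cleanly along the three penalty terms in $G_i$. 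The unifying fact that makes the target~\eqref{eq:G_regularity} plausible is the pointwise identity $G_0'(z)\,z = 2z(z-1) \geq 2\sqrt{G_0(z)}$, valid for $z\geq 1$, with both sides vanishing for $z<1$. So it will suffice to show that for each penalty the corresponding inner product is at least $2\delta_0$ times the ``argument'' itself, and then to combine the three lower bounds via subadditivity of the square root, $\sqrt{A}+\sqrt{B}+\sum_l\sqrt{C_l}\geq \sqrt{A+B+\sum_l C_l}$, to arrive at $2\delta_0\sqrt{\rho G_i}$.

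For the norm penalty $G_0(\|\bh_i\|^2/(2d_i))$, the orthogonal decomposition of $\Dh_i$ gives at once $\Real\langle\bh_i,\Dh_i\rangle = \delta_0|\alpha_{i1}|^2 d_{i0} + \|\tilde{\bh}_i\|^2 \geq \delta_0\|\bh_i\|^2$; multiplying by $\frac{\rho}{2d_i}G_0'(\|\bh_i\|^2/(2d_i))$ and invoking the identity above produces the required lower bound $2\rho\delta_0\sqrt{G_0(\|\bh_i\|^2/(2d_i))}$. For the coherence penalty, setting $\beta_l = \bb_l^*\bh_i$ gives $\bb_l^*\Dh_i = \beta_l - (1-\delta_0)\alpha_{i1}\bb_l^*\bh_{i0}$, whence $\Real(\bh_i^*\bb_l\bb_l^*\Dh_i) = |\beta_l|^2 - (1-\delta_0)\Real(\bar{\beta}_l\alpha_{i1}\bb_l^*\bh_{i0})$. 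Activity of the $l$-th term forces $|\beta_l|\geq \mu\sqrt{8d_i/L}$, which combined with the incoherence bound $\sqrt{L}|\bb_l^*\bh_{i0}|\leq \mu_h\sqrt{d_{i0}}$, the estimate $|\alpha_{i1}|\leq 2$ from Lemma~\ref{lem:orth_decomp}, and $\mu\geq\mu_h$ makes $|\alpha_{i1}\bb_l^*\bh_{i0}|<|\beta_l|$; this yields $\Real(\bh_i^*\bb_l\bb_l^*\Dh_i) \geq \delta_0|\beta_l|^2$ on the active set. Summing over $l$ and reapplying the key identity delivers the contribution $2\rho\delta_0\sum_l\sqrt{G_0(L|\bb_l^*\bh_i|^2/(8d_i\mu^2))}$.

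The genuinely delicate piece---and the main obstacle---is the $\|\bx_i\|^2$ penalty in the current case. Here $\bar{\alpha}_i^{-1} = ((1-\delta_0)\bar{\alpha}_{i1})^{-1}$ is defined from the $\bh$-side rather than from the natural orthogonal decomposition of $\bx_i$, so no immediate cancellation occurs. The plan is to compute $\Real\langle\bx_i,\Dx_i\rangle = \|\bx_i\|^2 - \Real(\bar{\alpha}_{i2}/((1-\delta_0)\bar{\alpha}_{i1}))\,d_{i0}$ and to bound $|\alpha_{i2}/\alpha_{i1}|\leq (1+\delta_i)/|\alpha_{i1}|^2$ via $|\alpha_{i1}\bar{\alpha}_{i2}-1|\leq \delta_i$ from Lemma~\ref{lem:orth_decomp}. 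Combined with the lower bound $|\alpha_{i1}|^2\geq (1-\epsilon)(1-\delta_i^2/(1-\delta_i)^2)$ derived as in~\eqref{eq:lb-alpha}, and with the activity condition $\|\bx_i\|^2\geq 2d_i\geq 1.8 d_{i0}$, a careful numerical check using $\epsilon,\delta_0\leq 1/15$ and $0.9d_{i0}\leq d_i\leq 1.1 d_{i0}$ yields $\Real\langle\bx_i,\Dx_i\rangle\geq \delta_0\|\bx_i\|^2$ with slack; this is the one step where the specific numerical constants in the hypotheses really bite. Putting the three contributions together via subadditivity of $\sqrt{\cdot}$ proves~\eqref{eq:G_regularity}, and the same subadditivity applied once more, $\sum_i\sqrt{\rho G_i}\geq \sqrt{\rho\sum_i G_i} = \sqrt{\rho G}$, delivers~\eqref{eq:G_regularity_demix}.
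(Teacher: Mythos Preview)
Your argument is correct and is precisely the content of Lemma~5.17 in~\cite{LLSW16}, which the paper simply cites for the per-component bound~\eqref{eq:G_regularity} before summing via $\sum_i\sqrt{G_i}\geq\sqrt{\sum_i G_i}$ exactly as you do; so you have reproduced the same proof in self-contained detail. One small imprecision worth flagging: the two cases are not literally symmetric because the coherence penalty acts only on $\bh_i$, so in the case $\|\bh_i\|<\|\bx_i\|$ your coherence estimate must use $|\alpha_i|=|(1-\delta_0)\bar{\alpha}_{i2}|^{-1}$ in place of $(1-\delta_0)|\alpha_{i1}|$, but this is still bounded via the analogue of~\eqref{eq:lb-alpha} and the same comparison with $|\beta_l|\geq\mu\sqrt{8d_i/L}$ goes through.
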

\begin{remark}
For the local regularity condition for $G(\bh, \bx)$, we use the results from~\cite{LLSW16} when $s=1$. This is because each component $G_i(\bh,\bx)$ only depends on $(\bh_i,\bx_i)$ by definition and thus the lower bound of
$\Real\lp\lag \nabla G_{\bh_i}, \Dh_i \rag + \lag \nabla G_{\bx_i}, \Dx_i \rag\rp$ is completely determined by $(\bh_i,\bx_i)$ and $\delta_0$, and is independent of $s$.

\end{remark}
\begin{proof}
For each $i:1\leq i\leq s$, $\nabla G_{\bh_i}$ (or $\nabla G_{\bx_i}$) only depends on $\bh_i$ (or $\bx_i$) and there holds
\begin{equation*}
\Real\lp\lag \nabla G_{\bh_i}, \Dh_i \rag + \lag \nabla G_{\bx_i}, \Dx_i \rag\rp  \geq 2\delta_0\sqrt{ \rho G_i(\bh_i, \bx_i)} = \frac{\delta}{5}\sqrt{ \rho G_i(\bh_i, \bx_i)},
\end{equation*}
which follows exactly from Lemma 5.17 in~\cite{LLSW16}. For~\eqref{eq:G_regularity_demix}, by definition of $\nabla G_{\bh}$ and $\nabla G_{\bx}$ in~\eqref{def:grad}, 
\begin{align*}
\Real\lp\lag \nabla G_{\bh}, \Dh \rag + \lag \nabla G_{\bx}, \Dx \rag\rp & = \sum_{i=1}^s\Real\lp\lag \nabla G_{\bh_i}, \Dh_i \rag + \lag \nabla G_{\bx_i}, \Dx_i \rag\rp  \\
&\geq  \frac{\delta}{5} \sum_{i=1}^s\sqrt{ \rho G_i(\bh_i, \bx_i)} \geq \frac{\delta}{5}\sqrt{\rho G(\bh,\bx)}
\end{align*}
where $G(\bh,\bx) = \sum_{i=1}^s G_i(\bh_i,\bx_i).$

\end{proof}

\begin{lemma}{\bf (Proof of the Local Regularity Condition)}
\label{lem:reg}
Conditioned on~\eqref{eq:rip}, for the objective function $\tF(\bh,\bx)$ in~\eqref{def:FG}, there exists a positive constant $\omega$ such that
\begin{equation}\label{eq:regF}
\|\nabla \tF(\bh, \bx)\|^2 \geq \omega \left[ \tF(\bh, \bx) - c \right]_+
\end{equation}
with $c =  \|\be\|^2 + 2000s \|\A^*(\be)\|^2$ and $\omega = \frac{d_0}{7000}$ for all $(\bh, \bx) \in \Kint$. 
Here we set  $\rho \geq d^2 + 2\|\be\|^2.$
\end{lemma}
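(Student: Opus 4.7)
The plan is to combine Lemmas~\ref{lem:regF} and~\ref{lem:regG} with the step-size bound of Lemma~\ref{lem:DxDh} through a single Cauchy--Schwarz. The case $\tF(\bh,\bx)\le c$ is trivial, since the right-hand side of~\eqref{eq:regF} vanishes, so I may assume $\tF>c$. Writing $I:=\lag\nabla\tF_{\bh},\Dh\rag+\lag\nabla\tF_{\bx},\Dx\rag$, Cauchy--Schwarz gives
$$\|\nabla\tF(\bh,\bx)\|^{2}\,(\|\Dh\|^{2}+\|\Dx\|^{2})\ge|\Real(I)|^{2},$$
so the task reduces to (i) a lower bound on $\Real(I)$ and (ii) an upper bound on $\|\Dh\|^{2}+\|\Dx\|^{2}$.

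For (ii), I would sum the component-wise bound $\|\Dh_i\|^{2}+\|\Dx_i\|^{2}\le 2(7.5\delta_i^{2}+2.88\delta_0^{2})d_{i0}$ of Lemma~\ref{lem:DxDh} over $i$, then use $\delta_0=\delta/10$, $\delta_i\le\eps\le 1/15$, together with Cauchy--Schwarz $\sum_i\delta_i^{2}d_{i0}\le(\sum_i\delta_i^{2})^{1/2}(\sum_i\delta_i^{2}d_{i0}^{2})^{1/2}=\delta d_0(\sum_i\delta_i^{2})^{1/2}$ and $\sum_i d_{i0}\le\sqrt{s}\,d_0$, absorbing the resulting factors into a single absolute constant; this yields a bound of the form $\|\Dh\|^{2}+\|\Dx\|^{2}\le C_0\,\delta^{2}d_0$. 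For (i), split $\nabla\tF=\nabla F+\nabla G$ and add Lemma~\ref{lem:regF} to the aggregate bound~\eqref{eq:G_regularity_demix} of Lemma~\ref{lem:regG} to obtain
$$\Real(I)\ge\frac{\delta^{2}d_0^{2}}{8}-2\sqrt{s}\,\delta d_0\,\|\A^{*}(\be)\|+\frac{\delta}{5}\sqrt{\rho\,G(\bh,\bx)}.$$
One AM--GM step $2\sqrt{s}\,\delta d_0\,\|\A^{*}(\be)\|\le\frac{\delta^{2}d_0^{2}}{16}+16s\|\A^{*}(\be)\|^{2}$ converts this into the cleaner lower bound $\Real(I)\ge\frac{\delta^{2}d_0^{2}}{16}+\frac{\delta}{5}\sqrt{\rho G}-16s\|\A^{*}(\be)\|^{2}$.

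To close the argument, I would upper bound $\tF-c$ using~\eqref{eq:LUBD}:
$$\tF-c=F+G-\|\be\|^{2}-2000s\|\A^{*}(\be)\|^{2}\le \tfrac{3}{2}\delta^{2}d_0^{2}+\tfrac{\eps\delta d_0^{2}}{5\sqrt{s}\kappa}+G\le 2\delta^{2}d_0^{2}+G,$$
where the noise correction in $c$ kills the negative $\|\A^{*}(\be)\|^{2}$ terms and $\eps\le 1/15$ absorbs the middle term. Squaring the lower bound on $\Real(I)$ via $(p+q)^{2}\ge p^{2}+q^{2}$ with $p=\delta^{2}d_0^{2}/16$ and $q=\delta\sqrt{\rho G}/5$, dividing by $C_0\delta^{2}d_0$, and using $\rho\ge d^{2}\ge 0.81\,d_0^{2}$ so that the $G$-contribution has size $\gtrsim d_0\cdot G$, each of the two resulting terms dominates $\omega\cdot(2\delta^{2}d_0^{2}+G)$ for an absolute constant $\omega$ of order $d_0/7000$; the residual $16s\|\A^{*}(\be)\|^{2}$ from the AM--GM step is absorbed into the $2000s\|\A^{*}(\be)\|^{2}$ subtracted in $c$. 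The main obstacle will be the bookkeeping: tuning the AM--GM weight so that the noise residue exactly fits inside the constant $2000$ in $c$, and tracking the $\sqrt{s}$ factors arising from $\sum d_{i0}\le\sqrt{s}\,d_0$ and from the cross-term $2\sqrt{s}\delta d_0\|\A^{*}(\be)\|$ so that they cancel against the corresponding $\sqrt{s}$ in the step-size bound and leave an $\omega$ that is, as claimed, independent of $s$ and $\kappa$.
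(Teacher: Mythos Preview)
Your high-level strategy—Cauchy–Schwarz against the displacement $(\Dh,\Dx)$ and add the lower bounds of Lemmas~\ref{lem:regF} and~\ref{lem:regG}—is exactly what the paper does. The gap is in the order of operations when you pass to $\|\nabla\tF\|^2$. After your AM–GM you have $\Real(I)\ge p+q-B$ with $p=\tfrac{\delta^2 d_0^2}{16}$, $q=\tfrac{\delta}{5}\sqrt{\rho G}$, $B=16s\|\A^*(\be)\|^2$, and you propose to square and then divide by $C_0\delta^2 d_0$. But $(p+q-B)^2$ is not $(p+q)^2$ minus something absorbable: the cross-term $2qB$, after division by $C_0\delta^2 d_0$, is proportional to $\sqrt{\rho G}\cdot s\|\A^*(\be)\|^2/(\delta d_0)$ and carries an uncontrolled factor $1/\delta$. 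Since $G$ need not vanish with $\delta$ (take $\bh_i=\alpha_i\bh_{i0}$, $\bx_i=\bar\alpha_i^{-1}\bx_{i0}$ with $|\alpha_i|$ near $2$: then $\delta_i=0$ for all $i$, yet $G_0(\|\bh_i\|^2/2d_i)>0$), this term cannot be bounded by any fixed multiple of $G$ or of $s\|\A^*(\be)\|^2$, and the ``absorption into $c$'' you describe does not close. The same small-$\delta$ phenomenon breaks your upper bound $\tF-c\le 2\delta^2 d_0^2+G$: the cross-term $\tfrac{\eps\delta d_0^2}{5\sqrt{s}\kappa}$ from~\eqref{eq:LUBD} is linear in $\delta$ and is not dominated by $\tfrac12\delta^2 d_0^2$ unless $\delta\gtrsim\eps/(\sqrt{s}\kappa)$, which is not assumed.

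The paper fixes both issues by cancelling the common factor of $\delta$ \emph{before} squaring. From $\Real(I)\le 4\delta\sqrt{d_0}\,\|\nabla\tF\|$ and the combined lower bound on $\Real(I)$, division by $\delta d_0$ yields
\[
\frac{4}{\sqrt{d_0}}\|\nabla\tF\|\ \ge\ \frac{\delta d_0}{8}+\frac{9}{50}\sqrt{G}-2\sqrt{s}\,\|\A^*(\be)\|,
\]
in which the $G$ and noise contributions are already $\delta$-free. One then replaces $\delta d_0$ by a multiple of $\sqrt{F_0}$ via the local RIP, handles the cross-term $\Real\lag\A^*(\be),\BX-\BX_0\rag$ by one further AM–GM (again $\delta$-free after the division), groups $\sqrt{F_0}+\sqrt{G}\ge\sqrt{[\tF-\|\be\|^2]_+}$ up to the noise residual, and only then squares via the elementary inequality $[\sqrt{(x-a)_+}-b]_+^2\ge\tfrac12(x-a)_+-b^2$. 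The missing idea in your plan is precisely this: divide out $\delta$ first, square last; then no $\sqrt{G}/\delta$ term ever appears and the constants $\omega=d_0/7000$ and $2000s$ fall out without $s,\kappa$ contamination.
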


\begin{proof}
Following from Lemma~\ref{lem:regF} and Lemma~\ref{lem:regG}, we have
\begin{eqnarray*}
\Real( \lag \nabla F_{\bh}, \Dh\rag + \lag \nabla F_{\bx}, \Dx\rag)
& \geq & \frac{\delta^2 d_0^2}{8} - 2\sqrt{s}\delta d_0 \|\A^*(\be)\| \\
 \Real( \lag \nabla G_{\bh}, \Dh \rag + \lag \nabla G_{\bx}, \Dx \rag) 
 & \geq & \frac{\delta d}{5} \sqrt{ G(\bh, \bx)} \geq \frac{9\delta d_0}{50}\sqrt{G(\bh, \bx)}
\end{eqnarray*}
for all $(\bh, \bx) \in \Kint$ where $\rho \geq d^2 + 2\|\be\|^2  \geq d^2$ and $\frac{9}{10}d_0 \leq d \leq \frac{11}{10}d_0$.
Adding them together gives $\Real\left (\lag  \nabla \tF_{\bh}, \Dh \rag + \lag \nabla \tF_{\bx}, \Dx \rag\right)$ on the left side. Moreover, Cauchy-Schwarz inequality implies
\begin{equation*}
\Real\left (\lag  \nabla \tF_{\bh}, \Dh \rag + \lag \nabla \tF_{\bx}, \Dx \rag\right) \leq 4\delta\sqrt{d_0} \| \nabla \tF(\bh, \bx)\|
\end{equation*}
where both $\|\Dh\|^2$ and $\|\Dx\|^2$ are bounded by $8\delta^2d_0$ in  Lemma~\ref{lem:DxDh} since
\begin{equation*}
\|\Dh\|^2 = \sum_{i=1}^s \|\Dh_i\|^2 \leq \sum_{i=1}^s (7.5\delta_i^2 + 2.88\delta_0^2) d_{i0} \leq 8\delta^2 d_0.
\end{equation*}
Therefore, 
\begin{equation}
\frac{\delta^2 d_0^2}{8} + \frac{  9\delta d_0\sqrt{ G(\bh, \bx)}}{50} - 2\sqrt{s}\delta d_0 \|\A^*(\be) \|
\leq 4 \delta \sqrt{d_0}  \| \nabla \tF(\bh, \bx)\|. \label{eq:nabla-tF}
\end{equation}

Dividing both sides of~\eqref{eq:nabla-tF} by $\delta d_0$, we obtain
\begin{eqnarray*}
\frac{4}{\sqrt{d_0}} \|\nabla \tF(\bh, \bx)\| & \geq & 
 \frac{\delta d_0}{12}  + \frac{9}{50}\sqrt{G(\bh, \bx)} + \frac{\delta d_0 }{24} - 2\sqrt{s}\|\A^*(\be)\| \\
 & \geq & \frac{1}{6\sqrt{6}}[\sqrt{F_0(\bh,\bx)} + \sqrt{G(\bh, \bx)}] + \frac{\delta d_0}{24} - 2\sqrt{s}\|\A^*(\be)\| 
\end{eqnarray*}
where the Local RIP condition~\eqref{eq:rip} implies $F_0(\bh, \bx) \leq \frac{3}{2}\delta^2 d_0^2$ and hence $\frac{\delta d_0}{12} \geq \frac{1}{6\sqrt{6}}\sqrt{F_0(\bh, \bx)}$, where $F_0(\bh,\bx)$ is defined in~\eqref{def:F0}.  

Note that~\eqref{eq:cross} gives
\begin{equation}\label{eq:AEHX}
\sqrt{2\left[ \Real(\lag \A^*(\be), \BX - \BX_0\rag) \right]_+} \leq \sqrt{ 2\sqrt{2s} \|\A^*(\be)\| \delta d_0} \leq \frac{\sqrt{6}\delta d_0}{4} + \frac{4\sqrt{s}}{\sqrt{6}}\|\A^*(\be)\|.
\end{equation}
By~\eqref{eq:AEHX} and $\tF(\bh, \bx) - \|\be\|^2 \leq F_0(\bh, \bx) + 2 [\Real(\lag \A^*(\be), \BX - \BX_0\rag)]_+ + G(\bh, \bx)$, there holds
\begin{eqnarray*}
\frac{4}{\sqrt{d_0}} \|\nabla \tF(\bh, \bx)\| 
& \geq & \frac{1}{6\sqrt{6}} \Big[ \left(\sqrt{F_0(\bh, \bx)} +\sqrt{2\left[ \Real(\lag \A^*(\be), \BX-\BX_0\rag) \right]_+} + \sqrt{G(\bh, \bx)}\right) \\
&& + \frac{\delta d_0}{24} - \frac{1}{6\sqrt{6}} \left( \frac{\sqrt{6}\delta d_0}{4} + \frac{4\sqrt{s}}{\sqrt{6}}\|\A^*(\be)\|\right) - 2\sqrt{s}\|\A^*(\be)\|\\
& \geq & \frac{1}{6\sqrt{6}} \left[ \sqrt{ \left[\tF(\bh, \bx) - \|\be\|^2\right]_+} -  \sqrt{1000s}\|\A^*(\be)\|\right].
\end{eqnarray*}

For any nonnegative real numbers $a$ and $b$, we have
$[\sqrt{(x - a)_+} - b ]_+ + b \geq \sqrt{(x - a)_+} $
and it implies
\begin{equation*}
( x - a)_+ \leq 2 ( [\sqrt{(x - a)_+} - b ]_+^2 + b^2) \Longrightarrow [\sqrt{(x - a)_+} - b ]_+^2  \geq \frac{(x - a)_+}{2} - b^2.
\end{equation*}
Therefore, by setting $ a = \|\be\|^2$ and $b = \sqrt{1000s}\|\A^*(\be)\|$, there holds
\begin{eqnarray*}
\|\nabla \tF(\bh, \bx)\|^2 
& \geq & \frac{d_0}{3500} \left[ \frac{\tF(\bh, \bx) - \|\be\|^2 }{2} -  1000s \|\A^*(\be)\|^2 \right]_+ \\
& \geq & \frac{d_0}{7000} \left[ \tF(\bh, \bx) - (\|\be\|^2 + 2000s \|\A^*(\be)\|^2) \right]_+.
\end{eqnarray*}
\end{proof}
\subsection{Local smoothness}
\label{s:smooth}

\begin{lemma}
Conditioned on~\eqref{eq:rip},~\eqref{eq:robust} and~\eqref{eq:A-UPBD}, for any $\bz : = (\bh, \bx)\in\CC^{(K+N)s}$ and $\bw : = (\bu, \bv)\in\CC^{(K+N)s}$ such that $\bz$ and $\bz+\bw \in \Keps \cap \KF$, there holds
\begin{equation*}
\| \nabla\tF(\bz + \bw) - \nabla\tF(\bz) \| \leq C_L \|\bw\|,
\end{equation*}
with
\begin{equation*}
C_L \leq  \left(10\|\A\|^2d_0 + \frac{2\rho}{\min d_{i0}} \left( 5 + \frac{2L}{\mu^2} \right)\right)
\end{equation*}
where $\rho \geq d^2 + 2\|\be\|^2$ 
and $\|\A\| \leq \sqrt{s(N\log(NL/2) + (\gamma+\log s)\log L)}$ holds with probability at least $1 - L^{-\gamma}$ from Lemma~\ref{lem:A-UPBD}. 

In particular, $L = \mathcal{O}((\mu^2 + \sigma^2)s(K + N)\log^2 L)$
and $\|\be\|^2 = \mathcal{O}(\sigma^2d_0^2)$ follows from $\|\be\|^2 \sim \frac{\sigma^2d_0^2}{2L} \chi^2_{2L}$ and~\eqref{ineq:bern}. Therefore, $C_L$ can be simplified to
\begin{equation*}
C_L = \mathcal{O}(d_0s\kappa(1 + \sigma^2)(K + N)\log^2 L )
\end{equation*}
by choosing $\rho \approx d^2 + 2\|\be\|^2.$
\end{lemma}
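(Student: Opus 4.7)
The plan is to split $\tF=F+G$ and bound the Lipschitz constants of $\nabla F$ and $\nabla G$ separately over $\Keps\cap\KF$. The key structural facts I will exploit are: (i) by Lemma~\ref{lem:betamu}, $\Keps\cap\KF\subset\Kd\cap\Kmu$, so throughout we have $\|\bh_i\|,\|\bx_i\|\le 2\sqrt{d_{i0}}$, $\sqrt{L}\|\BB\bh_i\|_\infty\le 4\sqrt{d_{i0}}\mu$ and the same for $\bz+\bw$; (ii) since $\bz\in\KF$, $F(\bz)=\|\A(\BX)-\by\|^2\le \tF(\bz)\le \frac{\eps^2 d_0^2}{3s\kappa^2}+\|\be\|^2$, which combined with Condition~\ref{cond:robust} is of order $d_0^2+\|\be\|^2$; (iii) $G_0'(z)=2(z-1)_+$ is globally $2$-Lipschitz.

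For the $F$-piece, set $\Delta\BX:=\CH(\bh+\bu,\bx+\bv)-\CH(\bh,\bx)$, whose $i$-th block equals $\bu_i\bx_i^*+\bh_i\bv_i^*+\bu_i\bv_i^*$. I write
\[
\nabla F_{\bh_i}(\bz+\bw)-\nabla F_{\bh_i}(\bz)
=\A_i^*(\A(\Delta\BX))(\bx_i+\bv_i)+\A_i^*(\A(\BX)-\by)\bv_i,
\]
and analogously for $\bx_i$. Bounding $\|\A_i^*\|\le\|\A\|$, using $\|\bx_i+\bv_i\|\le 2\sqrt{d_{i0}}$ and $\|\Delta\BX\|_F\lesssim \sqrt{d_0}\|\bw\|$ (via $\|\bh_i\|,\|\bx_i+\bv_i\|\le 2\sqrt{d_{i0}}$ and a triangle inequality on the three terms), the first piece contributes $\mathcal{O}(\|\A\|^2 d_0)\|\bw\|$ after summing over $i$. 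For the second piece, $\|\A(\BX)-\by\|\le\sqrt{F(\bz)}\lesssim d_0+\|\be\|$, which is absorbed into the same $\|\A\|^2 d_0$ term once one notes $\|\be\|^2$ scales like $\sigma^2 d_0^2$ (sub-exponential tail of $\chi^2_{2L}$) so $\rho\gtrsim d_0^2(1+\sigma^2)$ dominates. Matching constants gives the $10\|\A\|^2 d_0$ contribution.

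For the $G$-piece, $\nabla G_{\bh_i}$ splits into a ``norm term'' proportional to $G_0'(\|\bh_i\|^2/2d_i)\,\bh_i$ and an ``incoherence term'' $\frac{L}{4\mu^2}\sum_{l}G_0'(L|\bb_l^*\bh_i|^2/8d_i\mu^2)\bb_l\bb_l^*\bh_i$, while $\nabla G_{\bx_i}$ contains only the norm term. For each, use the product rule: the Lipschitz constant of a map of the form $\phi(\bh)\bh$ with $\phi$ Lipschitz and $\|\bh\|$ bounded is controlled by $\|\phi\|_\infty+\|\bh\|\cdot\mathrm{Lip}(\phi)$. The factor $\rho/(2d_i)$ and $2$-Lipschitzness of $G_0'$ together with $\|\bh_i\|\le 2\sqrt{d_{i0}}$ give each norm-term Lipschitz constant $\lesssim \rho/d_{i0}$, summing to the $10\rho/\min d_{i0}$ piece. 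The main delicate step is the incoherence term: one uses $\sum_l\bb_l\bb_l^*=\I_K$ to control the $\sum_l\bb_l\bb_l^*\bh_i$ factor, $\|\bb_l\|^2=K/L$, and $2$-Lipschitzness of each $G_0'(L|\bb_l^*\bh_i|^2/8d_i\mu^2)$ in $\bh_i$ with constant $\lesssim L/(d_i\mu^2)$; assembling these over $l$ yields the stated $\frac{4\rho L}{\mu^2 \min d_{i0}}$ contribution.

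The main obstacle is this last step: one must avoid the naive bound that would pick up an extra factor of $L$ or $K$ from summing $L$ incoherence terms. The trick is to first pull $\bb_l\bb_l^*$ out and use $\|\sum_l\bb_l\bb_l^*\|=1$ inside the ``constant-coefficient'' part, and only invoke the $L/\mu^2$ factor where $G_0'$ itself is being differentiated; the fact that $G_0'(\cdot)=0$ unless its argument exceeds $1$ (which on $\Kmu$ happens only when $L|\bb_l^*\bh_i|^2/(8d_i\mu^2)$ is already near $1$) keeps $\|G_0'\|_\infty=\mathcal{O}(1)$ on the relevant region. Finally, the simplified form follows by plugging in $\|\A\|^2\lesssim s(N+\log s+\gamma)\log L$ from Lemma~\ref{lem:A-UPBD}, $\|\be\|^2\lesssim \sigma^2 d_0^2$, $\rho\asymp d_0^2(1+\sigma^2)$, and $L\gtrsim (\mu^2+\sigma^2)s(K+N)\log^2 L$ so that the $L/\mu^2$ term merges into $d_0 s\kappa(1+\sigma^2)(K+N)\log^2 L$.
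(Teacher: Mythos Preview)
Your proposal follows essentially the same structure as the paper's proof: invoke Lemma~\ref{lem:betamu} to place both points in $\Kd\cap\Kmu$, split $\nabla\tF=\nabla F+\nabla G$, handle $\nabla F$ by algebraic expansion and the operator-norm bound on $\A$, and handle $\nabla G$ blockwise via the $2$-Lipschitzness of $G_0'$ together with the $\Kd\cap\Kmu$ bounds (the paper simply cites Lemma~5.19 of~\cite{LLSW16} for this last part, which is exactly the computation you sketch). The one place your argument diverges is the residual term in $\nabla F$: you bound $\|\A_i^*(\A(\BX)-\by)\bv_i\|$ via $\|\A\|\sqrt{F(\bz)}\|\bv_i\|$ and then try to absorb the $\|\be\|$-piece, whereas the paper splits $\A(\BX)-\by=\A(\BX-\BX_0)-\be$ and uses directly the conditioned events $\|\BX-\BX_0\|_F\le\eps d_0$ and $\|\A^*(\be)\|\le\eps d_0$ from~\eqref{eq:robust}, which yields the clean $5d_0\|\A\|^2(\|\bu\|+\|\bv\|)$ without any $\sigma$-dependent residue; your absorption step (``$\rho\gtrsim d_0^2(1+\sigma^2)$ dominates'') is correct at the $\mathcal{O}(\cdot)$ level but does not recover the stated constant $10\|\A\|^2 d_0$ for the $F$-part on its own.
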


\begin{proof}
By \prettyref{lem:betamu}, we know that both $\vct{z}=(\vct{h}, \vct{x})$ and $\vct{z}+\vct{w}=(\vct{h}+\vct{u}, \vct{x}+\vct{v}) \in \Kint$. 
Note that 
\[
\nabla \tF = (\nabla \tF_{\bh}, \nabla \tF_{\bx}) = (\nabla F_{\bh} + \nabla G_{\bh}, \nabla F_{\bx} + \nabla G_{\bx}),
\]
where~\eqref{eq:WFh},~\eqref{eq:WFx},~\eqref{eq:WGh} and~\eqref{eq:WGx} give $\nabla F_{\bh},\nabla F_{\bx},\nabla G_{\bh}$ and $\nabla G_{\bx}$.
It suffices to find out the Lipschitz constants for all of those four functions.
\paragraph{Step 1:} We first estimate the Lipschitz constant for $\nabla F_{\bh}$ and the result can be applied to $\nabla F_{\bx}$ due to symmetry.
\begin{align*}
\nabla F_{\bh}(\bz + \bw) - \nabla F_{\bh}(\bz) 
&= \A^*\A( \CH(\bh+\bu,\bx +\bv)) (\bx + \bv) - \left[ \A^*\A(\CH(\bh,\bx))\bx + \A^*(\by) \bv\right] \\
& = \A^*( \A(\CH(\bh+\bu, \bx+ \bv) - \CH(\bh,\bx)) )(\bx + \bv) \\
& \quad + \A^*\A( \CH(\bh,\bx) - \CH(\bh_0,\bx_0) )\bv - \A^*(\be) \bv \\
& = \A^*( \A(   \CH(\bh+\bu,\bv) + \CH(\bu, \bx) ) )(\bx + \bv) \\
& \quad + \A^*\A( \CH(\bh,\bx) - \CH(\bh_0,\bx_0) )\bv - \A^*(\be) \bv.
\end{align*}
Note that $\|\CH(\bh, \bx)\|_F\leq \sqrt{\sum_{i=1}^s \|\bh_i\|^2\|\bx_i\|^2} \leq \|\bh\|\|\bx\|$ and $\bz, \bz+\bw \in \Kd$ directly implies 
\begin{equation*}
\|\CH(\bu,\bx) + \CH(\bh+\bu,\bv)\|_F \leq \|\vct{u}\| \|\vct{x}\| + \|\vct{h}+\vct{u}\|\|\vct{v}\| \leq 2\sqrt{d_0} (\|\bu\| + \|\bv\|)
\end{equation*}
where $\|\bh + \bu\| \leq 2\sqrt{d_0}.$
Moreover,~\eqref{eq:rip} implies
\begin{equation*}
\|\CH(\bh,\bx) - \CH(\bh_0,\bx_0)\|_F \leq \epsilon d_0
\end{equation*}
since $\bz\in\Kint.$
Combined with $\|\A^*(\be)\| \leq \eps d_0$ in~\eqref{eq:robust} and $\|\bx + \bv\| \leq 2\sqrt{d_0}$, we have 
\begin{eqnarray}
\|\nabla F_{\bh}(\bz + \bw) - \nabla F_{\bh}(\bz) \| 
& \leq & 4d_0 \|\A\|^2(\|\bu\| + \|\bv\|) + \eps d_0 \|\A\|^2 \|\bv\| + \eps d_0 \|\bv\|  \nonumber \\
& \leq & 5d_0 \|\A\|^2 ( \|\bu\| + \|\bv\|). 
\label{eq:LipFh}
\end{eqnarray}
Due to the symmetry between $\nabla F_{\bh}$ and $\nabla F_{\bx}$, we have,
\begin{equation}
\label{eq:LipFx}
\|\nabla F_{\bx}(\bz + \bw) - \nabla F_{\bx}(\bz) \| \leq 5d_0\|\A\|^2 ( \|\bu\| + \|\bv\|).\end{equation}
In other words, 
\begin{equation*}
\|  \nabla F(\bz + \bw) - \nabla F(\bz) \| \leq 5\sqrt{2}d_0 \|\A\|^2(\|\bu\| + \|\bv\|) \leq 10d_0\|\A\|^2\|\bw\|
\end{equation*}
where $\|\bu\| + \|\bv\| \leq \sqrt{2}\|\bw\|.$
 
\paragraph{Step 2:} We estimate the upper bound of $\|\nabla G_{\bx_i}(\bz_i + \bw_i) - \nabla G_{\bx_i}(\bz_i)\|$. Implied by Lemma~5.19 in~\cite{LLSW16}, 
we have
\begin{align}
\| \nabla G_{\bx_i}(\bz_i + \bw_i) - \nabla G_{\bx_i}(\bz_i) \| \leq \frac{5d_{i0} \rho}{d_i^2} \|\bv_i\|.   \label{eq:LipGx}
\end{align}

\paragraph{Step 3:} We estimate the upper bound of $\|\nabla G_{\bh_i}(\bz + \bw) - \nabla G_{\bh_i}(\bz)\|$. Denote 
\begin{align*}
\nabla G_{\bh_i}(\bz + \bw) - \nabla G_{\bh_i}(\bz) &= \underbrace{\frac{\rho}{2d_i}\left[G'_0\left(\frac{\|\bh_i + \bu_i\|^2}{2d_i}\right) (\bh_i + \bu_i) - G'_0\left(\frac{\|\bh_i\|^2}{2d_i}\right) \bh_i\right] }_{\vct{j}_1}
\\
& \underbrace{+ \frac{\rho L}{8d_i\mu^2 }\sum_{l=1}^L \left[G'_0\left(\frac{L|\bb_l^*(\bh_i + \bu_i)|^2}{8d_i\mu^2}\right) \bb_l^*(\bh_i + \bu_i) - G'_0\left(\frac{L|\bb_l^*\bh_i|^2}{8d_i\mu^2}\right) \bb_l^*\bh_i \right]\bb_l}_{\vct{j}_2}.
\end{align*}
Following the same estimation of $\vct{j}_1$ and $\vct{j}_2$ in Lemma 5.19 of~\cite{LLSW16}, we have
\begin{equation}
\label{eq:LipG}
\|\vct{j}_1\| \leq \frac{5d_{i0} \rho}{d_i^2} \|\bu_i\|, \quad
\|\vct{j}_2\| 
\leq \frac{3\rho Ld_{i0}}{2d_i^2\mu^2}\|\bu_i\|.
\end{equation}
Therefore, combining~\prettyref{eq:LipGx} and \prettyref{eq:LipG} gives
\begin{align*}
\|\nabla G(\bz + \bw) - \nabla G(\bz)\| & = \sqrt{\sum_{i=1}^s \left(\|  \nabla G_{\bh_i}(\bz + \bw) - \nabla G_{\bh_i}(\bz) \|^2 + \|  \nabla G_{\bx_i}(\bz + \bw) - \nabla G_{\bx_i}(\bz) \|^2\right)} \\
& \leq \max\left\{\frac{5d_{i0} \rho}{d_i^2} +  \frac{3\rho Ld_{i0}}{2d_i^2\mu^2}\right\} \sqrt{\sum_{i=1}^s\|\bu_i\|^2} 
+ \max\left\{\frac{5d_{i0} \rho}{d_i^2}\right\} \sqrt{\sum_{i=1}^s\|\bv_i\|^2} \\
& \leq \max\left\{\frac{5d_{i0} \rho}{d_i^2} +  \frac{3\rho Ld_{i0}}{2d_i^2\mu^2}\right\} \|\bu\|
+ \max\left\{\frac{5d_{i0} \rho}{d_i^2}\right\} \|\bv\| \\
& \leq  \frac{2\rho}{\min d_{i0}} \left( 5 + \frac{2L}{\mu^2} \right)\|\bw\|. 
\end{align*}
In summary, the Lipschitz constant $C_L$ of $\tF(\bz)$ has an upper bound as follows:
\begin{align*}
\|\nabla \tF(\bz + \bw) - \nabla \tF(\bz)\| & \leq  \|\nabla F(\bz + \bw) - \nabla F(\bz)\| + \|\nabla G(\bz + \bw) - \nabla G(\bz)\| \\
& \leq \left(10\|\A\|^2d_0 + \frac{2\rho}{\min d_{i0}} \left( 5 + \frac{2L}{\mu^2} \right)\right) \|\bw\|.
\end{align*}
\end{proof}


\subsection{Robustness condition and spectral initialization}
\label{s:init}
In this section, we will prove the robustness condition~\eqref{eq:robust} and also Theorem~\ref{thm:init}. To prove~\eqref{eq:robust}, it suffices to show the following lemma, which is a more general version of~\eqref{eq:robust}.
\begin{lemma}\label{lem:denoise}
Consider a sequence of Gaussian independent random variable $\bc = (c_1, \cdots, c_L)\in\CC^L$ where $c_l\sim \CN(0, \frac{\lambda_i^2}{L})$ with $\lambda_i \leq \lambda$. Moreover, we assume $\A_i$ in~\eqref{def:Ai} is independent of $\bc$. Then there holds
\begin{equation*}
\|\A^*(\bc) = \|\max_{1\leq i\leq s}\|\A_i^*(\bc )\| \leq  \xi
\end{equation*}
with probability at least $1 - L^{-\gamma}$
if $L \geq C_{\gamma + \log(s)}( \frac{\lambda}{\xi} +\frac{\lambda^2}{\xi^2} )\max\{ K,N \}\log L/\xi^2.$
\end{lemma}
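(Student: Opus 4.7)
The plan is to bound $\|\A_i^*(\bc)\|$ for each fixed $i$ via a matrix Bernstein concentration inequality and then take a union bound over $i\in\{1,\dots,s\}$. This mirrors the scalar noise-robustness estimates in~\cite{LLSW16,LS17b}, but now tracks $s$ operators in parallel; the $\log s$ factor in $C_{\gamma+\log s}$ will emerge precisely from that final union bound.

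First I would fix $i$ and write
\[
\A_i^*(\bc)=\sum_{l=1}^L c_l\,\bb_l\ba_{il}^* \;=:\; \sum_{l=1}^L \BY_l,
\]
a sum of independent, mean-zero random matrices (using independence of $\bc$ from $\BA_i$ and $\E[\ba_{il}]=\bzero$). Using $\sum_l\bb_l\bb_l^*=\I_K$, $\|\bb_l\|^2=K/L$, $\E[|c_l|^2]\leq\lambda^2/L$, and $\E[\ba_{il}\ba_{il}^*]=\I_N$, a direct computation yields
\[
\Bigl\|\sum_l \E[\BY_l\BY_l^*]\Bigr\|\leq \frac{\lambda^2 N}{L},\qquad \Bigl\|\sum_l \E[\BY_l^*\BY_l]\Bigr\|\leq \frac{\lambda^2 K}{L},
\]
so the variance proxy is $\sigma^2=\lambda^2\max\{K,N\}/L$. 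Each norm $\|\BY_l\|=|c_l|\,\|\bb_l\|\,\|\ba_{il}\|$ is a product of a Gaussian and a chi random variable, hence sub-exponential; after truncating $|c_l|$ and $\|\ba_{il}\|$ at a few $\sqrt{\log L}$ standard deviations (the excluded event has total probability $O(L^{-\gamma})$ by standard Gaussian tails), every $\|\BY_l\|$ is bounded by an $R$ of order $\lambda\sqrt{\max\{K,N\}\log L}/\sqrt{L}$.

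Next I would apply the matrix Bernstein inequality to the truncated summands to obtain, for each fixed $i$,
\[
\mathbb{P}\bigl(\|\A_i^*(\bc)\|\geq \xi\bigr)\leq 2(K+N)\exp\!\left(-c_0\min\!\left\{\xi^2/\sigma^2,\;\xi/R\right\}\right),
\]
and then take a union bound over $i=1,\dots,s$, which multiplies the prefactor by $s$ (i.e.\ adds $\log s$ inside the exponent). Requiring this failure probability to be at most $L^{-\gamma}$ and solving for $L$ in terms of $\sigma^2$ and $R$ reproduces exactly the claimed sample complexity: the Gaussian regime $\xi^2/\sigma^2$ yields the $\lambda^2/\xi^2$ contribution, while the sub-exponential regime $\xi/R$ yields the $\lambda/\xi$ contribution, all multiplied by $\max\{K,N\}\log L$.

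The main obstacle is the unboundedness of the summands $\BY_l$: because both $c_l$ and $\ba_{il}$ are Gaussian, the product is not almost-surely bounded, so the classical bounded matrix Bernstein does not apply as stated. I would resolve this either by the explicit truncation argument sketched above, or by invoking a version of Bernstein formulated directly for sub-exponential matrix summands (e.g.\ Koltchinskii's inequality or Tropp's truncated Bernstein). Either route yields the same conclusion up to absolute constants and $\log$ factors, so the precise technical route is a matter of presentation rather than substance; the remaining steps are routine bookkeeping of the moment and tail estimates.
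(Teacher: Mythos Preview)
Your proposal is correct and matches the paper's argument essentially line for line: the paper also fixes $i$, writes $\A_i^*(\bc)=\sum_l c_l\bb_l\ba_{il}^*$, computes the same variance proxy $\lambda^2\max\{K,N\}/L$, and then applies Koltchinskii's $\psi_1$ matrix Bernstein inequality (Theorem~\ref{thm:bern1}) rather than truncating by hand, bounding $\|c_l\bb_l\ba_{il}^*\|_{\psi_1}\leq \lambda\sqrt{KN}/L$ via~\eqref{lemma:psi} and then taking the union bound over $i$. Your ``second route'' is thus exactly what the paper does; the only cosmetic discrepancy is that your stated truncated bound $R\sim \lambda\sqrt{\max\{K,N\}\log L}/\sqrt{L}$ is looser than necessary (the product $|c_l|\,\|\bb_l\|\,\|\ba_{il}\|$ is of order $\lambda\sqrt{KN}/L$ up to $\log$ factors), but this does not affect the final sample complexity.
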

\begin{proof}
It suffices to show that $\max_{1\leq i\leq s}\|\A_i^*(\bc)\| \leq \xi$. For each fixed $i:1\leq i\leq s$, 
\begin{equation*}
\A_i^*(\bc) = \sum_{l=1}^L c_l\bb_l\ba_{il}^*
\end{equation*}
The key is to apply the matrix Bernstein inequality~\eqref{thm:bern} and we need to estimate $\|\CZ_l\|_{\psi_1}$, and the variance of $\sum_{l=1}^L \CZ_l.$
For each $l$, $\| c_l \bb_l\ba_{il}^*\|_{\psi_1} \leq \frac{\lambda \sqrt{KN}}{L}$ follows from~\eqref{lemma:psi}. Moreover, the variance of $\A_i^*(\bc)$ is bounded by $\frac{\lambda^2 \max\{K,N\}}{L}$ since
\begin{align*}
\E [ \A_i^*(\bc) (\A_i^*(\bc) )^* ] & = \sum_{l=1}^L \E(|c_l|^2 \|\ba_{il}\|^2)\bb_l\bb_l^* = \frac{N}{L} \sum_{l=1}^L \lambda_l^2\bb_l\bb_l^* \preceq \frac{\lambda^2 N}{L}, \\
\E [ (\A_i^*(\bc) )^* (\A_i^*(\bc))  ] & = \sum_{l=1}^L \|\bb_l\|^2 \E(|c_l|^2 \ba_{il}\ba_{il}^*) = \frac{K}{L^2}  \sum_{l=1}^L \lambda_i^2 \I_N \preceq \frac{\lambda^2 K}{L}.
\end{align*}
Letting $t = \gamma\log L$ and applying~\eqref{thm:bern} leads to
\begin{equation*}
\|\A_i^*(\bc)\| \leq C_0\max\left\{ \frac{\lambda\sqrt{KN}\log^2 L}{L},\sqrt{\frac{C_{\gamma}\lambda^2\max\{K,N\}\log L}{L}}\right\} \leq \xi.
\end{equation*}
Therefore, by taking the union bound over $1\leq i\leq s$, 
\begin{equation*}
\|\A_i^*(\bc)\| \leq \xi
\end{equation*}
with probability at least $1 - L^{-\gamma}$ if $L\geq C_{\gamma+\log(s)} (\frac{\lambda}{\xi} +\frac{\lambda^2}{\xi^2} )\max\{K,N\}\log^2L$.
\end{proof}

The robustness condition is an immediate result of Lemma~\ref{lem:denoise} by setting $\xi = \frac{\eps d_0}{10\sqrt{2}s\kappa}$ and $\lambda = \sigma d_0.$
\begin{corollary}\label{cor:Ae}{\bf [Robustness Condition]}
For $\be \sim \CN(\bzero, \frac{\sigma^2d_0^2}{L}\I_L) $
\begin{equation*}
\|\A_i^*(\be)\| \leq \frac{\eps d_0}{10\sqrt{2} s\kappa}, \quad \forall 1\leq i\leq s
\end{equation*}
with  probability at least $1 - L^{-\gamma}$ if $L \geq C_{\gamma}(\frac{ s^2\kappa^2 \sigma^2}{\eps^2} + \frac{s\kappa\sigma }{\eps})\max\{K, N\} \log L$.
\end{corollary}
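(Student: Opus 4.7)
The plan is to obtain Corollary 6.11 as a direct specialization of Lemma 6.10 (the preceding general denoising lemma). The noise vector $\be \sim \CN(\bzero, \frac{\sigma^2 d_0^2}{L}\I_L)$ is a sequence of independent complex Gaussian random variables with a common variance $\frac{\sigma^2 d_0^2}{L}$, so it exactly fits the template $c_l \sim \CN(0, \lambda_l^2/L)$ of Lemma 6.10 with $\lambda_l = \sigma d_0$, hence $\lambda = \sigma d_0$. Moreover, $\be$ is drawn independently of the encoding matrices $\BA_i$ (equivalently, of the sensing operators $\A_i$), so the independence hypothesis of Lemma 6.10 is also met.

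With these identifications, I would simply choose the target accuracy $\xi := \frac{\eps d_0}{10\sqrt{2}\, s\kappa}$, which is precisely the bound the corollary asserts. Lemma 6.10 then yields
\begin{equation*}
\|\A^*(\be)\| = \max_{1\le i\le s}\|\A_i^*(\be)\| \le \xi = \frac{\eps d_0}{10\sqrt{2}\, s\kappa}
\end{equation*}
with probability at least $1 - L^{-\gamma}$, provided $L$ is at least a constant multiple of $\bigl(\tfrac{\lambda}{\xi} + \tfrac{\lambda^2}{\xi^2}\bigr)\max\{K,N\}\log L$.

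Plugging in $\lambda = \sigma d_0$ and $\xi = \tfrac{\eps d_0}{10\sqrt{2}\, s\kappa}$, the $d_0$ factors cancel and one obtains $\tfrac{\lambda}{\xi} \asymp \tfrac{s\kappa\sigma}{\eps}$ and $\tfrac{\lambda^2}{\xi^2} \asymp \tfrac{s^2\kappa^2\sigma^2}{\eps^2}$. Absorbing the absolute constants into $C_\gamma$ yields the stated sample-complexity condition
\begin{equation*}
L \ge C_\gamma\Bigl(\tfrac{s^2\kappa^2\sigma^2}{\eps^2} + \tfrac{s\kappa\sigma}{\eps}\Bigr)\max\{K,N\}\log L,
\end{equation*}
which completes the proof. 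There is essentially no obstacle here; the entire content of the corollary lies in the matrix Bernstein estimate already carried out in Lemma 6.10, and Corollary 6.11 is just the particular substitution needed to match the robustness condition (Condition 5.4) used throughout the convergence analysis.
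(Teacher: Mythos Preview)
Your proposal is correct and matches the paper's own argument exactly: the paper states that the corollary is an immediate consequence of the preceding denoising lemma (Lemma~\ref{lem:denoise}) upon setting $\lambda = \sigma d_0$ and $\xi = \frac{\eps d_0}{10\sqrt{2}\,s\kappa}$, which is precisely the substitution you carry out. There is nothing to add.
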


\begin{lemma}\label{lem:Ay-hx}
For $\be \sim \CN(\bzero, \frac{\sigma^2d_0^2}{L}\I_L) $, there holds
\begin{equation}\label{eq:Ay-hx}
\| \A_i^*(\by) - \bh_{i0}\bx_{i0}^* \| \leq \xi d_{i0}, \quad \forall 1\leq i\leq s
\end{equation}
with probability at least $1 - L^{-\gamma}$ if $L \geq C_{\gamma+ \log (s)}s\kappa^2 (\mu^2_h + \sigma^2) \max\{K,  N\} \log L /\xi^2.$ 
\end{lemma}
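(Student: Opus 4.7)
The plan is to split $\A_i^*(\by) - \bh_{i0}\bx_{i0}^*$ into three pieces and bound each by $\xi d_{i0}/3$ using matrix Bernstein. Since $\by = \sum_{j=1}^s \A_j(\bh_{j0}\bx_{j0}^*) + \be$, I would write
\begin{equation*}
\A_i^*(\by) - \bh_{i0}\bx_{i0}^* = \underbrace{\bigl[\A_i^*\A_i(\bh_{i0}\bx_{i0}^*) - \bh_{i0}\bx_{i0}^*\bigr]}_{\text{self-term}} + \underbrace{\sum_{j\neq i}\A_i^*\A_j(\bh_{j0}\bx_{j0}^*)}_{\text{cross-term}} + \A_i^*(\be),
\end{equation*}
and then take a union bound over $1\leq i\leq s$, which is responsible for the $\log s$ in the sample complexity.

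For the noise piece, Lemma~\ref{lem:denoise} applied with $\lambda = \sigma d_0$ and target $\xi d_{i0}/3$ directly yields the condition $L\gtrsim (d_0/d_{i0})^2\sigma^2\max\{K,N\}\log L/\xi^2$; since $d_0^2/d_{i0}^2 = \sum_j d_{j0}^2/d_{i0}^2 \leq s\kappa^2$, this supplies the $s\kappa^2\sigma^2$ contribution. For the self-term, I would write $\A_i^*\A_i(\bh_{i0}\bx_{i0}^*)-\bh_{i0}\bx_{i0}^* = \sum_{l=1}^L \bigl[(\bb_l^*\bh_{i0})(\bx_{i0}^*\ba_{il})\bb_l\ba_{il}^* - \E(\cdot)\bigr]$ as a sum of $L$ independent centered matrices. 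The incoherence bound $|\bb_l^*\bh_{i0}|\leq \mu_h\sqrt{d_{i0}/L}$ controls the $\psi_1$-norm of each summand, and a routine computation gives the variance proxy $O(\mu_h^2 d_{i0}^2\max\{K,N\}/L)$, so matrix Bernstein (as applied in the proof of Lemma~\ref{lem:denoise}) gives the desired bound when $L\gtrsim\mu_h^2\max\{K,N\}\log L/\xi^2$.

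The cross-term is the main obstacle, because the naive strategy of bounding each summand $\A_i^*\A_j(\bh_{j0}\bx_{j0}^*)$ and invoking the triangle inequality costs an extra factor of $s^2\kappa^2$ rather than the claimed $s\kappa^2$. Instead I would apply matrix Bernstein directly to the full sum
\begin{equation*}
\sum_{j\neq i}\A_i^*\A_j(\bh_{j0}\bx_{j0}^*) = \sum_{l=1}^L \bb_l\Bigl(\sum_{j\neq i}(\bb_l^*\bh_{j0})(\bx_{j0}^*\ba_{jl})\Bigr)\ba_{il}^*,
\end{equation*}
viewed as $L$ independent mean-zero terms indexed by $l$. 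The key observation is that the independence of $\{\ba_{jl}\}_{j=1}^s$ across $j$ annihilates the $j\neq j'$ off-diagonal terms in the variance, so the variance proxy collapses to $\sum_l \bigl(\sum_{j\neq i} |\bb_l^*\bh_{j0}|^2 d_{j0}\bigr) N\,\bb_l\bb_l^*$. Applying the incoherence bound inside gives $\sum_j |\bb_l^*\bh_{j0}|^2 d_{j0} \leq (\mu_h^2/L)\sum_j d_{j0}^2 = \mu_h^2 d_0^2/L$, so the total variance is $O(\mu_h^2 d_0^2 \max\{K,N\}/L)$. Matching $\sqrt{\mu_h^2 d_0^2\max\{K,N\}\log L/L} \leq \xi d_{i0}/3$ yields $L\gtrsim \mu_h^2 (d_0/d_{i0})^2\max\{K,N\}\log L/\xi^2 \leq \mu_h^2 s\kappa^2\max\{K,N\}\log L/\xi^2$, which combined with the noise and self-term bounds produces exactly the stated condition.
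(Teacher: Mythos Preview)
Your proposal is correct and lands on the same sample-complexity condition as the paper, but the organization differs slightly from the paper's proof. The paper splits into only two pieces: the self-term $\A_i^*\A_i(\bh_{i0}\bx_{i0}^*)-\bh_{i0}\bx_{i0}^*$ (handled exactly as you describe) and the remainder $\A_i^*(\bw_i)$ where $\bw_i := \sum_{j\neq i}\A_j(\bh_{j0}\bx_{j0}^*)+\be$. The key observation is that $\bw_i$ is \emph{independent of $\A_i$} and each entry $(\bw_i)_l$ is complex Gaussian with variance at most $(\mu_h^2+\sigma^2)d_0^2/L$, so Lemma~\ref{lem:denoise} applies directly to the combined cross-plus-noise term in one shot. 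Your three-way split works just as well; in fact your cross-term $\sum_l \bb_l c_l\ba_{il}^*$ with $c_l=\sum_{j\neq i}(\bb_l^*\bh_{j0})(\bx_{j0}^*\ba_{jl})$ is precisely $\A_i^*(\bc)$ for a Gaussian vector $\bc$ independent of $\A_i$, so you could invoke Lemma~\ref{lem:denoise} there too rather than redoing the matrix Bernstein calculation. The paper's two-piece grouping is a bit more economical, but both routes are equivalent.
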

\begin{remark}
The success of the initialization algorithm completely relies on the lemma above. As mentioned in Section~\ref{s:thm}, $\E(\A_i^*(\by)) = \bh_{i0}\bx_{i0}^*$ and Lemma~\ref{eq:Ay-hx} confirms that $\A_i^*(\by)$ is close to $\bh_{i0}\bx_{i0}^*$ in operator norm and hence the spectral method is able to give us a reliable initialization.
\end{remark}

\begin{proof}
Note that
\begin{eqnarray*}
\A_i^*(\by) & = & \A_i^*\A_i(\bh_{i0}\bx_{i0}^*) + \A_i^*(\bw_i)
\end{eqnarray*}
where
\begin{equation}\label{eq:AA}
\bw_i = \by - \A_i(\bh_{i0}\bx_{i0}^*) = \sum_{j\neq i} \A_j(\bh_{j0}\bx_{j0}^*) + \be
\end{equation}
is independent of $\A_i.$ The proof consists of two parts: 1. show that $\|\A_i^*\A_i(\bh_{i0}\bx_{i0}^*) - \bh_{i0}\bx_{i0}^*\| \leq \frac{\xi d_{i0}}{2}$; 2. prove that $\|\A_i^*(\bw_i)\| \leq \frac{\xi d_{i0}}{2}$.

\paragraph{Part I:}
Following from the definition of $\A_i$ and $\A_i^*$ in~\eqref{def:Ai}, 
\begin{equation*}
\A_i^*\A_i(\bh_{i0}\bx_{i0}^*) - \bh_{i0}\bx_{i0}^* = \sum_{l=1}^L \underbrace{\bb_l\bb_l^*\bh_{i0}\bx_{i0}^*(\ba_{il}\ba_{il}^* - \I_N)}_{\text{defined as } \CZ_l }.
\end{equation*}
where $\BB^*\BB = \I_K.$ The sub-exponential norm of $\CZ_l$ is bounded by
\begin{equation*}
\|\CZ_l\|_{\psi_1} \leq \max_{1\leq l\leq L}\|\bb_l\| |\bb_l^*\bh_{i0}| \| (\ba_{il}\ba_{il}^* - \I_N) \bx_{i0}\|_{\psi_1} \leq \frac{\mu\sqrt{KN}d_{i0}}{L}
\end{equation*}
where $\|\bb_l\| = \sqrt{\frac{K}{L}}$, $\max_l|\bb^*_l\bh_{i0}|^2 \leq \frac{\mu^2 d_{i0}}{L}$ and $\| (\ba_{il}\ba_{il}^* - \I_N) \bx_{i0}\|_{\psi_1} \leq \sqrt{Nd_{i0}}$ follows from~\eqref{lem:11JB}.

We proceed to estimate the variance of $\sum_{l=1}^L \CZ_l$ by using~\eqref{lem:9JB} and~\eqref{lem:12JB}:
\begin{eqnarray*}
\left\| \sum_{l=1}^L\E(\CZ_l\CZ_l^*)\right\| & = & \left\| \sum |\bb_l^*\bh_{i0}|^2 \bx_{i0}^*\E(\ba_{il} \ba_{il}^* - \I_N)^2\bx_{i0}\bb_l\bb_l^*\right\|  \leq \frac{\mu^2N d_{i0}^2}{L}, \\
\left\| \sum_{l=1}^L\E(\CZ_l^*\CZ_l)\right\| & = & \frac{K}{L}\left\| \sum_{l=1}^L |\bb_l^*\bh_{i0}|^2 \E\left[ (\ba_{il}\ba_{il}^* - \I_N)\bx_{i0}\bx_{i0}^*(\ba_{il}\ba_{il}^* - \I_N)\right] \right\|  \leq \frac{Kd_{i0}^2}{L}.
\end{eqnarray*}
Therefore, the variance of $\sum_{l=1}^L\CZ_l$ is bounded by $\frac{\max\{K, \mu^2_hN\}d_{i0}^2}{L}$.
By applying matrix Bernstein inequality~\eqref{thm:bern} and taking the union bound over all $i$, we prove that
\begin{equation*}
\|\A_i^*\A_i(\bh_{i0}\bx_{i0}^*) - \bh_{i0}\bx_{i0}^*\| \leq \frac{\xi d_{i0}}{2}, \quad \forall 1\leq i\leq s
\end{equation*}
holds with probability at least $1 -L^{-\gamma+1}$ if $L \geq C_{\gamma+\log(s)} \max\{K,\mu_h^2N\}\log L/\xi^2.$

\paragraph{Part II:} For each $1\leq l\leq L$, the $l$-th entry of $\bw_i$ in~\eqref{eq:AA}, i.e.,
$(\bw_i )_l = \sum_{j\neq i} \bb_{l}^*\bh_{j0}\bx_{j0}^*\ba_{jl} + e_l$,
is independent of $\bb^*_l\bh_{i0}\bx_{i0}^*\ba_{il}$ and obeys $\mathcal{C}\mathcal{N}(0, \frac{\sigma_{il}^2}{L})$. Here
\begin{eqnarray*}
\sigma_{il}^2 & = & L\E|(\bw_i)_l|^2 =  L\sum_{j\neq i} |\bb_{l}^*\bh_{j0}|^2 \| \bx_{j0} \|^2  + \sigma^2\|\BX_0\|_F^2 \\
& \leq & \mu_h^2 \sum_{j\neq i}\|\bh_{j0}\|^2 \|\bx_{j0}\|^2 + \sigma^2\|\BX_0\|_F^2  \leq (\mu_h^2 + \sigma^2) \|\BX_0\|_F^2.
\end{eqnarray*}
This gives $\max_{i,l} \sigma_{il}^2\leq (\mu^2_h + \sigma^2) \|\BX_0\|_F^2.$
Thanks to the  independence between $\bw_i$ and $\A_i$, applying Lemma~\ref{lem:denoise} results in
\begin{equation}\label{eq:AA2}
\|\A_i^*(\bw_i)\| \leq \frac{\xi d_{i0}}{2}
\end{equation}
with probability $1 - L^{-\gamma + 1}$ if
$L \geq C\max\left( \frac{(\mu_h^2 + \sigma^2) \|\BX_0\|_F^2 }{\xi^2d_{i0}^2},  \frac{\sqrt{\mu^2_h + \sigma^2}\|\BX_0\|_F }{\xi d_{i0}} \right)\max\{K,N\}\log L.$

\vskip0.5cm

Therefore, combining~\eqref{eq:AA} with~\eqref{eq:AA2}, we get
\begin{equation*}
\|\A_i^*(\by) - \bh_{i0}\bx_{i0}^*\| \leq \|\A_i^*\A_i(\bh_{i0}\bx_{i0}^*) - \bh_{i0}\bx_{i0}^*\|  + \|\A_i^*(\bw_i)\| \leq \xi d_{i0}
\end{equation*}
for all $1\leq i\leq s$ with probability at least $1 - L^{-\gamma+1}$ if
\begin{equation*}
L \geq C_{\gamma+\log (s)}(\mu_h^2 + \sigma^2)s \kappa^2  \max\{K,N\}\log L/\xi^2
\end{equation*}
where $\|\BX_0\|_F/d_{i0} \leq \sqrt{s}\kappa.$
\end{proof}

Before moving to the proof of Theorem~\ref{thm:init}, we  introduce a property about the projection onto a closed convex set. 
\begin{lemma}[Theorem 2.8 in~\cite{RM11AP}]\label{lem:KMC}
Let $Q := \{ \bw\in\CC^K | \sqrt{L}\|\BB\bw\|_{\infty} \leq 2\sqrt{d}\mu \}$ 
be a closed nonempty convex set. There holds
\begin{equation*}
\Real( \lag \bz - \PP_Q(\bz) , \bw - \PP_Q(\bz) \rag ) \leq 0, \quad \forall \, \bw \in Q, \bz\in \CC^K 
\end{equation*}
where $\PP_{Q}(\bz)$ is the projection of $\bz$ onto $Q$.
\end{lemma}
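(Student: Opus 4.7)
The plan is to establish this as the standard variational characterization of the metric projection onto a closed convex set, transferred to the complex setting (which explains the $\Real$ in the statement). First I would observe that since $Q$ is a nonempty closed convex subset of the finite-dimensional Hilbert space $\CC^K$ and $\bw \mapsto \|\bw-\bz\|^2$ is strictly convex and coercive, the minimizer $\PP_Q(\bz) = \arg\min_{\bw\in Q} \|\bw-\bz\|^2$ exists and is unique, so the object in the inequality is well defined.

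Next I would use a one-parameter perturbation that stays inside $Q$ by convexity. For any $\bw\in Q$ and $t\in[0,1]$, set
\begin{equation*}
\bw_t := (1-t)\PP_Q(\bz) + t\bw = \PP_Q(\bz) + t\big(\bw - \PP_Q(\bz)\big) \in Q.
\end{equation*}
By the minimizing property of $\PP_Q(\bz)$, the scalar function $\phi(t) := \|\bw_t - \bz\|^2$ satisfies $\phi(t) \geq \phi(0)$ on $[0,1]$. Expanding,
\begin{equation*}
\phi(t) = \|\PP_Q(\bz)-\bz\|^2 + 2t\,\Real\big(\lag \PP_Q(\bz)-\bz,\ \bw-\PP_Q(\bz)\rag\big) + t^2\|\bw-\PP_Q(\bz)\|^2.
\end{equation*}

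Finally, subtracting $\phi(0)$, dividing by $t>0$, and letting $t\to 0^+$ eliminates the quadratic term and yields
\begin{equation*}
2\,\Real\big(\lag \PP_Q(\bz)-\bz,\ \bw-\PP_Q(\bz)\rag\big) \geq 0,
\end{equation*}
which is exactly the claimed inequality $\Real\big(\lag \bz-\PP_Q(\bz),\ \bw-\PP_Q(\bz)\rag\big) \leq 0$. The only mildly subtle point (not really an obstacle) is carrying the argument out with the real part of the Hermitian inner product: one must differentiate in the real parameter $t$ so that cross terms $\lag \PP_Q(\bz)-\bz,\ \bw-\PP_Q(\bz)\rag$ and its conjugate combine into twice the real part. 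No structural property of the particular set $Q=\{\bw:\sqrt{L}\|\BB\bw\|_\infty\le 2\sqrt{d}\mu\}$ beyond closedness, convexity, and nonemptiness is used, so the same proof gives the statement for any closed convex subset of a complex Hilbert space.
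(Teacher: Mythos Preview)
Your proof is correct: this is exactly the standard variational argument for the projection onto a closed convex set, adapted to the complex inner product via the real part. The paper does not supply its own proof of this lemma; it simply cites it as Theorem~2.8 in~\cite{RM11AP} and then uses the consequence~\eqref{eq:nonexp}. Your write-up therefore goes beyond what the paper provides, and the observation that nothing about the specific form of $Q$ is used (only that it is nonempty, closed, and convex) is accurate.
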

With this lemma, we can easily see 
\begin{equation}\label{eq:nonexp}
\|\bz - \bw\|^2 = \|\bz - \PP_Q(\bz)\|^2 + \|\PP_Q(\bz) - \bw\|^2 + 2\Real(\lag \bz - \PP_Q(\bz), \PP_Q(\bz) - \bw \rag) \geq \|\PP_Q(\bz) - \bw\|^2
\end{equation}
for all $\bz\in\CC^K$ and $\bw\in Q$. It means that projection onto nonempty closed convex set is non-expansive. 
Now we present the proof of Theorem~\ref{thm:init}.

\begin{proof}[\bf Proof of Theorem~\ref{thm:init}]

By choosing $L \geq C_{\gamma+\log (s)}(\mu_h^2 + \sigma^2)s^2 \kappa^4  \max\{K,N\}\log L/\eps^2$, we have 
\begin{equation}\label{eq:Ay-hx2}
\|\A_i^*(\by) - \bh_{i0}\bx_{i0}^*\| \leq \xi d_{i0}, \quad \forall 1\leq i\leq s
\end{equation}
where $\xi = \frac{\eps}{10\sqrt{2s}\kappa}$.

By applying the triangle inequality to~\eqref{eq:Ay-hx2}, it is easy to see that
\begin{equation}\label{eq:hd}
(1 - \xi)d_{i0} \leq d_i \leq (1 + \xi)d_{i0}, \quad |d_{i} - d_{i0}| \leq \xi d_{i0} \leq \frac{\eps d_{i0}}{10\sqrt{2s}\kappa} < \frac{d_{i0}}{10}, 
\end{equation}
which gives $\frac{9}{10}d_{i0} \leq d_i \leq \frac{11}{10}d_{i0}$ where $d_i = \|\A_i^*(\by)\|$ according to Algorithm~\ref{Initial}. 

\paragraph{Part I: Proof of $(\bu^{(0)},\bv^{(0)})\in \frac{1}{\sqrt{3}}\Kd \cap \frac{1}{\sqrt{3}}\Kmu$}  Note that $\bv_i^{(0)} = \sqrt{d_i}\|\hat{\bx}_{i0}\| = \sqrt{d_i}$ where $\hat{\bx}_{i0}$ is the leading right singular vector of $\A_i^*(\by).$ Therefore, 
\begin{equation*}
\| \bv_i^{(0)} \| = \sqrt{d_i} \|\hbx_{i0}\| =  \sqrt{d_i} \leq \sqrt{(1 + \xi)d_{i0}} \leq \frac{2}{\sqrt{3}}\sqrt{d_{i0}}, \quad \forall 1\leq i\leq s
\end{equation*}
which implies $\{\bv_i^{(0)}\}_{i=1}^s \in \frac{1}{\sqrt{3}}\Kd.$ 

Now we will prove that $\bu_i^{(0)}\in \frac{1}{\sqrt{3}}\Kd\cap\frac{1}{\sqrt{3}}\Kmu$ by Lemma~\ref{lem:KMC}.
By Algorithm~\ref{Initial}, $\bu_i^{(0)}$ is the minimizer to the function $f(\bz) = \frac{1}{2} \| \bz - \sqrt{d_i} \hbh_{i0} \|^2$ over $Q_i := \{ \bz | \sqrt{L}\|\BB\bz\|_{\infty} \leq 2\sqrt{d_i}\mu\}.$ Obviously, by definition, $\bu_i^{(0)}$ is the projection of $\sqrt{d_i} \hbh_{i0}$ onto $Q_i$. Note that $\bu_i^{(0)}\in Q_i$ implies $\sqrt{L}\|\BB\bu_i^{(0)}\|_{\infty} \leq 2\sqrt{d_i}\mu\leq 2\sqrt{(1+\xi)d_{i0}}\mu \leq \frac{4\sqrt{d_{i0}}\mu}{\sqrt{3}}$ and hence $\bu_i^{(0)}\in \frac{1}{\sqrt{3}} \Kmu.$

Moreover, due to~\eqref{eq:nonexp}, there holds 
\begin{equation}\label{eq:KMC}
\|\sqrt{d_i}\hbh_{i0} - \bw\|^2  \geq  \|\bu_i^{(0)} - \bw \|^2, \quad \forall \bw\in Q_i
\end{equation}
In particular, let $\bw = \bzero\in Q_i$ and immediately we have 
\begin{equation*}
\|\bu_i^{(0)}\|^2 \leq d_i \leq \frac{4}{3} \Longrightarrow \bu_i^{(0)}\in \frac{1}{\sqrt{3}}\Kmu.
\end{equation*}
In other words, $\{(\bu_i^{(0)}, \bv_i^{(0)})\}_{i=1}^s \in \frac{1}{\sqrt{3}}\Kd\cap \frac{1}{\sqrt{3}}\Kmu$. 

\paragraph{Part II: Proof of $(\bu^{(0)},\bv^{(0)})\in \MN_{\frac{2\eps}{5\sqrt{s}\kappa}}$}
We will show $\|\bu_i^{(0)}(\bv_i^{(0)})^* - \bh_{i0}\bx_{i0}^*\|_F \leq 4\xi d_{i0}$ for $1\leq i\leq s$ so that $\frac{\|\bu_i^{(0)}(\bv_i^{(0)})^* - \bh_{i0}\bx_{i0}^*\|_F}{d_{i0}} \leq \frac{2\eps}{5\sqrt{s}\kappa}$.

\vskip0.25cm

First note that $\sigma_j(\A_i^*(\by)) \leq \xi d_{i0}$ for all $j\geq 2$, which follows from Weyl's inequality~\cite{Stewart90} for singular values where $\sigma_j(\A_i^*(\by))$ denotes the $j$-th largest singular value of $\A_i^*(\by)$. Hence there holds
\begin{equation}\label{eq:dhx-hx}
\| d_i \hbh_{i0}\hbx_{i0}^* - \bh_{i0}\bx_{i0}^* \| \leq \|\A_i^*(\by) - d_i \hbh_{i0}\hbx_{i0}^* \| + \|\A_i^*(\by) - \bh_{i0}\bx_{i0}^* \| \leq 2\xi d_{i0}.
\end{equation}
On the other hand, for any $i$,
\begin{align*}
\left\| \left(\I_K - \frac{\bh_{i0}\bh_{i0}^*}{d_{i0}}\right)\hbh_{i0} \right\| 
&=   \left\| \left(\I_K - \frac{\bh_{i0}\bh_{i0}^*}{d_{i0}}\right)
\hbh_{i0}\hbx_{i0}^*\hbx_{i0}\hbh_{i0}^* \right\| \\
&=   \left\| \left(\I_K - \frac{\bh_{i0}\bh_{i0}^*}{d_{i0}}\right)\left[ 
\frac{1}{d_{i0}}(( \A_i^*(\by) - d_i \hbh_{i0}\hbx_{i0}^*) + \hbh_{i0}\hbx_{i0}^* - \frac{\bh_{i0}\bx_{i0}^*}{d_{i0}} \right] \hbx_{i0}\hbh_{i0}^* \right\| \\
&=  \frac{1}{d_{i0}} \| 
 \A_i^*(\by)  - \bh_{i0}\bx_{i0}^* \| + \left|\frac{d_i}{d_{i0}}-1\right| \leq 2\xi 
\end{align*} 
where $ (\I_K - \frac{\bh_{i0}\bh_{i0}^*}{d_{i0}}) \bh_{i0}\bx_{i0}^* = \bzero$ and $(\A_i^*(\by) - d_i \hbh_{i0}\hbx_{i0}^*)\hbx_{i0}\hbh_{i0}^* = \bzero$.
Therefore, we have
\begin{equation}\label{eq:h0-h}
\left\|  \hbh_{i0} -  \frac{\bh_{i0}^*\hbh_{i0}}{d_{i0}} \bh_{i0}  \right\| \leq 2\xi, 
\quad \|\sqrt{d_i} \hbh_{i0} - t_{i0} \bh_{i0} \| \leq 2\sqrt{d_i}\xi,
\end{equation}
where $t_{i0} = \frac{\sqrt{d_i}\bh_{i0}^*\hbh_{i0}}{d_{i0}}$ and $|t_{i0}| \leq \sqrt{d_i/d_{i0}} <\sqrt{2}$. 
If we  substitute $\bw$ by $t_{i0} \bh_{i0}\in Q_i$ into~\eqref{eq:KMC}, 
\begin{equation}\label{eq:h0-h-2}
\|\sqrt{d_i}\hbh_{i0} - t_{i0} \bh_{i0}\| \geq \| \bu_i^{(0)} - t_{i0} \bh_{i0}\|.
\end{equation}
where $t_{i0} \bh_{i0}\in Q_i$ follows from $\sqrt{L} |t_{i0}|\|\BB\bh_{i0}\|_{\infty} \leq |t_{i0}| \sqrt{d_{i0}}\mu_h \leq \sqrt{2d_{i0}}\mu$.

\vskip0.5cm
Now we are ready to estimate $\|\bu^{(0)}_i(\bv_i^{(0)})^* - \bh_{i0}\bx_{i0}^* \|_F$ as follows, 
\begin{align*}
\|\bu^{(0)}_i(\bv_i^{(0)})^* - \bh_{i0}\bx_{i0}^* \|_F
 & \leq 
\|\bu^{(0)}_i(\bv_i^{(0)})^* - t_{i0}\bh_{i0}(\bv_i^{(0)})^* \|_F + \|t_{i0}\bh_{i0}(\bv_i^{(0)})^* - \bh_{i0}\bx_{i0}^* \|_F \\
& \leq \underbrace{\|\bu_i^{(0)} - t_{i0}\bh_{i0}\| \|\bv_i^{(0)}\|}_{I_1} + 
\underbrace{\left\| \frac{d_i}{d_{i0}} \bh_{i0} \bh^*_{i0} \hbh_{i0}  \hbx_{i0}^* - \bh_{i0}\bx_{i0}^* \right\|_F}_{I_2}.
\end{align*}
Here $I_1\leq 2\xi d_i$ because $\|\bv_i^{(0)}\| = \sqrt{d_i}$ and $\|\bu_i^{(0)} - t_{i0}\bh_{i0}\| \leq 2\sqrt{d_i}\xi$ follows from~\eqref{eq:h0-h} and~\eqref{eq:h0-h-2}. For $I_2$, there holds
\begin{equation*}
I_2  = \left\| \frac{\bh_{i0}\bh_{i0}^*}{d_{i0}}
\left(d_i \hbh_{i0}  \hbx_{i0}^* - \bh_{i0}\bx_{i0}^*\right) \right\|_F \leq \|d_i \hbh_{i0}  \hbx_{i0}^* - \bh_{i0}\bx_{i0}^*\|_F \leq 2\sqrt{2}\xi d_{i0},
\end{equation*}
which follows from~\eqref{eq:dhx-hx}.
Therefore, 
\begin{align*}
\|\bu^{(0)}_i(\bv_i^{(0)})^* - \bh_{i0}\bx_{i0}^* \|_F
& \leq 2\xi d_i  + 2 \sqrt{2}\xi d_{i0} \leq 5\xi d_{i0}\leq \frac{2\eps d_{i0}}{5\sqrt{s}\kappa }.
\end{align*}
\end{proof}

\section*{Appendix}
\subsection*{Descent Lemma}
\begin{lemma}[Lemma 6.1 in~\cite{LLSW16}]\label{lem:DSL}
If $f(\bz, \bar{\bz})$ is a continuously differentiable real-valued function with two complex variables $\bz$ and $\bar{\bz}$, (for simplicity, we just denote $f(\bz, \bar{\bz})$ by $f(\bz)$ and keep in the mind that $f(\bz)$ only assumes real values) for $\bz := (\bh, \bx) \in \Keps\cap\KF$. 
Suppose that there exists a constant $C_L$ such that
\begin{equation*}
\|\nabla f(\bz + t \Delta \bz) - \nabla f(\bz)\| \leq C_L t\|\Delta\bz\|, \quad \forall 0\leq t\leq 1,
\end{equation*}
for all $\bz\in \Keps\cap\KF$ and $\Delta \bz$ such that  $\bz + t\Delta \bz \in \Keps\cap\KF$ and $0\leq t\leq 1$. Then
\begin{equation*}
f(\bz + \Delta  \bz) \leq f(\bz) + 2\Real( (\Delta \bz)^T \overline{\nabla} f(\bz)) + C_L\|\Delta \bz\|^2
\end{equation*}
where $\overline{\nabla} f(\bz) := \frac{\pa f(\bz, \bar{\bz})}{\pa \bz}$ is the complex conjugate of $\nabla f(\bz) = \frac{\pa f(\bz, \bar{\bz})}{\partial \bar{\bz}}$.
\end{lemma}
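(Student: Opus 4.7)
\medskip
\noindent\textbf{Proof plan.} The statement is the Wirtinger analogue of the classical descent lemma from smooth optimization, and I would adapt the standard one-dimensional calculus argument to the complex setting. The plan is to parametrize the segment $\bz(t) := \bz + t\Delta\bz$ for $t \in [0,1]$, reduce the claim to a single-variable identity via the fundamental theorem of calculus, and then estimate the remainder with Cauchy--Schwarz and the hypothesized Lipschitz bound. The whole-segment membership assumption $\bz(t) \in \Keps\cap\KF$ for all $t\in[0,1]$ is what makes the Lipschitz estimate applicable along the entire integration path; this is the only place where the restriction to $\Keps\cap\KF$ really enters.

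\medskip
\noindent First I would define $\varphi(t) := f(\bz(t))$ and compute $\varphi'(t)$ by the Wirtinger chain rule. Since $f$ is real-valued in both complex variables $\bz$ and $\bar{\bz}$, we have $\partial f/\partial \bz = \overline{\partial f/\partial \bar{\bz}} = \overline{\nabla f}$, so
\begin{equation*}
\varphi'(t) \;=\; \sum_i \Bigl( \tfrac{\partial f}{\partial z_i}(\bz(t))\,\Delta z_i + \tfrac{\partial f}{\partial \bar z_i}(\bz(t))\,\overline{\Delta z_i}\Bigr) \;=\; 2\,\Real\!\bigl( (\Delta\bz)^T \overline{\nabla f(\bz(t))}\bigr).
\end{equation*}
By the fundamental theorem of calculus applied to the continuously differentiable real function $\varphi$,
\begin{equation*}
f(\bz + \Delta\bz) - f(\bz) \;=\; \int_0^1 2\,\Real\!\bigl( (\Delta\bz)^T \overline{\nabla f(\bz(t))}\bigr)\,\diff t.
\end{equation*}

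\medskip
\noindent Next I would split the integrand by adding and subtracting $\nabla f(\bz)$:
\begin{equation*}
f(\bz+\Delta\bz) - f(\bz) \;=\; 2\,\Real\!\bigl((\Delta\bz)^T \overline{\nabla f(\bz)}\bigr) + \int_0^1 2\,\Real\!\bigl((\Delta\bz)^T \overline{\nabla f(\bz(t)) - \nabla f(\bz)}\bigr)\,\diff t.
\end{equation*}
The first term is exactly the linear term in the target inequality. For the remainder I would apply Cauchy--Schwarz pointwise in $t$ to obtain
\begin{equation*}
\bigl|\Real\!\bigl((\Delta\bz)^T \overline{\nabla f(\bz(t)) - \nabla f(\bz)}\bigr)\bigr| \;\leq\; \|\Delta\bz\|\cdot \|\nabla f(\bz(t)) - \nabla f(\bz)\| \;\leq\; C_L\, t\,\|\Delta\bz\|^2,
\end{equation*}
where the last step uses the Lipschitz hypothesis with $\bz(t) = \bz + t\Delta\bz$; this is valid because by assumption $\bz(t) \in \Keps\cap\KF$ for every $t\in[0,1]$. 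Integrating and using $\int_0^1 2t\,\diff t = 1$ yields the stated bound $C_L \|\Delta\bz\|^2$.

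\medskip
\noindent There is no serious obstacle here; the lemma is essentially the textbook descent lemma transplanted into Wirtinger calculus. The two points requiring care are (i) the bookkeeping of the conjugate relation $\partial f/\partial \bz = \overline{\partial f/\partial \bar\bz}$, which is what converts the complex directional derivative into $2\Real(\cdot)$, and (ii) using convexity of the path---guaranteed by the hypothesis that $\bz + t\Delta\bz$ remains in $\Keps\cap\KF$ for $t\in[0,1]$---so that the Lipschitz estimate on $\nabla f$ applies uniformly along the segment. In particular, no convexity of $\Keps\cap\KF$ as a set is needed; only the segment connecting $\bz$ and $\bz+\Delta\bz$ must lie in it.
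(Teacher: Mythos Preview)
Your proof is correct and is exactly the standard descent-lemma argument adapted to Wirtinger calculus. Note that the paper itself does not prove this lemma; it is stated in the Appendix as a citation to Lemma~6.1 of~\cite{LLSW16}, so there is no paper proof to compare against beyond confirming that your argument is the expected one.
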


\subsection*{Concentration inequality}
We define the matrix $\psi_1$-norm via
\begin{equation*}
\|\BZ\|_{\psi_1} := \inf_{u \geq 0} \{ \E[ \exp(\|\BZ\|/u)] \leq 2 \}.
\end{equation*}

\begin{theorem}\label{thm:bern1}{\bf~\cite{KolVal11}}
Consider a finite sequence of $\CZ_l$ of independent centered random matrices with dimension $M_1\times M_2$. Assume that $R : = \max_{1\leq l\leq L}\|\CZ_l\|_{\psi_1}$ and introduce the random matrix 
\begin{equation}\label{S}
\mathcal{S} = \sum_{l=1}^L \CZ_l.
\end{equation}
Compute the variance parameter
\begin{equation}\label{sigmasq}
\sigma_0^2 = \max\Big\{ \left\| \sum_{l=1}^L \E(\CZ_l\CZ_l^*)\right\|, \left\| \sum_{l=1}^L \E(\CZ_l^* \CZ_l)\Big\| \right\},
\end{equation}
then for all $t \geq 0$
\begin{equation}\label{thm:bern}
\|\mathcal{S}\| \leq C_0 \max\{ \sigma_0 \sqrt{t + \log(M_1 + M_2)}, R\log\left( \frac{\sqrt{L}R}{\sigma_0}\right)(t + \log(M_1 + M_2)) \}
\end{equation}
with probability at least $1 - e^{-t}$ where $C_0$ is an absolute constant.

\begin{lemma}[Lemma 10-13 in~\cite{RR12}, Lemma 12.4 in~\cite{LS17b}] \label{lem:multiple1}
Let $\bu\in\CC^n \sim \CN(\bzero, \I_n)$, then $\|\bu\|^2 \sim \frac{1}{2}\chi^2_{2n}$ and
\begin{equation}
\label{lem:8JB}
\| \|\bu\|^2 \|_{\psi_1} = \| \lag\bu, \bu\rag \|_{\psi_1}  \leq C n
\end{equation}
and
\begin{equation}
\label{lem:9JB}
\E (\bu\bu^* - \I_n)^2 = n\I_n.
\end{equation}
Let $\bq\in\CC^n$ be any deterministic vector, then the following properties hold

\begin{equation}
\label{lem:11JB}
\| (\bu\bu^* - \I)\bq\|_{\psi_1} \leq C\sqrt{n}\|\bq\|,
\end{equation}
\begin{equation}
\label{lem:12JB}
\E[ (\bu\bu^* - \I)\bq\bq^* (\bu\bu^* - \I)] = \|\bq\|^2 \I_n.
\end{equation}
Let $\bv\sim \CN(\bzero, \I_m)$  be a complex Gaussian random vector in $\CC^m$, independent of $\bu$, then 
\begin{equation}\label{lemma:psi}
\left\| \|\bu\| \cdot \|\bv\|\right\|_{\psi_1} \leq C\sqrt{mn}.
\end{equation}
\end{lemma}

\end{theorem}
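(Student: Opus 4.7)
\textbf{Proof proposal for Theorem~\ref{thm:bern1}.}

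The plan is to reduce the general rectangular case to the Hermitian case and then handle the unbounded sub-exponential tails by a careful truncation, after which the standard (bounded) matrix Bernstein inequality of Tropp can be applied. First I would invoke the Hermitian dilation $\CZ_l \mapsto \widetilde{\CZ}_l := \begin{bmatrix}\bzero & \CZ_l \\ \CZ_l^* & \bzero\end{bmatrix}\in\CC^{(M_1+M_2)\times(M_1+M_2)}$; since $\|\widetilde{\CZ}_l\|=\|\CZ_l\|$ and $\widetilde{\CZ}_l^{\,2}=\blkdiag(\CZ_l\CZ_l^*,\CZ_l^*\CZ_l)$, the parameters $R$ and $\sigma_0^2$ are preserved while $\|\mathcal{S}\|=\|\sum_l\widetilde{\CZ}_l\|$, so it suffices to prove the inequality for a sum of independent centered Hermitian matrices of size $M:=M_1+M_2$.

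Next I would perform a truncation at a level to be chosen,
\begin{equation*}
u := c_1 R\log\!\left(\frac{\sqrt{L}\,R}{\sigma_0}\right),
\end{equation*}
and write $\widetilde{\CZ}_l = \BY_l + \BW_l$, where $\BY_l := \widetilde{\CZ}_l \mathbf{1}_{\{\|\widetilde{\CZ}_l\|\leq u\}} - \E[\widetilde{\CZ}_l \mathbf{1}_{\{\|\widetilde{\CZ}_l\|\leq u\}}]$ is centered and bounded in norm by $2u$, and $\BW_l := \widetilde{\CZ}_l\mathbf{1}_{\{\|\widetilde{\CZ}_l\|>u\}} - \E[\widetilde{\CZ}_l\mathbf{1}_{\{\|\widetilde{\CZ}_l\|>u\}}]$ captures the tail. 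For $\sum_l\BY_l$ the standard Hermitian matrix Bernstein inequality (Tropp) yields
\begin{equation*}
\mathbb{P}\!\left(\Big\|\sum_{l=1}^L\BY_l\Big\|\geq s\right)\leq 2M\exp\!\left(-\frac{s^2/2}{v^2+2us/3}\right),
\end{equation*}
with $v^2 \leq \sigma_0^2$ (since truncation only shrinks second moments, plus a negligible correction from centering controlled by the $\psi_1$ bound). Solving this at the level $s = C_0\max\{\sigma_0\sqrt{t+\log M},\; u(t+\log M)\}$ gives the announced deviation bound with failure probability $\leq e^{-t}/2$.

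For the tail piece $\sum_l\BW_l$ I would argue that with overwhelming probability it vanishes outright: by Markov and the definition of $\|\cdot\|_{\psi_1}$, $\mathbb{P}(\|\widetilde{\CZ}_l\|>u)\leq 2\exp(-u/R)$, so
\begin{equation*}
\mathbb{P}\!\Big(\max_{1\leq l\leq L}\|\widetilde{\CZ}_l\|>u\Big)\leq 2L\exp(-u/R).
\end{equation*}
With $c_1$ chosen large enough this is $\leq \sigma_0^2/(LR^2)\cdot \text{poly} \leq e^{-t}/4$ whenever $t\lesssim \log(\sqrt L R/\sigma_0)$; the mean-correction $\E[\widetilde{\CZ}_l\mathbf{1}_{\{\|\widetilde{\CZ}_l\|>u\}}]$ is bounded in norm by $\int_u^\infty 2e^{-r/R}\,dr = 2Re^{-u/R}$, summable to something vastly smaller than the already-obtained bound. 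Combining the two events by a union bound produces~\eqref{thm:bern}.

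The main obstacle will be calibrating the truncation level $u$ so that the logarithmic factor $\log(\sqrt{L}R/\sigma_0)$ comes out exactly as stated rather than the naive $\log(LM)$. The delicate point is the interplay between the lower bound $\sigma_0^2\geq R^2/L$ (which can fail and must be handled separately in a trivial regime) and the requirement that $Le^{-u/R}$ be much smaller than the Bernstein failure probability $Me^{-t}$ uniformly in $t\geq 0$; these force $u$ to depend on $\sqrt{L}R/\sigma_0$ rather than $LR/\sigma_0$, which is the non-obvious content of~\cite{KolVal11}. Once this balancing is done, the two contributions combine into the single $C_0$ in~\eqref{thm:bern}.
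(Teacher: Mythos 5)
This statement is not proved in the paper at all: Theorem~\ref{thm:bern1} is quoted from~\cite{KolVal11}, and the embedded Lemma~\ref{lem:multiple1} is quoted from~\cite{RR12} and~\cite{LS17b}, so there is no in-paper argument to compare yours against. Judged on its own terms, your sketch follows the standard route (Hermitian dilation, truncation, bounded matrix Bernstein for the truncated part, sub-exponential tail bound for the remainder), which is indeed the spirit of the proofs in the cited literature, but it has one genuine gap that you yourself flag: with a truncation level $u = c_1 R\log(\sqrt{L}R/\sigma_0)$ chosen independently of $t$, the event bound $2L\exp(-u/R)$ is a fixed quantity and cannot be dominated by $e^{-t}/4$ uniformly over all $t\geq 0$, yet the theorem claims failure probability $e^{-t}$ for every $t\geq 0$. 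The standard repair is to let the truncation level depend on $t$ as well, e.g.\ $u \asymp R\bigl(t+\log(M_1+M_2)+\log(\sqrt{L}R/\sigma_0)\bigr)$, so that the tail event has probability at most a constant multiple of $e^{-t}$; this enlarged $u$ then enters the sub-Gaussian/sub-exponential crossover in the Bernstein bound for the bounded part, which is precisely how the second branch $R\log(\sqrt{L}R/\sigma_0)\,(t+\log(M_1+M_2))$ of~\eqref{thm:bern} arises. Your remark that ``truncation only shrinks second moments'' also needs a short argument after re-centering (the recentered truncated summands have variance bounded by the original one up to an additive term controlled by $R^2e^{-u/R}$), but that is routine.

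Two smaller points. First, the statement as posed also contains Lemma~\ref{lem:multiple1} (the $\psi_1$-norm and second-moment identities for complex Gaussian vectors), which your proposal does not address; these are elementary computations (e.g.\ \eqref{lem:9JB} and \eqref{lem:12JB} follow from direct expectation calculations for $\mathcal{C}\mathcal{N}(\bzero,\I_n)$, and \eqref{lem:8JB}, \eqref{lem:11JB}, \eqref{lemma:psi} from standard $\chi^2$/product-of-Gaussians tail estimates), but they would need to be stated to make the proof of the full statement complete. Second, your degenerate regime where $\sigma_0^2 < R^2/L$ deserves the separate treatment you allude to, since otherwise the logarithm $\log(\sqrt{L}R/\sigma_0)$ could be small or the calibration of constants could fail; handling it by absorbing everything into the second branch of the maximum is the usual device.
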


\section*{Acknowledgement}
S.Ling would like to thank Felix Krahmer and Dominik St\"{o}ger for the discussion about~\cite{SJK16}, and also thank Ju Sun for pointing out the connection between convolutional dictionary learning and this work.


\end{document}